\theoremstyle{plain}
\newtheorem{theorem}{Theorem}
\newtheorem{remark}{Remark}
\newtheorem{corollary}{Corollary}
\newtheorem{lemma}{Lemma}
\newtheorem{proposition}{Proposition}
\theoremstyle{definition}
\newtheorem{definition}{Definition}
\def\ie{\textit{i.e.}\xspace} 
\def\etc{\textit{etc.}\xspace}
\def\eg{\textit{e.g.}\xspace}
\def\aka{\textit{a.k.a}\xspace}
\def\etal{{et~al.}\xspace}
\def\Ebb{\mathbb{E}}
\def\Rbb{\mathbb{R}}
\def\Mcal{\mathcal{M}}
\def\Ncal{\mathcal{N}}
\def\Dcal{\mathcal{D}}
\def\DAE{\mathsf{DAE}}
\def\CM{\mathsf{CM}}
\def\mp{\mathrm{MP}}
\def\gcn{\mathsf{CARIBOU}} 
\def\sage{\textsf{PertGraph}}
\def\dpdgc{\textsf{DPDGC}}
\def\gap{\textsf{GAP}}
\def\mlpE{\textsf{MLP}}
\def\Clip{C_{\mathsf{L}}}
\def\photo{\texttt{Photo}}
\def\facebook{\texttt{Facebook}}
\def\computers{\texttt{Computers}}
\def\cora{\texttt{Cora}}
\def\pubmed{\texttt{PubMed}}
\definecolor{greyL}{RGB}{235,235,235}
\definecolor{colorbest}{named}{Purple4}
\definecolor{colorsecond}{named}{MediumPurple1}
\definecolor{colortext}{named}{white}
\newcommand{\greenup}{\rotatebox{90}{\color{green}\ding{225}}}
\newcommand{\reddown}{\textcolor{red}{$\blacktriangledown$}}
\newcommand{\tbspace}{\mathrel{\hspace{0.18cm}}}
\def\eqref#1{equation~\ref{#1}}
\def\1{\bm{1}}
\def\mA{{\bm{A}}}
\def\mD{{\bm{D}}}
\def\mE{{\bm{E}}}
\def\mG{{\bm{G}}}
\def\mI{{\bm{I}}}
\def\mN{{\bm{N}}}
\def\mV{{\bm{V}}}
\def\mW{{\bm{W}}}
\def\mX{{\bm{X}}}
\def\mY{{\bm{Y}}}
\def\mZ{{\bm{Z}}}
\DeclareMathAlphabet{\mathsfit}{\encodingdefault}{\sfdefault}{m}{sl}
\SetMathAlphabet{\mathsfit}{bold}{\encodingdefault}{\sfdefault}{bx}{n}
\def\gK{{\mathcal{K}}}
\def\gN{{\mathcal{N}}}
\newcommand{\E}{\mathbb{E}}
\newcommand{\R}{\mathbb{R}}
\def\chainS{\texttt{Chain-S}}
\def\chainM{\texttt{Chain-M}}
\def\chainL{\texttt{Chain-L}}
\def\chainX{\texttt{Chain-X}}
\newcommand{\yuz}[1]{{\color{magenta} [YuZ: #1]}}
\newcommand{\zl}[1]{{\color{purple} [ZL: #1]}}
\newcommand{\Stars}[2][fill=orange,draw=lightgray]{\begin{tikzpicture}[baseline=-0.35em,#1]
\foreach \X in {1,...,5}
{\pgfmathsetmacro{\xfill}{min(1,max(1+#2-\X,0))}
\path (\X*1.1em,0) 
node[star,draw,star point height=0.25em,minimum size=1em,inner sep=0pt,
path picture={\fill (path picture bounding box.south west) 
rectangle  ([xshift=\xfill*1em]path picture bounding box.north west);}]{};
}
\end{tikzpicture}}
\newtheorem{find}{Takeaway}
\scriptsize\color{gray},
\begin{document}
%


\title{Convergent Privacy Framework  for Multi-layer GNNs  through Contractive Message Passing}

 
\author{%
  \IEEEauthorblockN{%
    Yu Zheng\IEEEauthorrefmark{1},
    Chenang Li\IEEEauthorrefmark{1},
    Zhou Li\IEEEauthorrefmark{1}\textsuperscript{\textsection} and
    Qingsong Wang\IEEEauthorrefmark{2}\textsuperscript{\textsection}%
  }%
  \IEEEauthorblockA{\IEEEauthorrefmark{1} University of California, Irvine,
    Email: \{yu.zheng, chenangl, zhou.li\}@uci.edu}

  \IEEEauthorblockA{\IEEEauthorrefmark{2} University of California, San Diego,
    Email: qiw072@ucsd.edu}%
}
 

%


\IEEEoverridecommandlockouts
\makeatletter\def\@IEEEpubidpullup{6.5\baselineskip}\makeatother
\IEEEpubid{\parbox{\columnwidth}{
		Network and Distributed System Security (NDSS) Symposium 2026\\
		24-28 February 2026, San Diego, CA, USA\\
		ISBN 979-8-9894372-8-3\\
		https://dx.doi.org/10.14722/ndss.2026.240255\\
		www.ndss-symposium.org
}
\hspace{\columnsep}\makebox[\columnwidth]{}}

\maketitle

\begingroup\renewcommand\thefootnote{\textsection}
\footnotetext{Co-corresponding authors.}
\endgroup

\begin{abstract}
Differential privacy (DP) has been integrated into graph neural networks (GNNs) to protect sensitive structural information, e.g., edges, nodes, and associated features across various applications.
A  prominent approach is to perturb the message-passing process, which forms the core of most GNN architectures. However, existing methods typically incur a privacy cost that grows linearly with the number of layers (e.g., GAP published in Usenix Security'23), ultimately requiring excessive noise to maintain a reasonable privacy level.
This limitation becomes particularly problematic when multi-layer GNNs, which have shown better performance than one-layer GNN, are used to process graph data with sensitive information. 


In this paper, we theoretically establish that the privacy budget converges with respect to the number of layers by applying privacy amplification techniques to the message-passing process, exploiting the contractive properties inherent to standard GNN operations. Motivated by this analysis, we propose a simple yet effective \emph{Contractive Graph Layer (CGL)} that ensures the contractiveness required for theoretical guarantees while preserving model utility.
Our framework, $\gcn$, supports both training and inference, equipped with a contractive aggregation module, a privacy allocation module, and a privacy auditing module. Experimental evaluations demonstrate that $\gcn$ significantly improves the privacy-utility trade-off and achieves superior performance in privacy auditing tasks.
  


\end{abstract}

%
\IEEEpeerreviewmaketitle

\pagestyle{plain}


\section{Introduction}
Graph neural networks (GNNs)~\cite{iclr/KipfW17,nips/GuC0SG20,iclr/VelickovicCCRLB18}, designed for operating over structural data, have achieved success in various domains, including social networks~\cite{kdd/PerozziAS14,www/ZhangGPH22} and recommendation systems~\cite{www/ChenBSXZHHWH24,sigir/WuWF0CLX21,csur/WuSZXC23}.
At their core, many GNN architectures are built upon the \emph{message-passing paradigm}, where node representations are iteratively updated by aggregating information from their neighbors. However, graph structures often encode sensitive information about relationships and attributes. As a result, GNNs are vulnerable to privacy attacks, including membership inference~\cite{sp/Wulinkteller22,ccs/0001MMBS22,corr/abs-2102-05429} and attribute inference~\cite{uss/000100S022,tkde/SunDYZWYHL23}. These vulnerabilities highlight the urgent need for robust privacy protection mechanisms in graph learning.

Differential privacy (DP)~\cite{csf/mironov2017renyi,wasserman2010statistical,journal/dong2022gaussian} has emerged as a foundational framework to provide formal guarantees against data leakage over the past two decades, with widespread applications in machine learning~\cite{nips/TangPSM23,sp/Yu0PGT19,nips/YangZZ0PL023}, synthetic data generation~\cite{imc/SunCGWL24,iclr/LiWC23,jpc/McKennaMS21,pvldb/PatwaSGMR23}, and beyond. In the context of GNNs, recent works~\cite{arxiv/sajadmanesh2020differential,arxiv/Wuprivacy23} have advanced privacy protection through edge-level DP (EDP)~\cite{sp/0011L0022} and node-level DP (NDP)~\cite{pakdd/IftikharW21,sigmod/DayLL16} guarantees. The primary approach employs \emph{perturbed message passing}, which injects calibrated Gaussian or Laplace noise into aggregation layers to protect the edge or node memberships in the training graph. While these approaches have provided formal privacy guarantees, they share a critical limitation: \emph{the privacy loss grows linearly with the number of layers} $K$ or graph hops. In other words, each additional aggregation layer compounds the privacy cost, ultimately requiring large amounts of noise to maintain a reasonable level of privacy protection. This, in turn, severely degrades model utility. 

Recent advances have shown that multi-layer GNNs, especially deeper GCNs, are essential in capturing complex relationships~\cite{feng2022powerful} and analyzing graphs with long-range interactions~\cite{nips/DwivediRGPWLB22,corr/abs-2406-03386,nips/WuJWMGS21,tmlr/TonshoffRRG24,iclr/ThurlemannR23}.
{In fact, informative long-range interactions exist in a lot of real-world datasets. For instance, in large biological networks, long-range dependencies influence protein functions, requiring more than 10 hops of message passing~\cite{icml/Li0GK21}. In social networks, privacy-sensitive relationships propagate through multi-hop neighborhoods~\cite{nguyen2017probability, liu2014assessment}.
}
As reported in~\cite{icml/Li0GK21}, increasing the network depth leads to a substantial improvement in accuracy from $72.5\%$ to $88.2\%$. 
However, the aforementioned linear dependence on $K$ is particularly challenging for multi-layer GNNs as larger $K$ leads to larger privacy parameter $\epsilon$, \aka weak privacy guarantee. 

Interestingly, empirical studies~\cite{sp/Wulinkteller22} have shown that membership inference attacks are not particularly more successful against multi-layer GNNs, suggesting that the linear dependency of privacy cost on network depth might be an overestimation. This observation aligns with the phenomenon known as ``over-smoothing''~\cite{aaai/ChenLLLZS20} in GNNs, where node representations become increasingly homogeneous as network depth increases and consequently making membership inference more challenging. This homogenization effect might actually provide inherent privacy benefits due to the contractive nature of GNN aggregation operations.
This observation motivates our central research question:

\textit{Can we achieve differentially private graph learning with a \textbf{convergent} (bounded) privacy budget, thereby improving the privacy-utility trade-off for deeper GNNs?}

 \begin{figure}[!t]
	\centering
	\includegraphics[width = 0.38\textwidth]{./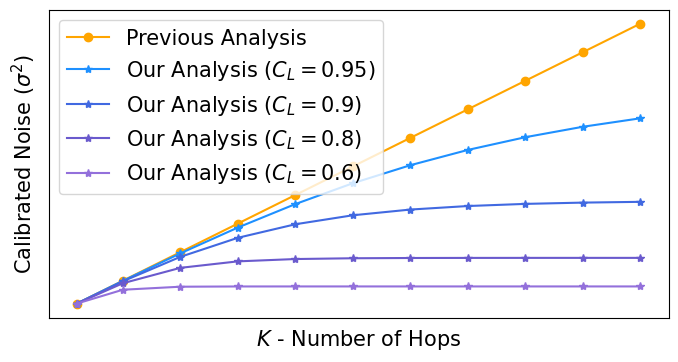}
	\caption{Comparing Calibrated Noise for Perturbed Message Passing. Previous analysis requires $\sigma^2\propto  O(K)$, and our analysis demands $\sigma^2\propto O\Bigl((1-\Clip^K)(1+\Clip)/((1+\Clip^K)(1-\Clip))\Bigl)$, where $\Clip$ is Lipchitz constant. With sensitivity constrained to norm $1$, the signal-to-noise ratio is markedly low as $K$ increases, severely impacting utility.}
	\label{img:privacy_K_summary}
\end{figure}

In this work, we answer this question affirmatively.
{Prior perturbed message-passing mechanisms~\cite{sp/0011L0022, pakdd/IftikharW21,sigmod/DayLL16} assume that privacy loss grows linearly with depth, yet empirical results show that deeper GNNs can be less vulnerable to membership inference. 
We attribute this phenomenon to the contractive nature of common aggregation operators.
In theory, we analyze this contractive property underlying over-smoothing, which leads to bounded sensitivity, so the privacy budget converges with the number of layers  $K$ (see Figure~\ref{img:privacy_K_summary}) instead of growing linearly. This motivates $\gcn$, a privacy-preserving GNN framework that enforces \textit{contractiveness} to mirror real-world GNN behaviors while achieving \emph{convergent privacy budget} through privacy amplification.}

\subsection{Overview of Convergent Privacy Analysis}
\label{sec:intro_converloss}
For a perturbed message-passing GNN with $K$ layers, the standard approach analyzes the privacy loss at each step and then applies the DP composition theorem to aggregate the total privacy cost.
This approach is common in existing privacy analyses of perturbed message-passing GNNs~\cite{arxiv/sajadmanesh2020differential,arxiv/Wuprivacy23,nips/chien2023differentially}, resulting in a privacy budget that scales linearly with $K$, specifically $\epsilon = O(K/\sigma^2) + O(\sqrt{K/\sigma})$.
Thus, as $K$ increases, the amount of injected noise $\sigma$ must also grow, leading to degraded utility particularly when a small $\epsilon$ is desired; see Section~\ref{sec::cha_pro} for a motivating empirical study. 

Inspired by recent advances in privacy amplification~\cite{nips/ChourasiaYS21,icml/BokShifted24} through hidden states and contractive iterative processes, we observe that a similar amplification effect can be exploited in GNNs. 
Here, \emph{contractiveness} refers to the property that the distance between two inputs is reduced after applying the operation, implying reduced distinguishability of outputs from a privacy perspective. The potential privacy amplification in GNNs arises from the following two observations: (1) GNNs typically do not expose intermediate node embeddings during training or inference, focusing only on the final node representations. (2) Standard message-passing operations, such as those used in Graph Convolutional Networks (GCNs)~\cite{iclr/KipfW17} (the dominant model in practice and in empirical studies~\cite{sp/Wulinkteller22}), are inherently contractive, a property that also underlies the over-smoothing phenomenon~\cite{aaai/ChenLLLZS20}.
We theoretically validate this insight by showing that the privacy loss of a $K$-layer perturbed message-passing process with contractive layers satisfies a \emph{convergent privacy budget}.
Specifically, instead of growing linearly with $K$, we show that the privacy budget follows a convergent form:
$
\epsilon = O\left(\frac{(1-\Clip^K)(1+\Clip)}{(1+\Clip^K)(1-\Clip)}\right),
$
where $\Clip$ is the Lipschitz constant of the message-passing operator (see Theorem~\ref{thm:privacy_convergence_general} for details). 
Our improved privacy analysis is achieved by recasting the multi-layer perturbed GNN as a \emph{Contractive Noisy Iteration (CNI)} process~\cite{icml/BokShifted24} and applying the privacy convergence results established for CNIs.

To leverage this analysis in practice, we design a simple yet effective \emph{Contractive Graph Layer (CGL)} that enforces the contractiveness required for our theoretical guarantees while maintaining model expressivity. 
The CGL layer builds upon standard GCN-style aggregation, augmented with residual connections~\cite{icml/ChenWHDL20} and mean aggregation normalization~\cite{alon2020bottleneck}, ensuring expressiveness across many layers without exposing sensitive edge information.

We quantitatively characterize the privacy guarantees of CGL by carefully bounding the Lipchitz constant of the perturbed message-passing operation (Proposition~\ref{prop:lip}) and the sensitivity of the perturbed message-passing with respect to both edge-level privacy (Theorem~\ref{thm:edge-sensitivity}) and node-level privacy (Theorem~\ref{thm:node-sensitivity}). Together, these results allow us to explicitly quantify the privacy budget of the CGL layer using our general theory, culminating in the final privacy guarantee stated in Theorem~\ref{the:overall_epsilon}.

\subsection{$\gcn$: Framework and Evaluation}
\label{sec:intro_framework} Building on perturbed CGL, we realize a private framework $\gcn$ for GNN inference and training.
$\gcn$ includes contractive aggregation module, privacy allocation module, and privacy auditing module.
Together, these design choices enable us to achieve convergent privacy guarantees while maintaining strong GNN performance across graphs with varying interaction ranges.

To evaluate $\gcn$, we conduct extensive experiments over nine graph datasets, including commonly-used real-world datasets and synthetic chain-structured datasets for developing configurable interaction ranges.
The experimental results demonstrate that $\gcn$
improves non-trivial utility over standard graph and chain-structured datasets.
Compared with several SOTA baselines, $\gcn$'s EDP and NDP show significant utility improvements, especially in high privacy regimes, and reasonable computational overhead. 
Table~\ref{tab:compare_design} presents a comprehensive comparison, which is explained in Section~\ref{related_main}.
Ablation studies are provided to understand the relation between privacy-utility hops and various ranges of graph, and the choice of hyper-parameters of CGL.
In addition to privacy verification, we perform auditing experiments based on two membership inference attacks~\cite{sp/Wulinkteller22,tpsisa/OlatunjiNK21}, demonstrating $\gcn$'s robustness. 

\vspace{1mm}
\noindent\textit{\textbf{Contribution.}} In terms of our new insights (Section~\ref{sec::cha_pro}), our contribution includes:
 \begin{enumerate}
    \item A novel privacy analysis for GNNs that leverages the contractiveness of message-passing operations to achieve convergent privacy costs, even for deep networks; (Section~\ref{sec:theory})
    \item The design of perturbed CGL and a practical differentially private GNN framework -- $\gcn$ with provable privacy guarantees and superior utility-privacy tradeoffs; (Section~\ref{sec:framework})
    \item Extensive experimental validation across multiple graph datasets with varying structural properties, demonstrating significant improvements over state-of-the-art private GNN approaches. (Section~\ref{sec:exp})
\end{enumerate}

 \begin{table*}[t]
\small
    \caption{Comparison between Private GNNs. EDP and NDP summarizes the results of private GNNs in Table~\ref{tab:reulst_table_overall_acc_top1}.   
 }
    \label{tab:compare_design}
    \centering
    \setlength\tabcolsep{4pt}
    \adjustbox{max width=\textwidth}{
    \begin{tabular}{c|c|c|c|c|c}
    \toprule[1pt]   \textbf{Framework} &    \textbf{Mechanism} & \textbf{Complexity per Layer}& \textbf{Calibrated Noise ($\sigma$)} & \textbf{EDP Utility}& \textbf{NDP Utility}\\
    \hline
    \sage~\cite{ccs/KolluriBHS22,sp/0011L0022} & Graph perturbation&    $O(|\mV|^2)$ & $\propto 1$  &\Stars{2}&\Stars{1}\\\hline
    \dpdgc~\cite{nips/chien2023differentially} & Decoupled graph with perturbation &$O(|\mE|)$&$\propto\sqrt{K}$& \Stars{3} &\Stars{3.5}\\\hline
       \gap~\cite{uss/sajadmanesh2023gap} & Perturbed message passing  & $O(|\mE|)$  &   $\propto\sqrt{K}$ &\Stars{4} &\Stars{3.5}\\\hline
        $\gcn$    & Perturbed message passing &$O(|\mE|)$  &  $\propto\sqrt{\min(K,\frac{1-\Clip^K}{1+\Clip^K}\frac{1+\Clip}{1-\Clip} )}$&\Stars{5} &\Stars{5}\\
    \bottomrule
    \end{tabular}
    }
\end{table*}


\section{Preliminary}
\subsection{Message Passing Graph Neural Networks}
Graph Neural Networks (GNNs) are a class of neural networks that operate on graph-structured data. Most GNNs follow the message-passing paradigm~\cite{feng2022powerful}, where nodes iteratively aggregate information from their neighbors to update their representations.

\subsubsection{Message Passing Layers} Let $\mG=(\mV, \mE)$ be a graph, where $\mV$ denotes the set of vertices (or nodes) and $\mE$ denotes the set of edges. Let $\mX^{(k)} \in \Rbb^{|\mV| \times d}$ be the node feature matrix at layer $k$, where $d$ is the dimension of the node features. Additionally, we use $\mX_u^{(k)}\in \R^d$ to denote the feature vector of node $u$ at layer $k$. Each layer of a message passing GNN can be generally written as,
\begin{equation}
\label{equ:MP_GNN}
  \text{MP}_G(\mX_u^{(k)}) := \sigma\left(\psi\left(\mX^{(k)}, \oplus_{ v \in \gN(u)} \phi(\mX_u^{(k)}, \mX_v^{(k)}) \right)\right),
\end{equation}
where $\sigma$ is a non-linear activation function, $\gN(u)$ is the set of neighbors of node $u$, $\phi$ is a function that computes the message from node $v$ to node $u$, $\oplus$ represents the aggregation function that processes all messages from the neighbors of node $u$, and $\psi$ is a function that updates the node feature vector of node $u$ with the aggregated messages. 
Examples of message passing GNNs, such as GCN~\cite{iclr/kipf2017semi}  and its variant, are in Appendix~\ref{app:mpgcn}.

\subsubsection{Applications of Message Passing GNNs} Message passing GNNs leverage GNN layers to iteratively refine node representations, which are then employed in tasks like node classification~\cite{ccs/KolluriBHS22}, link prediction~\cite{www/RaoKR24}, and graph classification~\cite{icdm/LiW024}. 
Multi-layer GNNs like deep GNNs~\cite{iclr/ThurlemannR23,nips/WuJWMGS21}  are especially suitable to process
long-range graphs~\cite{nips/DwivediRGPWLB22,corr/abs-2406-03386,tmlr/TonshoffRRG24} by capturing dependencies between distant nodes, which is crucial for tasks like molecular property prediction~\cite{bioinformatics/MaBRHXXYH22}, protein interaction modeling~\cite{bmcbi/ZhongHXLQY22}, and complex node interaction modeling~\cite{feng2022powerful}.

\subsection{Differential Privacy for GNNs}

\begin{definition}[Differential Privacy~\cite{tcc/DworkMNS06}]
Given a data universe $\Dcal$, two datasets $D,D'\subseteq\Dcal$ are adjacent if they differ by only one data instance.
A random mechanism $\Mcal$ is $(\epsilon, \delta)$-differentially private if for all adjacent datasets $D,D'$ and for all events $S$ in the output space of $\Mcal$, we have $\mathrm{Pr}(\Mcal(D)\in S)\leq e^\epsilon \mathrm{Pr}(\Mcal(D')\in S) + \delta$.
\end{definition}

Intuitively, DP~\cite{tcc/DworkMNS06} theoretically quantifies the privacy of a model by measuring the indistinguishability of the outputs of a mechanism $\Mcal$ on two adjacent datasets $D$ and $D'$.
It can be classified into bounded DP and unbounded DP depending on the construction of $D'$, where the former is by replacing a data instance of $D$ and the latter is by addition / removal of a data sample of $D$.
The privacy budget $\epsilon$ is smaller representing a stronger privacy guarantee, while $\delta$ is a slackness quantity that relaxes the pure DP constraint.

\subsubsection{Privacy Definition on Graphs}
In the context of graph data, the notion of adjacency refers to the graph structure, which can be defined as edge-level adjacency (Definition~\ref{def::edp}) and node-level adjacency (Definition~\ref{def::ndp}). 

 
\begin{definition}[Edge-level adjacency~\cite{pvldb/KarwaRSY11}]
\label{def::edp}
	Two graphs $\mG_1=\{\mV_1,\mE_1\}$ and $\mG_2=\{\mV_2,\mE_2\}$ are considered as edge-level neighboring if they differ in a single edge (through addition or removal of the edge), \ie, ($\mV_2=\mV_1) \wedge (\lnot(\mE_2\cap\mE_1)= e_i$) where $e_i\in\mE_1$.
\end{definition}

\begin{definition}[Node-level adjacency~\cite{pvldb/KarwaRSY11}]
\label{def::ndp}
	Two graphs $\mG_1=\{\mV_1,\mE_1\}$ and $\mG_2=\{\mV_2,\mE_2\}$ are considered as node-level neighboring if they differ in a single node and its incident edges (through addition or removal of the node and its incident edges), \ie, $\lnot(\mV_2\cap \mV_1)  =\{n_i, \{e_{ij}\}_{\forall j}\}$ where $n_i\in(\mV_1\cup \mV_2)$ and $ \{e_{ij}\}_{\forall j}$ connects to $n_i$.
\end{definition}



\subsubsection{Perturbed Message Passing with DP}
To incorporate differential privacy into GNNs, one can add noise to the message passing layer, following the perturbed message passing approach~\cite{uss/sajadmanesh2023gap}.
Given a graph $\mG$ and message passing function $\mp$, we define a sequence $\{\mX^{(k)}\}_{k=0}^K$ of node feature matrices by:
\begin{equation}\label{equ:mp_sequence}
  \mX^{(k+1)} = \Pi_\gK(\mp_{\mG}(\mX^{(k)}) + \mZ^{(k)})
\end{equation}
where $\mX^{(0)} = \mX$ is the input feature matrix, $\mZ^{(k)} \sim \mathcal{N}(0, \sigma^2)$ is Gaussian noise, and $\Pi_\gK$ projects features back to bounded set $\mathcal{K}$ (typically constraining $\|\mX_u\|_2 \leq 1$ for each node $u$).
The privacy guarantees of perturbed message passing depend on the sensitivity of the mechanism:

\begin{definition}[Sensitivity of Perturbed Message Passing]
  Let $\mp_{\mG}$, $\mp_{\mG'}$ be the perturbed message passing mechanisms applied to neighboring graphs $G, G'$. The sensitivity is defined as:
  \begin{equation}\label{equ:sensitivity}
    \Delta(\mp) = \max_{G, G'} \max_{\mX \in \mathcal{K}} \|\mp_{\mG}(\mX) - \mp_{\mG'}(\mX)\|_F
  \end{equation}
  where the maximum is taken over all adjacent graphs and all node feature matrices in $\mathcal{K}$.
\end{definition}

The sensitivity determines the scale of noise required for privacy guarantees. Lower sensitivity allows for less noise addition while maintaining the same privacy level, directly affecting the utility-privacy trade-off in differentially private GNNs.

\subsubsection{Privacy Accounting}
Privacy accounting process analyzes the total privacy budget for the composition of several (adaptive) private algorithms. A common approach for analyzing the Gaussian mechanism in perturbed message passing is through R\'enyi differential privacy (RDP)~\cite{csf/mironov2017renyi} and its composition theorem.
\begin{definition}[R\'enyi differential privacy~{\citep{csf/mironov2017renyi}}]
  A randomized algorithm $\mathcal{M}$ is $(\alpha, \epsilon)$-RDP for $\alpha > 1$, $\epsilon > 0$ if for every adjacent dataset $X, X'$, we have $D_{\alpha}(\mathcal{M}(X) \| \mathcal{M}(X')) \leq \epsilon$, where $D_{\alpha}(P \| Q)$ is the Rényi divergence of order $\alpha$ between probability distributions $P$ and $Q$ defined as:
  $$D_{\alpha}(P \| Q) = \frac{1}{\alpha - 1} \log \mathbb{E}_{x \sim Q} \left[ \left(\frac{P(x)}{Q(x)}\right)^{\alpha} \right]$$
\end{definition}

\begin{theorem}[Composition of RDP~\cite{csf/mironov2017renyi}]
  \label{the:rdp_composition}
  If \(\Mcal_1, \dots, \Mcal_k\) are randomized algorithms satisfying, respectively, \((\alpha, \epsilon_1)\)-RDP, \dots, \((\alpha, \epsilon_k)\)-RDP, then their composition defined as \((\Mcal_1(S), \dots, \Mcal_k(S))\) satisfies \((\alpha, \epsilon_1 + \cdots + \epsilon_k)\)-RDP.
\end{theorem}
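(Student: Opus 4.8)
The statement to prove is the standard composition theorem for Rényi differential privacy (Theorem~\ref{the:rdp_composition}). Although this is a classical result, let me sketch how I would prove it from first principles.

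\medskip

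\textbf{Plan.} The proof proceeds by induction on $k$, so it suffices to handle $k=2$: if $\Mcal_1$ is $(\alpha,\epsilon_1)$-RDP and $\Mcal_2$ (possibly adaptive, i.e.\ $\Mcal_2$ may depend on the output of $\Mcal_1$) is $(\alpha,\epsilon_2)$-RDP, then the joint mechanism $\Mcal(S) = (\Mcal_1(S), \Mcal_2(S))$ is $(\alpha,\epsilon_1+\epsilon_2)$-RDP. The general case then follows by treating $(\Mcal_1,\dots,\Mcal_{k-1})$ as a single mechanism and composing with $\Mcal_k$.

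\textbf{Key steps.} Fix adjacent datasets $S, S'$. Write $P$ for the joint distribution of $(\Mcal_1(S),\Mcal_2(S))$ and $Q$ for that of $(\Mcal_1(S'),\Mcal_2(S'))$, and factor each as a marginal times a conditional: $P(x_1,x_2) = P_1(x_1)\,P_2(x_2\mid x_1)$ and similarly $Q(x_1,x_2) = Q_1(x_1)\,Q_2(x_2\mid x_1)$. First I would expand the definition of Rényi divergence:
\[
  e^{(\alpha-1) D_\alpha(P\|Q)} = \sum_{x_1,x_2} P(x_1,x_2)^\alpha Q(x_1,x_2)^{1-\alpha}
  = \sum_{x_1} P_1(x_1)^\alpha Q_1(x_1)^{1-\alpha} \sum_{x_2} P_2(x_2\mid x_1)^\alpha Q_2(x_2\mid x_1)^{1-\alpha}.
\]
The inner sum over $x_2$ is exactly $e^{(\alpha-1) D_\alpha(P_2(\cdot\mid x_1)\,\|\,Q_2(\cdot\mid x_1))}$, which by the RDP guarantee for $\Mcal_2$ (applied on the adjacent pair $S,S'$, for every fixed value $x_1$ of the first output) is at most $e^{(\alpha-1)\epsilon_2}$. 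Factoring this bound out of the sum leaves $e^{(\alpha-1)\epsilon_2}\sum_{x_1} P_1(x_1)^\alpha Q_1(x_1)^{1-\alpha} = e^{(\alpha-1)\epsilon_2}\, e^{(\alpha-1) D_\alpha(P_1\|Q_1)} \le e^{(\alpha-1)(\epsilon_1+\epsilon_2)}$, using the RDP guarantee for $\Mcal_1$. Taking logarithms and dividing by $\alpha-1>0$ gives $D_\alpha(P\|Q)\le \epsilon_1+\epsilon_2$, as required. (For continuous output spaces, replace sums by integrals against a dominating measure; the argument is identical.)

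\textbf{Main obstacle.} The only subtle point is the handling of adaptivity: the second mechanism's output distribution depends on the realized value $x_1$ of the first, so one must be careful that the RDP bound for $\Mcal_2$ holds \emph{uniformly} over all such conditioning values $x_1$ — this is precisely what the definition of $(\alpha,\epsilon_2)$-RDP for the (conditional) mechanism provides, and it is what lets the factor $e^{(\alpha-1)\epsilon_2}$ be pulled outside the sum over $x_1$. Everything else is bookkeeping: the chain-rule factorization of the joint densities and the observation that Rényi divergence exponentiates additively under this factorization. The induction step to general $k$ is then immediate.
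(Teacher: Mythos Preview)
Your proof is correct and is essentially the standard argument from Mironov's original paper \cite{csf/mironov2017renyi}: factor the joint density, bound the inner conditional R\'enyi divergence uniformly by the RDP guarantee of $\Mcal_2$, then apply the guarantee for $\Mcal_1$ to the remaining marginal term, and induct. The present paper does not supply its own proof of Theorem~\ref{the:rdp_composition}; it is stated as a cited preliminary result and used as a black box (e.g.\ in the proof of Corollary~\ref{cor:k-layer-privacy}), so there is nothing further to compare against.
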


In this work, we present our privacy results in terms of RDP for ease of interpretation, while our underlying analysis employs the tighter $f$-DP framework. This analysis leverages recent advances in privacy amplification techniques~\cite{nips/SchuchardtSKG24,icml/BokShifted24} to achieve stronger privacy guarantees. The technical details of our convergent privacy analysis are discussed in \S\ref{sec:appro_overview}.

\begin{figure}[!t]
	\centering
	\includegraphics[width = 0.48\textwidth]{./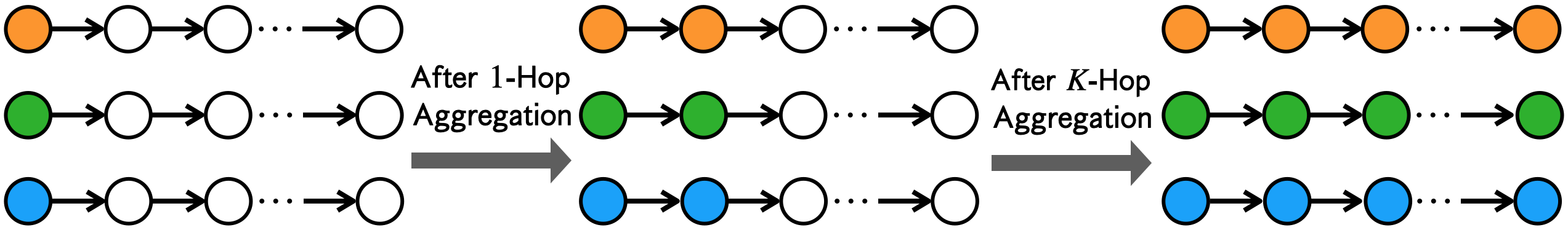}
	\caption{Message Passing on Chain-structured Dataset}
	\label{img:chain}
\end{figure}

\section{Private Multi-layer GNNs Initiative}

\label{sec::cha_pro}
Multi-layer GNNs~\cite{iclr/ThurlemannR23,nips/WuJWMGS21} are vital in tasks like modeling molecular structures, where some properties depend on long range interactions~\cite{nips/DwivediRGPWLB22,corr/abs-2406-03386,tmlr/TonshoffRRG24}  of the nodes~\citep{zhang2022deep,li2023neural}. 
Specifically, they  require messages  flowing across multiple hops before reaching a target node through  stacking multiple layers to exchange information across $K$-hop neighborhoods.
However, ensuring privacy in multi-layer GNNs poses key challenges.
In this section, we outline and illustrate these challenges via a motivating case study (Section~\ref{subsec:gap-chain}), ultimately motivating our new design insights (Section~\ref{sec:appro_overview}) based on \emph{contractive message passing}.

\subsection{Observations on Privacy Accumulation and Performance Degradation}
\label{subsec:gap-chain}

\subsubsection{A Case Study on $\gap$'s Performance on Learning Long-Range Interactions}
To further investigate this phenomenon, we evaluate $K$-layer \gap\ on the \emph{chain-structured} dataset (see Figure~\ref{img:chain}), which is adopted in~\cite{nips/GuC0SG20, liu2021eignn} to 
examine long-range interaction learning capabilities. Specifically, this dataset creates a controlled environment for evaluating GNN performance on long-range interactions. A model without learning from graph structure such as \mlpE\ would fail since most nodes in the dataset have zero-valued feature vectors regardless of their chain type. For a message-passing GNN to correctly classify nodes positioned $K$ hops away from the informative first node, it must perform at least $K$ propagation steps to transfer the meaningful features across the chain as shown in Figure~\ref{img:chain}. This requirement becomes particularly challenging in the private setting as noise must be injected after each message passing layer, potentially overwhelming the signal being propagated.


Figure~\ref{fig:motivating} compares $\gap$' accuracy of its non-private version ({blue solid line}) with its private versions ({dashed lines}), under different privacy budgets $\epsilon \in \{1,2,4,8,16,32\}$. 
It shows a  binary node classification over multiple $8$-node chains as a case study. 
The red solid line in Figure~\ref{fig:motivating} represents random guessing (50\% accuracy).

\begin{figure}[!t]
	\vspace{-12pt}
	   \subfloat[Strong Privacy]{
	   \hspace{-0.2cm}
	   \includegraphics[width = 4.2cm]{./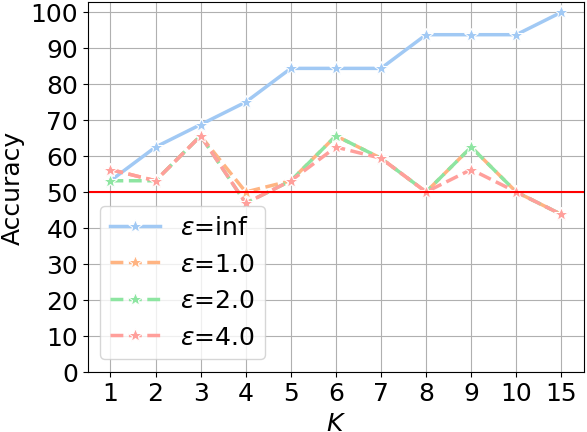}
	   \label{fig:motivating_1}
	   }
	   \subfloat[Weak Privacy]{
	   \hspace{-0.2cm}
	   \includegraphics[width = 4.2cm]{./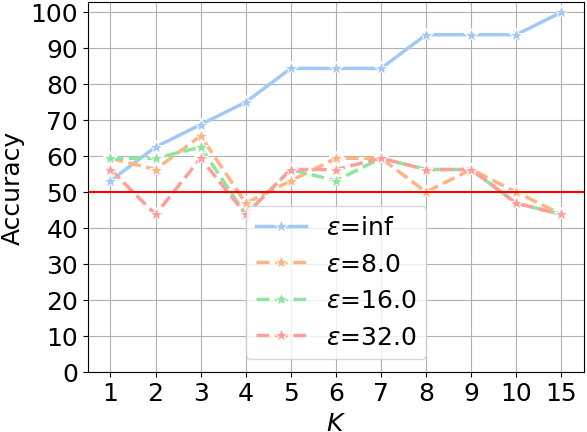}
	   \label{fig:motivating_2}
	   }
	   \caption{Motivating Experiments of Classification Model over Chain-structured Datasets. }
	   \label{fig:motivating}
\end{figure}

 Our exploration reveals a stark contrast between private and non-private settings:
\begin{itemize}[leftmargin=*]
	\item \emph{Non-private setting.} $\gap$'s accuracy consistently improves with increasing layer depth, as message passing enables feature propagation across the chain. The model achieves satisfactory performance after sufficient depth ($K \geq 8$), ultimately reaching perfect classification (100\% accuracy) at $K=15$ hops—demonstrating the necessity of deep architectures for capturing long-range dependencies.
	
	\item \emph{Private setting.} Privacy protection dramatically degrades model utility. As shown in Figure~\ref{fig:motivating}, even with relatively generous privacy budgets ($\epsilon=16$ or $\epsilon=32$), performance remains marginally above random guessing (50\%). Crucially, increasing depth offers no benefit and often harms performance, as noise accumulates exponentially across layers. This confirms our theoretical concerns: standard approaches to privacy in GNNs fundamentally limit the ability to learn long-range interactions.
\end{itemize}

\begin{figure*}[!t]
	\centering
	\includegraphics[width = 0.98\textwidth]{./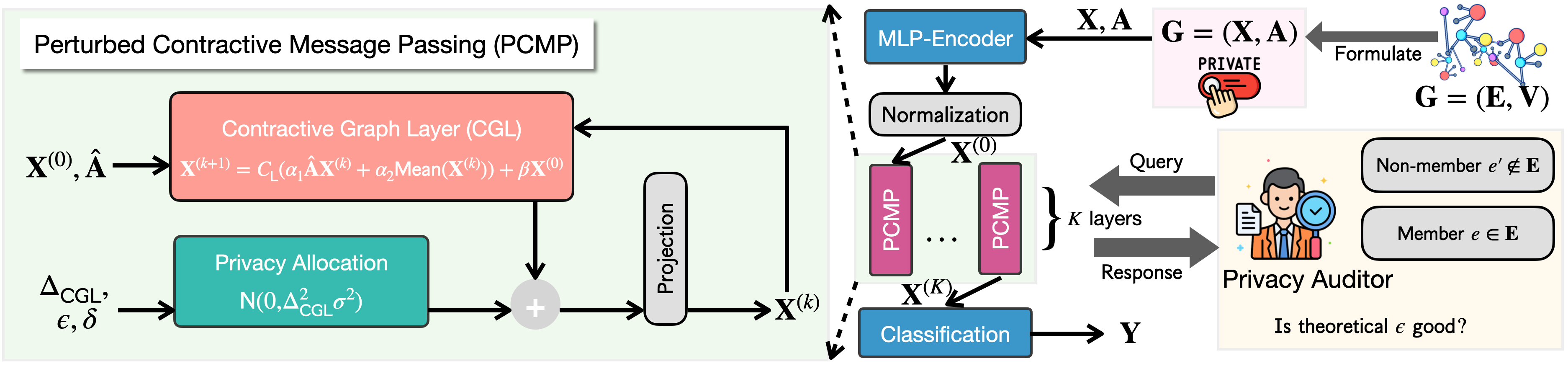}
    \caption{Overview of $\gcn$. The framework integrates CAM, PAM, PDM (Section~\ref{sec:framework}). Combining CAM and PAM can form PCMP. Together, these modules enable scalable and accurate private GNN learning under a convergent privacy budget.}
	\label{img:overview}
\end{figure*}

\subsubsection{Empirical (Counter-Intuitive) Observation: Deeper GNNs May Enhance Privacy}
Recent work by \cite{wu2024provable} revealed a counter-intuitive phenomenon: deeper GNNs empirically exhibit lower vulnerability to membership and link inference attacks. This challenges standard privacy composition analysis, which suggests privacy risk increases with model depth due to multiple queries of the graph data. 
The insight stems from the \emph{over-smoothing}\footnote{Incidentally, another similar phrase is ``over-fitting'', which refers to a model performing well on training data but poorly on unseen data due to memorization. We emphasize that they are different concepts to avoid confusion.}~\cite{aaai/ChenLLLZS20} phenomenon,
where node representations become indistinguishably homogeneous phenomenon, where node representations converge toward similar values as depth increases, making it inherently difficult for adversaries to distinguish individual nodes or infer sensitive relationships.

This observation suggests that standard privacy analysis may be overly pessimistic. The conventional approach of adding noise that scales linearly with depth may be unnecessarily conservative, as the natural privacy amplification properties of over-smoothing could enable tighter privacy bounds with less noise per layer.

\subsection{Core Idea for Convergent Privacy}
\label{sec:appro_overview}

Our insight is to identify and leverage the inherent privacy amplification that occurs in multi-layer GNNs through \emph{contractiveness} (Definition~\ref{def:contractive}). 
When nodes aggregate information from their neighbors (e.g., graph convolution), the resulting representations necessarily become more similar to each other. 

\begin{definition}[Contractive Map]
\label{def:contractive}
A map $f: \mathbb{R}^d \to \mathbb{R}^d$ is said to be contractive with respect to a norm $\|\cdot\|$ if there exists a constant $c < 1$ such that for all $x, y \in \mathbb{R}^d$:
$\|f(x) - f(y)\| \leq c \|x - y\|$
where $c$ is the contractiveness coefficient that governs the rate of contraction.
\end{definition}

\textit{Designing Private GNNs with Convergent Privacy.}
We aim to leverage the privacy amplification properties of contractive map to design a new framework for private GNNs.
This framework is inspired by the recent advances in differentially private gradient descent (DP-GD)~\cite{nips/ChourasiaYS21,icml/BokShifted24}, which has shown that the privacy cost can \emph{converge to a finite value} even with arbitrarily many iterations.
Motivated by it, we aim to translate the advanced privacy analysis techniques from DP-GD to GNNs.
We observe that perturbed message passing (Equation~\ref{equ:mp_sequence}) in GNNs follows a strikingly mathematical parallel pattern as DP-GD:
This parallel structure  enables to leverage the insight of convergent privacy analysis in the context of GNNs, provided we ensure two critical conditions:
\begin{enumerate}[leftmargin=1em]
    \item \textbf{Hidden intermediate embeddings:} Release only the final node representations $\mX^{(K)}$ after $K$ layers, concealing all intermediate states; (Section~\ref{sec:security_model})
    \item \textbf{Contractive message passing:} Design the message passing operation $\mp_{\mG}$ to be provably contractive with coefficient $c < 1$, ensuring $\|\mp_{\mG}(\mX) - \mp_{\mG}(\mY)\|_F \leq c \|\mX - \mY\|_F$ for all node feature matrices $\mX, \mY$. (Section~\ref{sec:cont_layer})
\end{enumerate}
When the perturbed message passing step is contractive with respect to the $\ell_2$ norm, the distance between GNNs trained on neighboring datasets shrinks at each step. Consequently, the influence of individual data points diminishes, leading to the amplified privacy rooted from ``over-smoothing'' effect.
Accordingly, our insight challenges the previous analysis on private GNNs that accumulates the privacy loss from multi-hop GNN aggregations linearly, and simultaneously removes the over-estimated  privacy loss of finally released GNN model to derive a much tighter bound.

\section{Private GNNs with Contractiveness}
 \label{sec:framework}

Building on insights in Section~\ref{sec::cha_pro},
we propose $\gcn$, a modular private GNN framework with convergent privacy budget (Figure~\ref{img:overview}), including three key modules:

\begin{itemize}
    \item[1)] \textbf{Contractive Aggregation Module (CAM, \S~\ref{sec:cont_layer}):} We propose a \textit{new design of message passing layers} with carefully controlled Lipschitz constants to ensure contractiveness so that we can add a reasonable, small yet sufficient noise to protect the computed message passing; Then, we realize a \textit{new DP mechanism of perturbed message passing that only releases final node representations $\mX^{(K)}$}, preventing adversary from exploiting intermediate states and thus amplifying the privacy guarantee rooted in hidden node embeddings.
    \item[2)] \textbf{Privacy Allocation Module (PAM, \S~\ref{sec:pamp}):} PAM ensures efficient privacy budget allocation for edge- and node-level DP guarantees, in which noise calibration is based on \textit{new convergent privacy analysis}.
    \item[3)] \textbf{Privacy Auditing Module (PDM, \S~\ref{sec:privacy_veri}):} PDM empirically audits graph DP through tests such as membership inference attacks, ensuring theoretical guarantees align with practical deployments.
\end{itemize}

\textbf{Security Model.}
\label{sec:security_model}
Aggregation-based GNNs such as GCN~\cite{iclr/kipf2017semi}, GCN-II~\cite{icml/ChenWHDL20}, and SAGE~\cite{nips/HamiltonYL17} reveal only the final node embeddings, keeping intermediate embeddings hidden. 
The final node embeddings learned in this DP manner can be subsequently applied to downstream tasks such as node classification~\cite{www/WangLDLW24}, link prediction~\cite{www/RaoKR24}, and graph classification~\cite{icdm/LiW024}. 
Since DP introduces utility loss, \gap~\cite{uss/sajadmanesh2023gap} seeks to improve model utility by concatenating all intermediate node embeddings from the $K$ layers. 
However, it causes the noise scale $\sigma$ to grow linearly as $O(\sqrt{{K}/{\epsilon^2 }})$, and this privacy bound is loose. 
This work takes a step further beyond \gap, removing the assumption that all intermediate node embeddings must be revealed under a more realistic security model~\cite{nips/0001S22}. 

Overall, $\gcn$ (Figure~\ref{img:overview}) achieves accurate and private multi-layer GNNs, 
which theory will be established in Section~\ref{sec:theory}.
Table~\ref{tab:syn_datasets} summarizes acronyms and symbols.


\begin{table}[t]
\setlength\tabcolsep{10pt}
\begin{center}
\begin{tabular}{|c|l|}
\hline\hline
CGL & Contractive Graph Layer.\\\hline
RDP & R\'enyi differential privacy.\\\hline
MIA &Membership Inference Attacks.\\\hline\hline
$K$	& The number of layers or hops.	 \\\hline
$\epsilon,\sigma$	 & Privacy parameters: budget, noise scale. \\\hline
$\Clip,\Delta$	 & Lipchitz constant, sensitivity.  \\\hline
$\alpha_1,\alpha_2,\beta$& Hyper-parameters of CGL.\\\hline\hline
\end{tabular}
\end{center}
\caption{Acronyms \& Symbols }
\label{tab:syn_datasets}
\end{table}
 
\subsection{Contractive Aggregation Module}
\label{sec:cont_layer}
Contractive operations in GNNs, such as the graph convolution (GConv) layer~\cite{iclr/KipfW17}, inherently reduce privacy risks by mitigating the memorization of GNN parameters through the over-smoothing phenomenon~\cite{aaai/ChenLLLZS20}. This property aligns well with the need for $K$-layer aggregation in long-range graphs, where a target node aggregates embeddings from distant source nodes. However, directly stacking $K$ GConv layers, as shown in Figure~\ref{fig:motivating}, is suboptimal due to limited expressive power~\cite{icml/ChenWHDL20} and heightened sensitivity to DP noise~\cite{WuSZFYW19}. 

To address these limitations, we propose the Contractive Aggregation Module (CAM), centered on the Contractive Graph Layer (CGL). The CGL introduces adjustable coefficients to enhance flexibility and utility while maintaining privacy. Formally, the CGL is defined as:
\begin{equation}
\label{equ:contractlayer}
\mX^{(k+1)} = \Clip\big(\alpha_1 \hat{\mA}\mX^{(k)} + \alpha_2\mathsf{Mean}(\mX^{(k)})\big) + \beta \mX^{(0)},
\end{equation}
where $0 \leq \Clip < 1$ ensures contractiveness, and $\alpha_1, \alpha_2, \beta > 0$ with $\alpha_1 + \alpha_2 = 1$ are hyperparameters. Here, $\hat{\mA}$ is the symmetrically normalized adjacency matrix~\footnote{$\hat{\mA} = \mD^{-\frac{1}{2}}(\mA + \mI)\mD^{-\frac{1}{2}}$ has been widely adopted after GCN emerged, where $\mD$ is the degree matrix and $\mI$ is the identity matrix.},
$\mathsf{Mean}(\mX^{(k)})$ computes the mean of node embeddings, and $\mX^{(0)}$ represents the initial node features.

\textit{Analysis of hyperparameters.} The hyperparameter $\Clip$ ensures the contractiveness by bounding the magnitude of the output embeddings. Specifically, $\Clip$ enforces a Lipschitz constraint, ensuring that small changes in the input embeddings $\mX^{(k)}$ result in proportionally small changes in the output $\mX^{(k+1)}$. 
The coefficients $\alpha_1,\alpha_2$ govern the relative contributions of the graph-based aggregation $\hat{\mA}\mX^{(k)}$ and the mean-based aggregation $\mathsf{Mean}(\mX^{(k)})$, respectively. By satisfying the constraint $\alpha_1 + \alpha_2 = 1$, these coefficients ensure a convex combination of the two components. A higher $\alpha_1$ emphasizes the influence of the graph topology, leveraging structural information from the adjacency matrix $\hat{\mA}$. 
The parameter $\beta$ controls the residual connection $\beta \mX^{(0)}$ calculated independently of the graph topology, incorporating the characteristics of the initial node $\mX^{(0)}$ into the output. 
This residual term mitigates the vanishing gradient problem by preserving a direct path for gradient flow. 

Together, the CGL combines three key components: (1) the graph-based aggregation $\hat{\mA}\mX^{(k)}$, (2) the mean-based aggregation $\mathsf{Mean}(\mX^{(k)})$, and (3) the residual connection $\beta \mX^{(0)}$. This design balances the trade-off between privacy and utility by limiting the propagation of noise while preserving expressive power.

\textit{Comparison with existing design.} 
\gap~\cite{uss/sajadmanesh2023gap} enhances expressiveness by connecting all intermediate embeddings but lacks contractiveness, compromising utility and privacy. In contrast, CGL introduces adjustable coefficients, achieving a balanced trade-off between privacy, utility, and generalizability.
Table~\ref{tab:compare_design} summarizes the general design characteristics between \gap\ and $\gcn$. 

\subsection{Privacy Allocation Module}
\label{sec:pamp}
For multi-hop aggregation, the features from the previous hop $\mX^{(k-1)}$ are aggregated using the adjacency matrix $\hat{\mA}$ to enable message passing to neighboring nodes. To ensure privacy, Gaussian noise $\mathcal{N}(\mu, \sigma^2)$ is added to the aggregated features, where the noise variance $\sigma^2$ is determined by the privacy budget $\epsilon$, ensuring compliance with DP guarantees. 

Building on Section~\ref{sec:cont_layer}, we integrate CAM and PAM to design the Perturbed Contractive Message Passing (PCMP) (Algorithm~\ref{alg:PLMA}). By leveraging contractiveness, PCMP ensures bounded privacy loss for long $K$-hop graphs, eliminating the linear growth of privacy loss with $K$. A subsequent projection step enforces Lipschitz constraints, maintaining consistent scaling across hops. After $K$ iterations, PCMP outputs private feature matrices $\hat{\mathbf{X}}^{(K)}$, which are passed to the classification module.

\begin{algorithm}[t]
\caption{Perturbed Contractive Message Passing (PCMP)}
\label{alg:PLMA}
\begin{algorithmic}[1]
\Require Graph $\mathcal{G} = (\mathcal{V}, \mathcal{E})$ with adjacency matrix $\mA$; The number of hops $K$; Lipchitz constant $\Clip$, DP parameters $\epsilon,\delta$; Initial  normalized features $\mathbf{X}^{(0)}$; 
\Ensure Private aggregated node embeddings $\hat{\mX}^{(K)}$.
\State $ $
\State \Comment{\underline{\textit{Calculate the Required Noise Calibration (from PAM).}}}
\If{Edge-level privacy}
\State Calculate $ \Delta(\mathsf{CGL})$ through Equation~\ref{equ:sensivity_edge}
\ElsIf{Node-level privacy}
\State Calculate $\Delta(\mathsf{CGL})$ through Equation~\ref{equ:sensivity_node}
\EndIf
\State Calculate $\sigma^2$ through Theorem~\ref{the:overall_epsilon} 
\State $ $
\State \Comment{\underline{\textit{Perturbed Contractive Message Passing (from CAM).}}}
\For{$k = 0, \dots, K-1$}
    \State ${\mX}^{(k+1)} \gets\Clip(\alpha_1 \hat{A}\mX^{(k)}+\alpha_2\mathsf{Mean}(\mX^{(k)}))+\beta\mX^{(0)}$
    \State \Comment{{\color{gray}Contractive graph layer: compute node embeddings.}}
    \State ${\mX}^{(k+1)} \gets {\mX}^{(k+1)} + \mathcal{N}(\mu, (\Delta(\mathsf{CGL}))^2\sigma^2)$ 
    \State \Comment{{\color{gray}DP Perturbation.}}
    \State ${\mX}^{(k+1)}  \gets \Pi_{\mathcal{K}}({\mX}^{(k)})$ \Comment{{\color{gray}Projection with norm $1$.}}
\EndFor
\State $ $
\State \textbf{Return}: ${\mathbf{X}}^{(K)}$ 
\end{algorithmic}
\end{algorithm}

\textit{Privacy Budget Allocation.} The privacy budget is distributed across $K$ hops to ensure the total privacy loss adheres to the specified $\epsilon$ and $\delta$. The noise scale is calibrated based on the sensitivity of graph operations and the desired DP guarantees, using a noise allocation mechanism (NAM) that limits noise accumulation under Lipschitz constraints. 
The maximum allowable noise calibration (see Corollary~\ref{the:CGL_rdp}) is constrained by
$K' = \min(K, (1 + \Clip) / (1 - \Clip))$.

PCMP integrates contractive aggregation and privacy-preserving perturbation for private message passing. Noise calibration begins by determining the sensitivity of graph operations and computing the noise scale. At each hop, the CGL aggregates features while maintaining contractiveness through adjustable coefficients. Gaussian noise is added to ensure privacy, and embeddings are normalized to enforce Lipschitz constraints. This iterative process produces private embeddings suitable for downstream tasks.

\subsection{Privacy Auditing Module}   
\label{sec:privacy_veri}
Message passing mechanisms integrate graph structures by recursively aggregating and transforming node features from neighbors. Membership Inference Attacks (MIA) on graph examine the vulnerability of message passing to infer whether specific nodes or edges were part of a GNN's training set~\cite{tpsisa/OlatunjiNK21}. The adversaries exploit black-box access to the GNN, querying it with selected data and analyzing outputs (e.g., class probabilities) to infer membership.
To align theoretical privacy guarantees with practical deployments, we propose an empirical auditing module. This module simulates an adversary to evaluate GNN privacy before deployment. Extending Carlini~\etal~\cite{sp/CarliniCN0TT22}'s MIA framework to graphs, we define a graph-specific privacy auditing game (Definition~\ref{def:mia-game}) and implement two real-world attacks~\cite{sp/Wulinkteller22,tpsisa/OlatunjiNK21} for privacy verification.

\begin{definition}[Graph-based Privacy Auditing Game]
\label{def:mia-game}
The game proceeds between a challenger $\mathcal{C}$ and an privacy auditor $\mathcal{A}$:

\begin{enumerate}
    \item The challenger samples a training graph in the transductive setting (a set of subgraphs in the inductive setting) $G \leftarrow \mathbb{G}$ and trains a model $f_\theta \leftarrow \mathcal{T}(G)$ on the dataset $G$.
    \item The challenger flips a bit $b$, and if $b = 0$, samples a fresh challenge point from the distribution $(x, y) \leftarrow \mathbb{G}$ (such that $(x, y) \notin G$). Otherwise, the challenger selects a point from the training set $(x, y) \overset{\$}{\leftarrow} G$.
    \item The challenger sends $(x, y)$ to the adversary.
    \item The adversary gets query access to the distribution $\mathbb{G}$, and to the model $f_\theta$, and outputs a bit $\hat{b} \leftarrow \mathcal{A}^{\mathbb{G}, f}(x, y)$.
    \item Output 1 if $\hat{b} = b$, and 0 otherwise.
\end{enumerate}
\end{definition}

The attacker can output either a ``hard prediction,'' or  a continuous \textit{confidence score}, thresholded as a reference to yield a membership prediction. 

\subsection{Putting it Together}

As shown in Figure~\ref{img:overview}, before and after the perturbed message passing, $\gcn$ employs an encoder and a classification module (CM) to process the node features. To ensure compliance with DP guarantees, the framework utilizes standard DP-SGD during pre-training. 

Upon completing the private $K$-hop aggregations, the resulting private graph embeddings are passed to the CM. The CM integrates two key components: (1) the graph-agnostic node features $\mX^{(0)}$, which capture individual node characteristics independent of the graph structure, and (2) the private, topology-aware aggregated features $\mX'^{(K)}$, which encode structural information from the graph. This dual integration enhances the model's expressiveness while preserving privacy.

To improve classification accuracy, $\gcn$ adopts a head \mlpE\ architecture proposed by Sajadmanesh \etal~\cite{uss/sajadmanesh2023gap} as the CM. This design ensures that the CM effectively combines the information from both feature sets, enabling robust node classification. Furthermore, the CM guarantees efficient training by leveraging the graph-agnostic features $\mX^{(0)}$, ensuring a lower-bound performance even in scenarios where graph topology is unavailable.

\section{Convergent Privacy Analysis}
\label{sec:theory}

In this section, we present a convergent privacy analysis for perturbed message-passing GNNs with respect to the number of hops. 
In Appendix~\ref{sec:one-layer-privacy}, we review the standard privacy analysis for a one-layer perturbed message-passing GNN, and then illustrate that the privacy cost grows linearly with the number of layers under standard composition theorems.
We then shift our focus to a \emph{convergent} privacy analysis for perturbed GNNs whose message-passing layers are contractive. In particular, we draw upon the framework of \emph{contractive noisy iteration} (CNI) from \cite{icml/BokShifted24}, recasting the multi-layer perturbed GNN as a CNI process. Our analysis reveals that, under hidden intermediate states and contractive message-passing layers, the privacy cost converges as the number of hops increases.
Finally, we specialize this result to our proposed $\gcn$.
We show that $\gcn$'s message-passing operation is contractive, derive its sensitivity, and thereby establish concrete bounds on both edge-level and node-level differential privacy for arbitrarily many hops. All proofs are deferred to the Appendices~\ref{app:convergent-privacy},\ref{app:gcn_privacy}.

\subsection{Contractive Noisy Iteration and Convergent Privacy}
\label{sec:convergent-privacy}
Many GNNs, such as GCN, use "mean-type" aggregation, mixing a node's representation with that of its neighbors. Intuitively, iterative mixing could "amplify" privacy, but existing analyses yield only linear compositions. Our key observation is that perturbed message passing in a contractive GNN layer behaves analogously to noisy gradient descent or \emph{noisy iterative maps} \citep{nips/ChourasiaYS21,nips/AltschulerT22,siamcomp/AltschulerBT24}, where recent work has demonstrated \emph{privacy amplification via iteration}.

Below, we introduce the framework of \emph{contractive noisy iteration} (CNI) from \cite{icml/BokShifted24}, and the meta theorem proved by \cite{icml/BokShifted24} that provides a tight privacy guarantee for CNI processes.  

\begin{definition}[Contractive Noisy Iteration (CNI){\cite[Definition~3.1]{icml/BokShifted24}}]
    \label{def:CNI}
    Consider a sequence of random variables
    \begin{equation}
        \mX^{(k+1)} = \Pi_{\mathcal{K}}\!\bigl(\,\phi_{k+1}\bigl(\mX^{(k)}\bigr)+\mZ^{(k)}\bigr),
    \end{equation}
    where each map $\phi_{k}$ is Lipchitz continuous, $\mZ^{(k)} \sim \mathcal{N}\bigl(0,\,\sigma^2 \,\mathrm{I}\bigr)$ are i.i.d. Gaussian noise vectors independent of $\mX^{(k)}$, and $\Pi_{\mathcal{K}}$ is a projection operator onto a convex feasible set $\mathcal{K} \subseteq \mathbb{R}^d.$ This iterative process is called \emph{contractive noisy iteration} (CNI).
\end{definition}

A special case of CNI considered in~\cite{icml/BokShifted24} is the noisy gradient descent, where the contractive functions are the gradient update steps for some fixed loss function $f$, and the noise distribution $\xi_k$ is Gaussian.
This situation is similar to the perturbed message passing mechanism in a contractive GNN layer, where the contractive function $\phi_k\equiv \mp_G$ is the message passing operation for a fixed graph $\mG$.

The complete privacy analysis of CNI processes in \cite{icml/BokShifted24} involves concepts of \emph{trade-off functions} (Definition~\ref{def:fdp_tradeoff}) and \emph{Gaussian differential privacy} (GDP, Definition~\ref{def:GDP}), which we briefly review in Appendix~\ref{app:pre_fdp}.
GDP can be converted to the more familiar Rényi Differential Privacy (RDP) framework:

\begin{lemma}[GDP Implies RDP~{\cite[Lemma~A.4]{icml/BokShifted24}}]
    \label{lem:gdp_to_rdp}
    If a mechanism is $\mu\text{-GDP},$ then it satisfies
    $\left(\alpha, \frac{1}{2}\,\alpha\,\mu^2\right)\text{-RDP}$
    for all $\alpha > 1.$
\end{lemma}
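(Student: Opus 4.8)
The plan is to reduce the claim to a single scalar computation — the Rényi divergence of order $\alpha$ between the canonical Gaussian pair $N(0,1)$ and $N(\mu,1)$ — together with a domination argument showing that $\mu$-GDP forces every relevant divergence between $\mathcal{M}(D)$ and $\mathcal{M}(D')$ to be bounded by the corresponding divergence of that Gaussian pair. First I would recall that, by definition, $\mu$-GDP means the trade-off function satisfies $T(\mathcal{M}(D),\mathcal{M}(D')) \geq G_\mu := T\bigl(N(0,1),N(\mu,1)\bigr)$ pointwise, for every pair of adjacent datasets $D,D'$. Since $G_\mu$ is symmetric, this domination holds with the two outputs in either order, so it suffices to bound $D_\alpha\bigl(\mathcal{M}(D)\,\|\,\mathcal{M}(D')\bigr)$.

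The key step is translating trade-off-function domination into Rényi-divergence domination. I would use the standard fact that a trade-off function encodes exactly the collection of $(\epsilon,\delta)$ guarantees of the mechanism, equivalently all hockey-stick divergences $E_\gamma(P\|Q) = \sup_{S}\bigl(P(S)-\gamma Q(S)\bigr)$ for $\gamma \geq 1$, via the Legendre-type identity $E_\gamma(P\|Q) = \sup_{0\le x\le 1}\bigl(1 - f(x) - \gamma x\bigr)$ where $f = T(P,Q)$. Consequently $f \geq G_\mu$ pointwise implies $E_\gamma\bigl(\mathcal{M}(D)\|\mathcal{M}(D')\bigr) \leq E_\gamma\bigl(N(0,1)\|N(\mu,1)\bigr)$ for all $\gamma$. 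Then I would invoke the integral representation of Rényi divergence in terms of hockey-stick divergences (the moment-generating-function of the privacy loss random variable can be written as a weighted integral of $E_\gamma$ over $\gamma$), so that pointwise domination of all $E_\gamma$ yields $D_\alpha\bigl(\mathcal{M}(D)\|\mathcal{M}(D')\bigr) \leq D_\alpha\bigl(N(0,1)\|N(\mu,1)\bigr)$ for every $\alpha > 1$.

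It remains to evaluate the right-hand side. Writing $p,q$ for the densities of $N(\mu,1)$ and $N(0,1)$, the likelihood ratio is $p(x)/q(x) = \exp(\mu x - \mu^2/2)$, so
\begin{equation}
\mathbb{E}_{x\sim q}\!\left[\left(\tfrac{p(x)}{q(x)}\right)^{\!\alpha}\right] = e^{-\alpha\mu^2/2}\,\mathbb{E}_{x\sim N(0,1)}\!\left[e^{\alpha\mu x}\right] = e^{-\alpha\mu^2/2}\,e^{\alpha^2\mu^2/2} = e^{\alpha(\alpha-1)\mu^2/2},
\end{equation}
hence $D_\alpha\bigl(N(0,1)\|N(\mu,1)\bigr) = \frac{1}{\alpha-1}\cdot\frac{\alpha(\alpha-1)\mu^2}{2} = \frac{1}{2}\alpha\mu^2$. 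Combining with the domination step gives $D_\alpha\bigl(\mathcal{M}(D)\|\mathcal{M}(D')\bigr) \leq \frac{1}{2}\alpha\mu^2$ for all adjacent $D,D'$, which is exactly $\bigl(\alpha,\tfrac12\alpha\mu^2\bigr)$-RDP.

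I expect the main obstacle to be making the domination step fully rigorous rather than the Gaussian computation, which is routine. Specifically, one must be careful that (i) the trade-off function genuinely captures \emph{all} hockey-stick divergences, including the limiting behaviour at the endpoints $x \in \{0,1\}$, and (ii) the integral representation of $D_\alpha$ via $\{E_\gamma\}_{\gamma\ge 1}$ is valid for the (possibly non-dominated, non-finite) measures arising here, which may require a truncation/approximation argument or restricting to $\alpha$ where the divergence is finite. An alternative, cleaner route would be to bypass hockey-stick divergences entirely by appealing directly to a known conversion lemma from the $f$-DP literature stating that $\mu$-GDP implies $\bigl(\alpha,\tfrac12\alpha\mu^2\bigr)$-RDP; since the statement is quoted here as a cited result, I would present the self-contained divergence-domination argument as the proof and note that it recovers the cited bound.
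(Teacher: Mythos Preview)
The paper does not actually prove this lemma; it is quoted verbatim as a cited result from \cite{icml/BokShifted24} (Lemma~A.4 there) and is simply invoked as a black box in the proof of Theorem~\ref{thm:privacy_convergence_general}. So there is no ``paper's own proof'' to compare against.

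Your self-contained argument is correct. The Gaussian Rényi computation is standard and right, and the domination step via hockey-stick divergences is a valid way to pass from trade-off-function ordering to Rényi ordering. Two remarks on that step. First, the identity $E_\gamma(Q\|P)=\sup_{\alpha\in[0,1]}\bigl(1-f(\alpha)-\gamma\alpha\bigr)$ with $f=T(P,Q)$ holds for all $\gamma>0$, not only $\gamma\ge 1$, so the integral representation $\mathbb{E}_Q[(dP/dQ)^\alpha]=\alpha(\alpha-1)\int_0^\infty \gamma^{\alpha-2}E_\gamma\,d\gamma$ (obtained from integration by parts) is indeed dominated once you use the symmetry of $G_\mu$ and of the adjacency relation to get the ordering in both directions; you flagged this correctly. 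Second, a cleaner alternative that avoids the integral-representation technicalities is the Blackwell route: by the Blackwell--Sherman--Stein theorem (see Dong--Roth--Su), $T(P,Q)\ge T(P',Q')$ with $(P',Q')$ on a Polish space implies $(P,Q)=(KP',KQ')$ for some Markov kernel $K$, whence $D_\alpha(P\|Q)\le D_\alpha(P'\|Q')$ by the data-processing inequality for Rényi divergence. Applying this with $(P',Q')=(\mathcal N(0,1),\mathcal N(\mu,1))$ gives the bound in one line and sidesteps the endpoint/regularity issues you mention.
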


With this connection established, we now state the key meta theorem from \cite{icml/BokShifted24} that analyzes the CNI process and provides a tight privacy guarantee.

\begin{theorem}[Meta Theorem on CNI~{\citep[Theorem~C.5]{icml/BokShifted24}}]
    \label{thm:privacy_convergence_CNI}
    Let $\{\mX^{(k)}\}_{k=0}^K$ and $\{\mX'^{(k)}\}_{k=0}^K$ represent the output of CNI processes,
    $$
        \begin{aligned}
             & \text{CNI}(\mX^{(0)}, \{\phi_k\}_{k=1}^K, \{\Ncal(0, \sigma^2 \mI_d)\}_{k=1}^K, \gK), \text{ and} \\
             & \text{CNI}(\mX^{(0)}, \{\phi_k'\}_{k=1}^K, \{\Ncal(0, \sigma^2 \mI_d)\}_{k=1}^K, \gK).
        \end{aligned}
    $$
    Assume that:
    \begin{itemize}
        \item $\phi_1$ and $\phi'_1$ are Lipschitz continuous,
        \item each $\phi_k, \phi_k'$ is $\gamma$-Lipschitz, with $\gamma < 1$ for $k = 2, \dots, K$,
        \item $\|\phi_k(x) - \phi_k'(x)\| \leq s$ for all $x$ and $k = 1, \dots, K$,
    \end{itemize}
    Then the tradeoff function $T(\mX^{(K)}, \mX'^{(K)})$ satisfies,
    $$
        T(\mX^{(K)}, \mX'^{(K)}) \geq T(\gN(0, 1), \gN(\mu^{(K)}, 1)),
    $$
    where,
    $$\mu^{(K)} = \sqrt{\frac{1-\gamma^K}{1+\gamma^K}\frac{1+\gamma}{1-\gamma}}\frac{s}{\sigma}.$$
\end{theorem}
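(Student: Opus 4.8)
The proof (following \citet{icml/BokShifted24}) treats the two $K$-layer perturbed GNNs as two coupled CNI processes sharing the starting point $\mX^{(0)}$ and the noise vectors $\{\mZ^{(k)}\}$, differing only through the maps $\phi_k$ versus $\phi_k'$. The plan is to track, inductively in the layer index, a \emph{shifted divergence} $D^{(z)}(\mX^{(k)}\,\|\,\mX'^{(k)})$ — the divergence one obtains after being allowed to transport $\mX^{(k)}$ by Wasserstein-$\infty$ distance at most $z$ for free — and to convert the accumulated transport budget into a Gaussian-DP parameter. Two elementary facts do the work: a \emph{shift-reduction} step (adding the common noise $\Ncal(0,\sigma^2\mI_d)$ lets one spend a shift of size $a$ in exchange for a privacy cost of order $a/\sigma$ in the GDP/Rényi sense), and a \emph{contraction} step (applying the common $\gamma$-Lipschitz map $\phi_k$ shrinks any remaining shift by the factor $\gamma$). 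The projection $\pk$ is $1$-Lipschitz onto a convex set, so it never increases distances and can be absorbed into the contraction step for free; the extra hypothesis that $\phi_1,\phi_1'$ are merely Lipschitz (not contractive) just handles the first layer, where no amplification has yet accrued.

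Concretely, going from layer $k$ to $k+1$ one writes $\mX^{(k+1)}=\pk(\phi_{k+1}(\mX^{(k)})+\mZ^{(k)})$, bounds $\|\phi_{k+1}(\mX^{(k)})-\phi_{k+1}'(\mX'^{(k)})\|\le \gamma\|\mX^{(k)}-\mX'^{(k)}\|+s$ by the triangle inequality together with the $\gamma$-Lipschitz and $s$-closeness assumptions, and then splits this drift into a part that is ``cashed in'' immediately against the fresh noise (contributing to $\mu$) and a residual part that is carried forward and damped by $\gamma$ at the next contraction. Unrolling over all $K$ layers expresses the final trade-off function as $T(\Ncal(0,1),\Ncal(\mu^{(K)},1))$, i.e.\ $\mu^{(K)}$-GDP, where $(\mu^{(K)})^2$ equals $(s/\sigma)^2$ times the value of a small optimization over the ``cash-in versus carry-forward'' schedule — essentially minimizing a sum of squared shifts subject to a geometric coupling constraint of ratio $\gamma$. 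Solving this optimization (the optimizer is geometric up to a boundary correction at the final layer) yields $(\mu^{(K)})^2\,\sigma^2/s^2=\tfrac{(1-\gamma^K)(1+\gamma)}{(1+\gamma^K)(1-\gamma)}$; as sanity checks, $K=1$ gives $\mu^{(1)}=s/\sigma$ (a single Gaussian mechanism of sensitivity $s$), and $K\to\infty$ gives the finite limit $\mu^{(\infty)}=\sqrt{(1+\gamma)/(1-\gamma)}\,s/\sigma$, which is the origin of the convergence claim.

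The main obstacle is sharpness of the constant, not convergence per se: a crude argument that views each layer as an independent Gaussian mechanism on shifted inputs and composes in Rényi DP already gives a bound bounded in $K$, but with a looser prefactor than $\tfrac{(1-\gamma^K)(1+\gamma)}{(1+\gamma^K)(1-\gamma)}$. Recovering the stated form requires (i) carrying the entire analysis in the $f$-DP / trade-off-function language so the per-layer guarantees compose without the slack of Rényi composition, and (ii) the \emph{shifted interpolated process}, which interpolates between the two CNIs so that the sensitivity $s$ injected at \emph{every} one of the $K$ layers is accounted for jointly rather than summed layer by layer — a naive union-style accounting would contribute an extra factor growing with $K$ and would destroy convergence. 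Once the interpolation and the exact shift-schedule optimization are in place, the displayed bound follows; converting the resulting $\mu^{(K)}$-GDP statement into the RDP form used elsewhere in the paper is then immediate from Lemma~\ref{lem:gdp_to_rdp}.
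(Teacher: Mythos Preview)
Your proposal correctly names the key ingredients --- the shifted interpolated process, working in $f$-DP rather than R\'enyi, and optimizing a per-layer ``cash-in'' schedule --- and your sanity checks at $K=1$ and $K\to\infty$ are right. But the concrete mechanism you sketch in the first two paragraphs is the \emph{shifted R\'enyi divergence} technique (tracking $D^{(z)}(\mX^{(k)}\|\mX'^{(k)})$ via shift-reduction and contraction lemmas, \`a la Feldman et al.\ and Altschuler--Talwar), which --- as you yourself note in paragraph three --- yields a convergent but looser prefactor. The paper's proof never touches shifted divergences: it constructs the interpolated sequence explicitly as
\[
\widetilde{\mX}^{(k+1)}=\pk\bigl(\lambda_{k+1}\phi_{k+1}(\mX^{(k)})+(1-\lambda_{k+1})\phi_{k+1}'(\widetilde{\mX}^{(k)})+\mZ^{(k+1)}\bigr),
\]
sharing the noise $\mZ^{(k+1)}$ with the $\{\mX^{(k)}\}$ process, tracks the \emph{deterministic} gap $z_k:=\|\mX^{(k)}-\widetilde{\mX}^{(k)}\|$ via the recursion $z_0=0$, $z_{k+1}=(1-\lambda_{k+1})(c_{k+1}z_k+s)$, and applies the one-step tradeoff lemma (Lemma~\ref{lem:one-step-tradeoff}) iteratively to obtain $T(\mX^{(K)},\mX'^{(K)})\ge \bigotimes_{k=1}^K G(a_k/\sigma)=G\bigl(\sigma^{-1}\sqrt{\sum_k a_k^2}\bigr)$ with $a_k=\lambda_k(c_kz_{k-1}+s)$. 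Your ``optimization over the cash-in versus carry-forward schedule'' is exactly the choice of $\{\lambda_k\}$; the paper simply plugs in the closed-form minimizer from \citet[Lemma~C.6]{icml/BokShifted24}. Finally, your explanation of why $\phi_1$ need only be Lipschitz (``no amplification has yet accrued'') is the right intuition but not the precise mechanism: since $z_0=0$, one has $a_1=\lambda_1(c_1\cdot 0+s)=\lambda_1 s$ independent of $c_1$, and this single observation is the paper's sole addition to the original \citet{icml/BokShifted24} argument.
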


\begin{remark}
The theorem above slightly generalizes the original result from \cite{icml/BokShifted24} by relaxing the Lipschitz condition to require $\gamma < 1$ only for $k \geq 2$ rather than for all iterations. This relaxation is critical for analyzing our $\gcn$ architecture, where the first message-passing layer includes the residual term $\beta \mX^{(0)}$, potentially making it non-contractive while subsequent layers remain contractive. The proof extends the original argument by carefully tracking the influence of the first layer on the privacy guarantee. For completeness, we provide the full proof in Appendices~\ref{app:convergent-privacy},\ref{app:gcn_privacy}.
\end{remark}

By utilize the  above meta theorem and property of Gaussian tradeoff function, we can derive the privacy guarantee for a $K$-layer perturbed message-passing GNN with contractive message passing layers.

\begin{theorem}[{\cite[Theorem~4.2]{icml/BokShifted24}} adapted for GNNs]
    \label{thm:privacy_convergence_general} 
    Let $\mp$ be a message passing mechanism of a GNN such that the message passing operator $\mp_G$ for any graph $\mG$ is contractive with Lipschitz constant $\gamma<1$ for layers $k\geq 2$.
    Assume the sensitivity of $\mp$ is $\Delta(\mp)$ and the noise scale is $\sigma$.
    Then, a $K$-layer perturbed message passing GNN with $\mp$ satisfies,
    $$\left(\alpha, \frac{1}{2}\alpha \frac{\Delta^2(\mp)}{\sigma^2} \frac{1-\gamma^K}{1+\gamma^K}\frac{1+\gamma}{1-\gamma}\right)\text{-RDP},$$
    which is equivalent to
    $\left(\frac{1}{2}\alpha \frac{\Delta^2(\mp)}{\sigma^2} \frac{1-\gamma^K}{1+\gamma^K}\frac{1+\gamma}{1-\gamma} + \frac{\log(1/\delta)}{\alpha - 1}, \delta\right)\text{-DP}.$
\end{theorem}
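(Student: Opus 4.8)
The plan is to recognize a $K$-layer perturbed message-passing GNN as a contractive noisy iteration and then carry the resulting trade-off bound through the GDP$\,\to\,$RDP$\,\to\,$DP conversions already recorded in the excerpt. Fix any pair of neighboring graphs $\mG,\mG'$, and from the common input $\mX^{(0)}=\mX$ and i.i.d.\ Gaussian noise $\mZ^{(k)}\sim\Ncal(0,\sigma^2\mathrm{I})$ generate the two sequences $\{\mX^{(k)}\}_{k=0}^K$ and $\{\mX'^{(k)}\}_{k=0}^K$ defined by Equation~\ref{equ:mp_sequence}. By construction these are exactly the outputs of $\text{CNI}\bigl(\mX^{(0)},\{\mp_{\mG}\}_{k=1}^K,\{\Ncal(0,\sigma^2\mathrm{I})\}_{k=1}^K,\gK\bigr)$ and $\text{CNI}\bigl(\mX^{(0)},\{\mp_{\mG'}\}_{k=1}^K,\{\Ncal(0,\sigma^2\mathrm{I})\}_{k=1}^K,\gK\bigr)$ in the sense of Definition~\ref{def:CNI}, since $\Pi_\gK$ is a projection onto the convex set $\{\,x:\|x_u\|_2\le1\text{ for all }u\,\}$ (a product of unit balls).

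Next I would verify the three hypotheses of the Meta Theorem (Theorem~\ref{thm:privacy_convergence_CNI}) with $\phi_k\equiv\mp_{\mG}$ and $\phi_k'\equiv\mp_{\mG'}$ for $k=1,\dots,K$: (i) $\phi_1=\mp_{\mG}$ and $\phi_1'=\mp_{\mG'}$ are Lipschitz continuous — true because $\mp_{\mG}$ is a Lipschitz map (for $\gcn$ both its Lipschitz and contraction constants are quantified in Proposition~\ref{prop:lip}); (ii) for $k\ge2$ each of $\phi_k,\phi_k'$ is $\gamma$-Lipschitz with $\gamma<1$ — this is exactly the standing hypothesis that $\mp_{\mG}$ is contractive with constant $\gamma<1$ for layers $k\ge2$; (iii) $\|\phi_k(x)-\phi_k'(x)\|=\|\mp_{\mG}(x)-\mp_{\mG'}(x)\|\le\Delta(\mp)=:s$ for every feature matrix $x\in\gK$ and every $k$ — this is immediate from the definition of sensitivity in Equation~\ref{equ:sensitivity}, once one observes that the Frobenius norm there coincides with the Euclidean norm of the CNI analysis after vectorizing the $|\mV|\times d$ feature matrices. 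Applying Theorem~\ref{thm:privacy_convergence_CNI} then yields $T(\mX^{(K)},\mX'^{(K)})\ge T\bigl(\Ncal(0,1),\Ncal(\mu^{(K)},1)\bigr)$ with $\mu^{(K)}=\sqrt{\tfrac{1-\gamma^K}{1+\gamma^K}\tfrac{1+\gamma}{1-\gamma}}\,\tfrac{\Delta(\mp)}{\sigma}$.

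Because this trade-off bound holds for every pair of neighboring graphs and the mechanism releases only the final iterate $\mX^{(K)}$ (all intermediate states hidden, as fixed in the security model of Section~\ref{sec:security_model}), the $K$-layer perturbed message-passing GNN is $\mu^{(K)}$-GDP by Definition~\ref{def:GDP}. Lemma~\ref{lem:gdp_to_rdp} then gives $\bigl(\alpha,\tfrac{1}{2}\alpha(\mu^{(K)})^2\bigr)$-RDP for every $\alpha>1$; substituting $(\mu^{(K)})^2=\tfrac{\Delta^2(\mp)}{\sigma^2}\tfrac{1-\gamma^K}{1+\gamma^K}\tfrac{1+\gamma}{1-\gamma}$ produces the claimed RDP parameter, and the equivalent $(\epsilon,\delta)$-DP form follows from the same RDP-to-DP conversion used in Proposition~\ref{prop:one-layer-privacy}, i.e.\ $(\alpha,\epsilon')$-RDP implies $(\epsilon'+\tfrac{\log(1/\delta)}{\alpha-1},\delta)$-DP.

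The step that needs the most care is not any algebra but matching the CNI hypotheses to our setting, in particular the asymmetric role of the first layer: in $\gcn$ the residual $\beta\mX^{(0)}$ makes the map $\mX^{(0)}\mapsto\mp_{\mG}(\mX^{(0)})$ only Lipschitz, not contractive, so the original form of the statement in \citet{icml/BokShifted24}, which assumes every layer is $\gamma$-Lipschitz, does not apply verbatim. This is precisely why we rely on the relaxed Meta Theorem (Theorem~\ref{thm:privacy_convergence_CNI}), whose proof — the genuine technical content, carried out in Appendix~\ref{sec:proofs_convergence_privacy} — tracks how a single non-contractive initial step enters the shifted-divergence recursion without spoiling the convergence of $\mu^{(K)}$ as $K\to\infty$. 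Granting that theorem, the present result is a direct specialization; one should merely double-check that the i.i.d.\ fixed-variance Gaussian noise and the convexity of $\gK$ are exactly the structural ingredients the CNI framework demands, so that no extra assumptions creep in.
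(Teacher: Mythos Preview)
Your proposal is correct and follows essentially the same approach as the paper: cast the $K$-layer perturbed message passing on neighboring graphs as two CNI processes, verify the Lipschitz/contractivity/sensitivity hypotheses of the meta theorem (Theorem~\ref{thm:privacy_convergence_CNI}), read off the Gaussian trade-off bound, and then convert $\mu^{(K)}$-GDP to RDP via Lemma~\ref{lem:gdp_to_rdp} and to $(\epsilon,\delta)$-DP via the standard RDP-to-DP conversion. Your additional commentary on the Frobenius-norm identification and the special role of the first layer is accurate and more explicit than the paper's terse proof, but the argument is the same.
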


The upshot is that as $K \to \infty$, the privacy parameter converges to $\frac{1}{2}\alpha \frac{\Delta^2(\mp)}{\sigma^2} \frac{1+\gamma}{1-\gamma}$, rather than growing unbounded with $K$ as in standard composition. This result enables meaningful privacy guarantees even for deep GNNs with many message-passing layers.

\subsection{Edge- and Node-Level Privacy of $\gcn$}
\label{sec:gcn_privacy}

We now specialize to the $\gcn$ architecture and establish both edge-level and node-level DP guarantees. Recall the \emph{contractive graph layer (CGL)} of $\gcn$:
\begin{equation*}
    \text{CGL:} \quad \quad   \mX^{(k+1)} = \Clip(\alpha_1 \hat{\mA}\mX^{(k)}+\alpha_2\mathsf{Mean}(\mX^{(k)}))+\beta \mX^{(0)},
\end{equation*}
where $0\leq \Clip < 1$, and hyper-parameters, $\alpha_1, \alpha_2, \beta> 0, \alpha_1 + \alpha_2 = 1$.
In order to establish the privacy guarantees of $\gcn$, we need to determine the sensitivity and the Lipschitz constant of the message passing layer $\mathsf{CGL}$.

\begin{proposition}
    \label{prop:lip}
    The message passing layer $\mathsf{CGL}$ of $\gcn$ is contractive with Lipschitz constant $\Clip<1$ with respect to the input $\mX^{(k)}$, for any $k\geq 2$.
\end{proposition}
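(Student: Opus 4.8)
The plan is to show that the map $\mX^{(k)}\mapsto \Clip\bigl(\alpha_1\hat{\mA}\mX^{(k)}+\alpha_2\mathsf{Mean}(\mX^{(k)})\bigr)+\beta\mX^{(0)}$ is $\Clip$-Lipschitz in $\mX^{(k)}$ with respect to the Frobenius norm. The starting observation is that, for $k\geq 2$, the term $\beta\mX^{(0)}$ is a constant offset independent of $\mX^{(k)}$, so it cancels when we difference two inputs and does not affect the Lipschitz constant; this is exactly why the proposition restricts to $k\geq 2$. Hence it suffices to bound the operator norm of the linear map $\mX\mapsto \Clip\bigl(\alpha_1\hat{\mA}\mX+\alpha_2\mathsf{Mean}(\mX)\bigr)$ acting on $\Rbb^{|\mV|\times d}$ equipped with the Frobenius norm.

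Next I would factor out $\Clip$ and analyze the linear operator $L(\mX)=\alpha_1\hat{\mA}\mX+\alpha_2\mathsf{Mean}(\mX)$. Writing $\mathsf{Mean}(\mX)=\frac{1}{|\mV|}\vone\vone^\top\mX$ (broadcasting the mean row to all rows), we get $L(\mX)=\bigl(\alpha_1\hat{\mA}+\alpha_2\tfrac{1}{|\mV|}\vone\vone^\top\bigr)\mX$, i.e.\ left-multiplication by the matrix $\mM:=\alpha_1\hat{\mA}+\alpha_2\tfrac{1}{|\mV|}\vone\vone^\top$. For left-multiplication, the Frobenius operator norm equals the spectral norm $\|\mM\|_2$, so I need $\|\mM\|_2\leq 1$. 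The key facts are: (i) $\hat{\mA}=\mD^{-1/2}(\mA+\mI)\mD^{-1/2}$ is symmetric with spectral radius at most $1$ (its eigenvalues lie in $[-1,1]$, a standard property of the normalized adjacency with self-loops); (ii) $\tfrac{1}{|\mV|}\vone\vone^\top$ is an orthogonal projection, hence symmetric with spectral norm $1$; (iii) both are symmetric, so $\|\mM\|_2\leq \alpha_1\|\hat{\mA}\|_2+\alpha_2\|\tfrac{1}{|\mV|}\vone\vone^\top\|_2\leq \alpha_1+\alpha_2=1$ by the triangle inequality for the spectral norm together with $\alpha_1+\alpha_2=1$. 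Therefore $\|L(\mX)-L(\mY)\|_F=\|\mM(\mX-\mY)\|_F\leq \|\mX-\mY\|_F$, and multiplying by $\Clip$ gives the contraction bound with constant $\Clip<1$.

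Putting the pieces together: for $k\geq 2$ and any $\mX,\mY$,
\[
\|\mathsf{CGL}(\mX)-\mathsf{CGL}(\mY)\|_F=\Clip\,\|\mM(\mX-\mY)\|_F\leq \Clip\,\|\mX-\mY\|_F,
\]
which is the claimed contractiveness with Lipschitz constant $\Clip<1$. I would also remark that the projection $\Pi_{\gK}$ appearing in the perturbed iteration is nonexpansive (projection onto a convex set), so composing with it does not increase the Lipschitz constant; this is not strictly needed for the statement as phrased (which is about $\mathsf{CGL}$ itself) but is worth noting for the downstream application in Theorem~\ref{thm:privacy_convergence_general}.

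The main obstacle is establishing fact (i), $\|\hat{\mA}\|_2\leq 1$: one must verify that the symmetrically normalized adjacency matrix \emph{with self-loops} has all eigenvalues in $[-1,1]$. This follows because $\mI-\hat{\mA}$ is the normalized Laplacian of the self-loop-augmented graph, which is positive semidefinite with eigenvalues in $[0,2]$; equivalently $\hat{\mA}=\mD^{-1/2}(\mA+\mI)\mD^{-1/2}$ is similar to the random-walk matrix $\mD^{-1}(\mA+\mI)$, a row-stochastic matrix, whose spectral radius is $1$. Either route requires a short argument, and care must be taken that the degree matrix $\mD$ is the degree matrix of $\mA+\mI$ (so that rows of $\mD^{-1}(\mA+\mI)$ genuinely sum to $1$ and no division-by-zero occurs). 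Everything else is a routine norm estimate.
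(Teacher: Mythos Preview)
Your proof is correct and follows essentially the same approach as the paper: cancel the constant offset $\beta\mX^{(0)}$ for $k\geq 2$, then bound the remaining linear map via the triangle inequality using $\|\hat{\mA}\|_2\leq 1$, $\|\mathsf{Mean}\|_2\leq 1$, and $\alpha_1+\alpha_2=1$. Your version is more detailed—explicitly writing $\mathsf{Mean}$ as $\tfrac{1}{|\mV|}\vone\vone^\top$ and justifying the spectral bound on $\hat{\mA}$—whereas the paper simply asserts that the operator norms of $\hat{\mA}$ and $\mathsf{Mean}$ are bounded by $1$.
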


The sensitivity of the message passing layer $\mathsf{CGL}$ regarding edge-level and node-level adjacency graphs is determined as follows.
\begin{theorem}[Edge-level Sensitivity of $\mathsf{CGL}$]
    \label{thm:edge-sensitivity}
    Let $\mG$ be a graph and $D_{\min}$ be the minimum node degree of $\mG$.
    The edge level sensitivity $\Delta_e$ of the message passing layer $\mathsf{CGL}$ is
    \begin{equation}
    \label{equ:sensivity_edge}
        \begin{aligned}
            \Delta_e(\mathsf{CGL}) = \sqrt{2}\Clip\alpha_1\Big( & \frac{1}{(D_{\min} +1) (D_{\min} + 2)}+ \frac{C(D_{\min})}{\sqrt{D_{\min}+1}} \\
            & + \frac{1}{\sqrt{D_{\min}+2}\sqrt{D_{\min}+1}}\Big),
        \end{aligned}
    \end{equation}
    where $C(D_{\min})$ is a piecewise function defined as
    \begin{equation}
        C(D_{\min}) = \begin{cases}
            \frac{D_{\min}}{\sqrt{D_{\min}+1}} - \frac{D_{\min}}{\sqrt{D_{\min}+2}} & D_{\min} > 3          \\
            (\frac{3}{\sqrt{4}} - \frac{3}{\sqrt{5}})                               & 1\leq D_{\min} \leq 3 \\
        \end{cases}
        \label{eq:C_Dim_1}
    \end{equation}
\end{theorem}

Intuitively, the edge-level sensitivity $\Delta_e$ will be smaller if the minimum degree $D_{\min}$ is larger or the Lipschitz constant $\Clip$ is smaller.
The node level sensitivity $\Delta_n$ of the message passing layer $\mathsf{CGL}$ is determined as follows.
\begin{theorem}[Node-level Sensitivity of $\mathsf{CGL}$]
    \label{thm:node-sensitivity}
    Let $\mG$ be a graph and $D_{\max}$ be the maximum node degree of $\mG$.
    The node level sensitivity $\Delta_n$ of the message passing layer $\mathsf{CGL}$ is
    \begin{equation}
    \label{equ:sensivity_node}
        \begin{aligned}
            &\Delta_n(\mathsf{CGL}) =  1 + \quad \alpha_2 \Clip\frac{2|V|}{|V|+1}  +\alpha_1 \Clip \left(\frac{\sqrt{D_{\max}}}{(D_{\min} +1) (D_{\min} + 2)} \right.\\
            & \qquad \left. + \frac{C(D_{\min})\sqrt{D_{\max}}}{\sqrt{D_{\min}+1}} + \frac{1}{\sqrt{D_{\min}+2}}\right)
        \end{aligned}
    \end{equation}
    where $C(D_{\min})$ is defined as in \eqref{eq:C_Dim_1}.
\end{theorem}

From the above, we can see that the node-level sensitivity $\Delta_n$ will be smaller if the maximum degree $D_{\max}$ is smaller, the minimum degree $D_{\min}$ is larger, or the Lipschitz constant $\Clip$ is smaller.

Then as a consequence of the contractive result in Proposition~\ref{prop:lip} and the sensitivity results in Theorem~\ref{thm:edge-sensitivity} and Theorem~\ref{thm:node-sensitivity}, we can apply Theorem~\ref{thm:privacy_convergence_general} to derive the following privacy guarantee (Corollary~\ref{the:CGL_rdp}) for the message-passing layer $\mathsf{CGL}$ of $\gcn$.
Specifically, for Lipchitz constant $0.8$, a $10$-layer GNN realizes  $(\alpha,{3.62\alpha\Delta^2}/{\sigma^2})$-RDP (versus $(\alpha,{5\alpha\Delta^2}/{\sigma^2})$-RDP for \gap), meanwhile $2$-layer GNN is $(\alpha,{0.99\alpha\Delta^2}/{\sigma^2})$-RDP (versus $(\alpha,{\alpha\Delta^2}/{\sigma^2})$-RDP for \gap).
That is, more privacy costs are saved as the number of layer increases.

\begin{corollary}[DP Guarantees for $\mathsf{CGL}$ layers]
    \label{the:CGL_rdp}
    Let $\mG$ be a graph and $K$ be the number of hops ($\mathsf{CGL}$ layers) in $\gcn$.
    Then the $K$-hop message passing of $\gcn$ satisfies:

    $$\left(\alpha, \frac{\alpha}{2}\frac{\Delta^2}{\sigma^2} \min\left\{K, \frac{1-\Clip^K}{1+\Clip^K}\frac{1+\Clip}{1-\Clip}\right\}\right)\text{-RDP},$$
    
    where $\Delta = \Delta_e(\mathsf{CGL})$ from Theorem~\ref{thm:edge-sensitivity} for edge-level privacy, or $\Delta = \Delta_n(\mathsf{CGL})$ from Theorem~\ref{thm:node-sensitivity} for node-level privacy. In particular, as $K\to \infty$, it converges to
    $\left(\alpha, \frac{\alpha}{2}\frac{\Delta^2}{\sigma^2} \frac{1+\Clip}{1-\Clip}\right)\text{-RDP}.$
    Also, the RDP guarantees imply the following DP guarantees:
    $$\left(\frac{\alpha}{2}\frac{\Delta^2}{\sigma^2} \min\left\{K, \frac{1-\Clip^K}{1+\Clip^K}\frac{1+\Clip}{1-\Clip}\right\} + \frac{\log(1/\delta)}{\alpha - 1}, \delta\right)\text{-DP}.$$
\end{corollary}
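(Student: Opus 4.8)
The plan is to combine the two privacy bounds we already have at our disposal and take the better (smaller) of the two, then translate from RDP to $(\epsilon,\delta)$-DP using the standard conversion. First I would invoke Corollary~\ref{cor:k-layer-privacy} to obtain the linear-in-$K$ bound: the $K$-hop perturbed message passing of $\gcn$ with sensitivity $\Delta$ and noise scale $\sigma$ satisfies $\bigl(\alpha,\tfrac{K\alpha\Delta^2}{2\sigma^2}\bigr)$-RDP. Here $\Delta = \Delta_e(\mathsf{CGL})$ or $\Delta = \Delta_n(\mathsf{CGL})$ according to the adjacency notion, as established in Theorem~\ref{thm:edge-sensitivity} and Theorem~\ref{thm:node-sensitivity} respectively; these theorems give exactly the uniform bound $\|\mathsf{CGL}_{\mG}(\mX) - \mathsf{CGL}_{\mG'}(\mX)\|_F \le \Delta$ required for the composition argument and for the CNI argument below.

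Next I would invoke the convergent bound. By Proposition~\ref{prop:lip}, the map $\mathsf{CGL}_{\mG}$ is $\Clip$-Lipschitz in $\mX^{(k)}$ for every $k\ge 2$, with $\Clip<1$; the first layer need not be contractive because of the residual term $\beta\mX^{(0)}$, but that is precisely the case handled by the generalized Theorem~\ref{thm:privacy_convergence_CNI} (cf. the Remark), and hence by Theorem~\ref{thm:privacy_convergence_general} with $\gamma = \Clip$. Applying Theorem~\ref{thm:privacy_convergence_general} with this $\gamma$ and with sensitivity $\Delta$ yields $\bigl(\alpha,\tfrac{\alpha}{2}\tfrac{\Delta^2}{\sigma^2}\tfrac{1-\Clip^K}{1+\Clip^K}\tfrac{1+\Clip}{1-\Clip}\bigr)$-RDP. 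Since a mechanism that is $(\alpha,\epsilon_1)$-RDP and also $(\alpha,\epsilon_2)$-RDP is in particular $(\alpha,\min\{\epsilon_1,\epsilon_2\})$-RDP, combining the two bounds gives the stated $\bigl(\alpha,\tfrac{\alpha}{2}\tfrac{\Delta^2}{\sigma^2}\min\{K,\tfrac{1-\Clip^K}{1+\Clip^K}\tfrac{1+\Clip}{1-\Clip}\}\bigr)$-RDP. The $K\to\infty$ limit then follows from $\Clip^K\to 0$, which sends $\tfrac{1-\Clip^K}{1+\Clip^K}\to 1$ and leaves $\tfrac{1+\Clip}{1-\Clip}$; for large $K$ this is the smaller of the two arguments, so the convergent term dominates. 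Finally, the RDP-to-DP conversion (the same step used in Proposition~\ref{prop:one-layer-privacy}, i.e. an $(\alpha,\epsilon)$-RDP mechanism is $(\epsilon + \tfrac{\log(1/\delta)}{\alpha-1},\delta)$-DP) gives the displayed $(\epsilon,\delta)$-DP statement.

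The main obstacle is not in this corollary per se — once Proposition~\ref{prop:lip}, Theorem~\ref{thm:edge-sensitivity}, Theorem~\ref{thm:node-sensitivity}, and Theorem~\ref{thm:privacy_convergence_general} are in hand, the corollary is essentially bookkeeping (take the min of two valid RDP bounds, then convert). The real work lies upstream: verifying that the contractiveness hypothesis of Theorem~\ref{thm:privacy_convergence_general} is met with $\gamma=\Clip$ despite the non-contractive first layer (handled by the relaxed Theorem~\ref{thm:privacy_convergence_CNI}), and that the sensitivity used in both the composition bound and the CNI bound is the \emph{same} quantity $\Delta$, so that the two RDP curves are directly comparable and the minimum is legitimate. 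One subtlety worth stating explicitly is that $\mathsf{CGL}$ as written in Equation~\ref{equ:contractlayer} and the perturbed sequence $\mX^{(k+1)} = \Pi_{\mathcal{K}}(\mathsf{CGL}_{\mG}(\mX^{(k)}) + \mZ^{(k)})$ fit the CNI template of Definition~\ref{def:CNI} with $\phi_k \equiv \mathsf{CGL}_{\mG}$ and $\phi_k' \equiv \mathsf{CGL}_{\mG'}$; I would note this identification at the start so that both Corollary~\ref{cor:k-layer-privacy} and Theorem~\ref{thm:privacy_convergence_general} apply verbatim.
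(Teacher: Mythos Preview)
Your proposal is correct and follows essentially the same approach as the paper, which simply states that the result follows from plugging the sensitivity bounds into Theorem~\ref{thm:privacy_convergence_general}. In fact, your version is more complete: the paper's one-line proof does not explicitly say where the $K$ branch of the $\min$ comes from, whereas you correctly identify it as the composition bound from Corollary~\ref{cor:k-layer-privacy} and justify taking the minimum of two valid RDP parameters.
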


For a complete analysis, we integrate the node-level privacy guarantees into an overall DP bound for the entire training process of $\gcn$. Specifically, we assume that the private DAE and CM satisfy $(\alpha, \epsilon_{\DAE}(\alpha))$-RDP and $(\alpha, \epsilon_{\CM}(\alpha))$-RDP, respectively.
By privacy composition, $\gcn$'s overall privacy budget $\epsilon_{\gcn}$ is then derived as shown in Theorem~\ref{the:overall_epsilon}.
The terms $\epsilon_{\DAE}(\alpha)$ and $\epsilon_{\CM}(\alpha)$ individually quantify the privacy contributions from the DAE and CM modules, while the remaining aggregation term accounts for node-level privacy loss during $K$-hop message passing.
Finally, the $\frac{\log(1 / \delta)}{\alpha - 1}$ term incorporates the privacy failure probability $\delta$, ensuring conventional privacy guarantees across the inference process.

\begin{theorem}[$\gcn$'s Privacy Guarantee]
    \label{the:overall_epsilon}
    For any $\alpha > 1$, let DAE training and CM training satisfy $(\alpha, \epsilon_{\DAE}(\alpha))$-RDP and $(\alpha, \epsilon_{\CM}(\alpha))$-RDP, respectively. Then, for the maximum hop $K \geq 1$, privacy failure probability $0 < \delta < 1$, Lipschitz constant $0 < \Clip < 1$, and noise variance $\sigma^2$, $\gcn$ satisfies $(\epsilon_{\gcn}, \delta)$-DP, where
       $\epsilon_{\gcn} =
         \epsilon_{\DAE}(\alpha) + \epsilon_{\CM}(\alpha) + \frac{\alpha}{2}\frac{\Delta^2}{\sigma^2} \min\left\{K, \frac{1-\Clip^K}{1+\Clip^K}\frac{1+\Clip}{1-\Clip}\right\} + \frac{\log(1/\delta)}{\alpha - 1}.$
    Here, $\Delta = \Delta_e(\mathsf{CGL})$ from Theorem~\ref{thm:edge-sensitivity} for edge-level privacy, or $\Delta = \Delta_n(\mathsf{CGL})$ from Theorem~\ref{thm:node-sensitivity} for node-level privacy.
\end{theorem}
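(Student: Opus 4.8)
The plan is to treat the full $\gcn$ procedure as an adaptive composition of three randomized mechanisms and combine their R\'enyi DP guarantees, then convert back to $(\epsilon,\delta)$-DP. Concretely, write $\Mcal_{\gcn} = \Mcal_{\CM}\circ\Mcal_{\mp}\circ\Mcal_{\DAE}$, where $\Mcal_{\DAE}$ is the DP-SGD pre-training of the encoder, $\Mcal_{\mp}$ is the $K$-hop perturbed contractive message passing of Algorithm~\ref{alg:PLMA} that takes the released encoder as auxiliary input and outputs \emph{only} the final embeddings $\mX^{(K)}$, and $\Mcal_{\CM}$ is the DP-SGD training of the classification module, which takes $\mX^{(0)}$ together with the released $\mX^{(K)}$ as auxiliary input. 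Since each stage consumes only the outputs of earlier stages, and the internal embeddings $\mX^{(1)},\dots,\mX^{(K-1)}$ are never exposed (the security model of \S\ref{sec:security_model}), this is a genuine adaptive composition, and R\'enyi DP is closed under both post-processing and adaptive composition.

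First I would fix an arbitrary order $\alpha>1$ and collect the per-stage $(\alpha,\cdot)$-RDP bounds at that same order: $\Mcal_{\DAE}$ is $(\alpha,\epsilon_{\DAE}(\alpha))$-RDP and $\Mcal_{\CM}$ is $(\alpha,\epsilon_{\CM}(\alpha))$-RDP by hypothesis, while $\Mcal_{\mp}$ is $\bigl(\alpha,\ \tfrac{\alpha}{2}\tfrac{\Delta^2}{\sigma^2}\min\{K,\tfrac{1-\Clip^K}{1+\Clip^K}\tfrac{1+\Clip}{1-\Clip}\}\bigr)$-RDP by Corollary~\ref{the:CGL_rdp} (itself a consequence of the contractiveness bound in Proposition~\ref{prop:lip}, the sensitivity bounds in Theorems~\ref{thm:edge-sensitivity} and~\ref{thm:node-sensitivity}, and the convergent-privacy meta result Theorem~\ref{thm:privacy_convergence_general}), where $\Delta=\Delta_e(\mathsf{CGL})$ for edge-level adjacency and $\Delta=\Delta_n(\mathsf{CGL})$ for node-level adjacency. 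Then I would apply the composition theorem of R\'enyi DP (Theorem~\ref{the:rdp_composition}, in its adaptive form) to obtain that $\Mcal_{\gcn}$ is $\bigl(\alpha,\ \epsilon_{\DAE}(\alpha)+\epsilon_{\CM}(\alpha)+\tfrac{\alpha}{2}\tfrac{\Delta^2}{\sigma^2}\min\{K,\tfrac{1-\Clip^K}{1+\Clip^K}\tfrac{1+\Clip}{1-\Clip}\}\bigr)$-RDP. Finally, the standard conversion from R\'enyi DP to $(\epsilon,\delta)$-DP~\cite{csf/mironov2017renyi} — the same implication already used in Proposition~\ref{prop:one-layer-privacy} and Corollary~\ref{the:CGL_rdp}, turning $(\alpha,\epsilon')$-RDP into $(\epsilon'+\tfrac{\log(1/\delta)}{\alpha-1},\delta)$-DP for any $0<\delta<1$ — gives precisely the stated $(\epsilon_{\gcn},\delta)$-DP bound, with the $\tfrac{\log(1/\delta)}{\alpha-1}$ term being the conversion slack. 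As $\alpha>1$ was arbitrary, one could additionally take an infimum over $\alpha$ to tighten $\epsilon_{\gcn}$, though the statement is phrased for a fixed $\alpha$.

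The main thing to get right here is not a hard estimate but the modeling bookkeeping that makes the composition legitimate: (i) all three RDP guarantees must be stated with respect to the \emph{same} neighboring relation — edge-level (Definition~\ref{def::edp}) or node-level (Definition~\ref{def::ndp}) — so that $\Delta$ and $\epsilon_{\DAE},\epsilon_{\CM}$ are measured against one consistent adjacency; (ii) the claim that $\Mcal_{\mp}$ is a single mechanism outputting only $\mX^{(K)}$ relies on the hidden-state assumption of \S\ref{sec:security_model}, which is exactly what licenses the CNI-based bound of Corollary~\ref{the:CGL_rdp} instead of the looser $O(K)$ composition bound; and (iii) the downstream consumption of $\mX^{(K)}$ by $\Mcal_{\CM}$ and any later classification or link-prediction head must be treated as post-processing combined with adaptive composition, so that no privacy beyond $\epsilon_{\CM}(\alpha)$ is charged there. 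Once these points are pinned down, the theorem follows immediately by chaining the per-module RDP bounds through RDP composition and the RDP-to-DP conversion.
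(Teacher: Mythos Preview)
Your proposal is correct and follows exactly the paper's approach: the paper's proof is a single line stating that ``the result comes from directly applying the composition theorem of DP to different modules of $\gcn$,'' and you have simply spelled out this composition-then-conversion argument in full detail, including the bookkeeping (same adjacency relation, hidden-state assumption for the CNI bound, adaptive composition/post-processing) that the paper leaves implicit.
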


\begin{table*}[!ht]
\small
    \caption{Comparison of Classification Accuracy for EDP and NDP. The \colorbox{colorbest}{\color{colortext}{best accuracy}} and the \colorbox{colorsecond}{\color{colortext}second-best accuracy} are highlighted, respectively. The symbol \greenup\ represents that the best accuracy improves the second-best accuracy by more than $10\%$. The symbol \reddown\  represents the accuracy less than $55\%$, close to random guess on the chain-structured datasets.
    }\label{tab:reulst_table_overall_acc_top1}
    \centering
    \setlength\tabcolsep{14pt}
    \adjustbox{max width=\textwidth}{
    \begin{tabular}{l|c|c|c|c|c|c|c|c|c|c}
    \toprule[1pt]     & \textbf{Dataset}   &  \textbf{Computers} &  \textbf{Facebook}&    \textbf{PubMed} &  \textbf{Cora}&  \textbf{Photo} & \textbf{Chain-S}  & \textbf{Chain-M}& \textbf{Chain-L} &  \textbf{Chain-X}\\
    \cmidrule[1pt]{1-11}
      \multicolumn{11}{c|}{\textbf{EDP}} \\\cmidrule[1pt]{1-11}
  \multirow{4}{*}{$\epsilon=1$}   &    $\gcn$ &  \cellcolor{colorbest}\color{colortext}$92.0\%$&  \cellcolor{colorbest}\color{colortext}$74.0\%\tbspace$&  \cellcolor{colorsecond}\color{colortext}$88.2\%$&  \cellcolor{colorbest}\color{colortext}$85.1\%\tbspace$&  \cellcolor{colorbest}\color{colortext}$95.8\%$&  \cellcolor{colorbest}\color{colortext}$84.4\%$\ \greenup&  \cellcolor{colorbest}\color{colortext}$82.5\%$\ \greenup&  \cellcolor{colorbest}\color{colortext}$70.0\%\tbspace$&  \cellcolor{colorsecond}\color{colortext}$66.0\%\tbspace$\\
    &\dpdgc    & \cellcolor{colorsecond}\color{colortext}$88.0\%$ & $60.6\%\tbspace$& \cellcolor{colorbest}\color{colortext}$88.3\%$&$75.5\%\tbspace$&$92.5\%$& $43.8\%$\ \reddown   & $50.0\%$ \reddown
     &$51.7\%$ \reddown& $39.0\%$ \reddown\\
     &\gap  &  $87.2\%$&  \cellcolor{colorsecond}\color{colortext}$68.5\% \tbspace$&  $87.4\%$&  \cellcolor{colorsecond}\color{colortext}$77.1\%\tbspace$&  \cellcolor{colorsecond}\color{colortext}$93.0\%$&  \cellcolor{colorsecond}\color{colortext}$65.6\% \tbspace$&  \cellcolor{colorsecond}\color{colortext}$67.5\% \tbspace$&  \cellcolor{colorsecond}\color{colortext}$61.7\% \tbspace$&  \cellcolor{colorbest}\color{colortext}$67.0\%\tbspace$\\
   & \sage     &$77.8\%$ & $48.2\%$ \reddown&$85.0\%$&$60.0\%\tbspace$&$82.4\%$ &  $53.1\%$\ \reddown & $45.0\%$ \reddown
     &$53.3\%$ \reddown &$51.0\%$ \reddown\\
    \cmidrule[1pt]{1-11}
      \multirow{4}{*}{$\epsilon=2$}   &    $\gcn$  &  \cellcolor{colorbest}\color{colortext}$92.1\%$&  \cellcolor{colorbest}\color{colortext}$73.8\%\tbspace$&  \cellcolor{colorbest}\color{colortext}$89.1\%$&  \cellcolor{colorbest}\color{colortext}$86.9\%\tbspace$&  \cellcolor{colorbest}\color{colortext}$95.9\%$&  \cellcolor{colorbest}\color{colortext}$90.6\%$ \greenup&  \cellcolor{colorbest}\color{colortext}$82.5\%$ \greenup&  \cellcolor{colorbest}\color{colortext}$73.3\%$ \greenup&  \cellcolor{colorbest}\color{colortext}$68.0\% \tbspace$\\
    &\dpdgc    & $88.2\%$ & $66.7\% \tbspace$&\cellcolor{colorsecond}\color{colortext}$88.2\%$& \cellcolor{colorsecond}\color{colortext}$77.5\% \tbspace$&$93.3\%$ & $43.8\%$ \reddown  & $50.0\%$ \reddown
     &$51.7\%$ \reddown &$39.0\%$ \reddown\\
     &\gap  &  \cellcolor{colorsecond}\color{colortext}$88.3\%$&  \cellcolor{colorsecond}\color{colortext}$71.7\% \tbspace$&  $87.5\%$&  \cellcolor{colorsecond}\color{colortext}$77.5\%\tbspace$&  \cellcolor{colorsecond}\color{colortext}$93.4\%$&  \cellcolor{colorsecond}\color{colortext}$65.6\% \tbspace$&  \cellcolor{colorsecond}\color{colortext}$67.5\% \tbspace$&  \cellcolor{colorsecond}\color{colortext}$61.7\% \tbspace$&  \cellcolor{colorbest}\color{colortext}$68.0\% \tbspace$\\
   & \sage   &$76.1\%$ & $48.1\%$ \reddown &$84.6\%$&$60.1\%\tbspace$&$82.4\%$  &  $43.8\%$ \reddown  & $45.0\%$ \reddown
     &$53.3\%$ \reddown&$51.0\%$ \reddown \\
     \cmidrule[1pt]{1-11}
       \multirow{4}{*}{$\epsilon=4$}   &    $\gcn$  &  \cellcolor{colorbest}\color{colortext}$92.2\%$&  \cellcolor{colorbest}\color{colortext}$74.0\% \tbspace$ &  \cellcolor{colorbest}\color{colortext}$89.5\%$&  \cellcolor{colorbest}\color{colortext}$87.3\%$ \greenup&  \cellcolor{colorbest}\color{colortext}$95.9\%$&  \cellcolor{colorbest}\color{colortext}$90.6\%$ \greenup&  \cellcolor{colorbest}\color{colortext}$82.5\%$ \greenup&  \cellcolor{colorbest}\color{colortext}$73.3\%$ \greenup&  \cellcolor{colorbest}\color{colortext}$69.0\% \tbspace$\\
    &\dpdgc    & \cellcolor{colorsecond}\color{colortext}$88.9\%$ & \cellcolor{colorsecond}\color{colortext}$73.3\% \tbspace$ &\cellcolor{colorsecond}\color{colortext}$88.4\%$&$75.1\% \tbspace$& \cellcolor{colorsecond}\color{colortext}$94.2\%$& $43.8\%$ \reddown  & $50.0\%$ \reddown
     & $51.7\%$ \reddown&$39.0\%$ \reddown\\
     &\gap    &  $87.2\%$&  $68.5\% \tbspace$&  $87.4\%$&  \cellcolor{colorsecond}\color{colortext}$77.1\% \tbspace$&  $93.0\%$&  \cellcolor{colorsecond}\color{colortext}$65.6\% \tbspace$&  \cellcolor{colorsecond}\color{colortext}$67.5\%\tbspace$&  \cellcolor{colorsecond}\color{colortext}$61.7\%\tbspace$&  \cellcolor{colorsecond}\color{colortext}$67.0\%\tbspace$\\
   & \sage   & $79.1\%$ & $50.3\%$ \reddown&$85.8\%$& $63.3\%\tbspace$& $85.7\%$ &  $50.0\%$ \reddown  & $47.5\%$ \reddown
     &$51.7\%$ \reddown &$54.0\%$ \reddown\\
   \cmidrule[1pt]{1-11} 
      \multirow{4}{*}{$\epsilon=8$}   &    $\gcn$  &  \cellcolor{colorbest}\color{colortext}$92.2\%$&  $74.4\%\tbspace$&  \cellcolor{colorbest}\color{colortext}$89.8\%$&  \cellcolor{colorbest}\color{colortext}$88.4\%\tbspace$&  \cellcolor{colorbest}\color{colortext}$96.0\%$&  \cellcolor{colorbest}\color{colortext}$93.8\%$ \greenup&  \cellcolor{colorbest}\color{colortext}$82.5\%$ \greenup&  \cellcolor{colorbest}\color{colortext}$75.0\%$ \greenup&  \cellcolor{colorbest}\color{colortext}$70.0\%\tbspace$\\
    &\dpdgc    & $89.4\%$ & \cellcolor{colorbest}\color{colortext}$78.6\%\tbspace$ &\cellcolor{colorsecond}\color{colortext}$88.6\%$&$76.0\%\tbspace$&\cellcolor{colorsecond}\color{colortext}$94.6\%$&  $43.8\%$ \reddown & $50.0\%$ \reddown
     &$51.7\%$ \reddown &$39.0\%$ \reddown\\
     &\gap  &  \cellcolor{colorsecond}\color{colortext}$90.1\%$&  $75.0\%\tbspace$&  $88.1\%$&  \cellcolor{colorsecond}\color{colortext}$79.0\%\tbspace$&  \cellcolor{colorsecond}\color{colortext}$94.6\%$  &  \cellcolor{colorsecond}\color{colortext}$65.6\%\tbspace$&  \cellcolor{colorsecond}\color{colortext}$60.0\%\tbspace$&  \cellcolor{colorsecond}\color{colortext}$61.7\%\tbspace$&  \cellcolor{colorsecond}\color{colortext}$67.0\%\tbspace$\\
   & \sage   & $87.9\%$ & \cellcolor{colorsecond}\color{colortext}$75.6\%\tbspace$&$84.8\%$&$ 75.7\%\tbspace$& $92.2\%$  &  $43.8\%$ \reddown  & $47.5\%$ \reddown
     &$51.7\%$ \reddown &$61.0\%\tbspace$\\
     \cmidrule[1pt]{1-11}\multirow{4}{*}{$\epsilon=16$}   &    $\gcn$ &  \cellcolor{colorbest}\color{colortext}$92.2\%$&  $74.5\%\tbspace$&  \cellcolor{colorbest}\color{colortext}$89.8\%$&  \cellcolor{colorbest}\color{colortext}$88.6\%\tbspace$&  \cellcolor{colorbest}\color{colortext}$96.0\%$ &  \cellcolor{colorbest}\color{colortext}$93.8\%$&  \cellcolor{colorbest}\color{colortext}$85.0\%$ \greenup&  \cellcolor{colorbest}\color{colortext}$73.3\%$ \greenup&  \cellcolor{colorbest}\color{colortext}$73.0\%\tbspace$\\
    &\dpdgc    & $90.4\%$ &  \cellcolor{colorbest}\color{colortext}$81.2\%\tbspace$&\cellcolor{colorsecond}\color{colortext}$88.9\%$&$77.7\%\tbspace$&$93.8\%$& $43.8\%$ \reddown  & $50.0\%$ \reddown
     &$51.7\%$ \reddown &$40.0\%$ \reddown\\
     &\gap  &  $90.1\%$&  $76.0\%\tbspace$&  $88.5\%$&  $81.7\%\tbspace$&  \cellcolor{colorsecond}\color{colortext}$94.6\%$ &  \cellcolor{colorsecond}\color{colortext}$62.5\%\tbspace$&  \cellcolor{colorsecond}\color{colortext}$70.0\%\tbspace$&  \cellcolor{colorsecond}\color{colortext}$61.7\%\tbspace$&  \cellcolor{colorsecond}\color{colortext}$68.0\%\tbspace$\\
   & \sage   & \cellcolor{colorsecond}\color{colortext}$90.9\%$ &\cellcolor{colorsecond}\color{colortext}$79.5\%\tbspace$&$87.6\%$&\cellcolor{colorsecond}\color{colortext}$84.7\%\tbspace$& $94.1\%$  &  $43.8\%$ \reddown  & $47.5\%$ \reddown
     & $51.7\%$ \reddown& $51.0\%$ \reddown\\
       \cmidrule[1pt]{1-11} \multirow{4}{*}{$\epsilon=32$}   &    $\gcn$ &  \cellcolor{colorbest}\color{colortext}$92.2\%$&  $74.0\%\tbspace$&  \cellcolor{colorbest}\color{colortext}$89.9\%$&  \cellcolor{colorbest}\color{colortext}$88.6\%\tbspace$&  \cellcolor{colorbest}\color{colortext}$95.9\%$&  \cellcolor{colorbest}\color{colortext}$93.8\%$ \greenup&  \cellcolor{colorbest}\color{colortext}$82.5\%$ \greenup&  \cellcolor{colorbest}\color{colortext}$78.3\%$ \greenup&  \cellcolor{colorbest}\color{colortext}$73.0\%\tbspace$\\
    &\dpdgc   & \cellcolor{colorsecond}\color{colortext}$91.3\%$ & \cellcolor{colorbest}\color{colortext}$82.9\%\tbspace$ &\cellcolor{colorsecond}\color{colortext}$88.8\%$&$79.9\%\tbspace$&$94.3\%$ & $43.8\%$ \reddown  & $50.0\%$ \reddown
     & $51.7\%$ \reddown&$40.0\%$ \reddown\\
     &\gap    &  $90.6\%$&  $77.0\%\tbspace$&  \cellcolor{colorsecond}\color{colortext}$88.8\%$&  $82.5\%\tbspace$&  \cellcolor{colorsecond}\color{colortext}$94.8\%$&  \cellcolor{colorsecond}\color{colortext}$59.4\%\tbspace$&  \cellcolor{colorsecond}\color{colortext}$65.0\%\tbspace$&  \cellcolor{colorsecond}\color{colortext}$61.7\%\tbspace$&  \cellcolor{colorsecond}\color{colortext}$68.0\%\tbspace$\\
   & \sage   & $90.6\%$ & \cellcolor{colorsecond}\color{colortext}$79.9\%\tbspace$ &$86.9\%$&\cellcolor{colorsecond}\color{colortext}$85.2\%\tbspace$&$94.4\%$ &  $43.8\%$ \reddown & $47.5\%$ \reddown
     & $51.7\%$ \reddown&$51.0\%$ \reddown\\
     \cmidrule[1pt]{1-11} 
          \multicolumn{11}{c|}{\textbf{NDP (max node degree $=20$)}} \\\cmidrule[1pt]{1-11}
   \multirow{5}{*}{$\epsilon=1$}     
    & $\gcn$ & \cellcolor{colorbest}\color{colortext}$91.91\%$ \greenup & \cellcolor{colorbest}\color{colortext}$64.47\%$ \greenup & \cellcolor{colorsecond}\color{colortext}$72.94\%\tbspace$ & \cellcolor{colorbest}\color{colortext}$80.81\%$ \greenup & \cellcolor{colorbest}\color{colortext}$94.83\%$ \greenup & \cellcolor{colorbest}\color{colortext}$78.12\%$ \greenup & \cellcolor{colorbest}\color{colortext}$85.00\%$ \greenup & \cellcolor{colorbest}\color{colortext}$68.33\%$ \greenup & \cellcolor{colorbest}\color{colortext}$66.00\%\tbspace$ \\
    & \dpdgc & $56.72\%\tbspace$ & $39.09\%$ \reddown & $59.12\%\tbspace$ & $33.58\%$ \reddown & $46.39\%$ \reddown & $58.06\%\tbspace$ & $57.50\%\tbspace$ & $57.63\%\tbspace$ & $54.55\%$ \reddown \\
    & \gap & $36.71\%$ \reddown & $35.07\%$ \reddown & $55.06\%\tbspace$ & $33.95\%$ \reddown & $30.88\%$ \reddown & \cellcolor{colorsecond}\color{colortext}$65.62\%\tbspace$ & $55.0\%$ \reddown & \cellcolor{colorsecond}\color{colortext}$58.33\%\tbspace$ & \cellcolor{colorsecond}\color{colortext}$59.00\%\tbspace$ \\
    & \sage & $29.51\%$ \reddown & $20.73\%$ \reddown & $39.51\%$ \reddown & $18.45\%$ \reddown & $21.07\%$ \reddown & $59.38\%\tbspace$ & \cellcolor{colorsecond}\color{colortext}$60.00\%$ & $56.67\%\tbspace$ & $55.00\%$ \reddown \\
     & \mlpE & \cellcolor{colorsecond}\color{colortext}$63.44\%\tbspace$ & \cellcolor{colorsecond}\color{colortext}$45.39\%$ \reddown& \cellcolor{colorbest}\color{colortext}$73.52\%\tbspace$ & \cellcolor{colorsecond}\color{colortext}$34.13\%$ \reddown& \cellcolor{colorsecond}\color{colortext}$59.44\%\tbspace$ & $56.25\%\tbspace$ & $55.0\%$ \reddown & $55.0\%$ \reddown & $51.0\%$ \reddown \\
    \cmidrule[1pt]{1-11}
  \multirow{5}{*}{$\epsilon=2$}     
    & $\gcn$ & \cellcolor{colorbest}\color{colortext}$92.24\%$ \greenup & \cellcolor{colorbest}\color{colortext}$68.10\%$ \greenup& \cellcolor{colorsecond}\color{colortext}$79.86\%$ \greenup & \cellcolor{colorbest}\color{colortext}$83.39\%$ \greenup & \cellcolor{colorbest}\color{colortext}$95.10\%$ \greenup & \cellcolor{colorbest}\color{colortext}$84.38\%$ \greenup & \cellcolor{colorbest}\color{colortext}$85.00\%\tbspace$  & \cellcolor{colorbest}\color{colortext}$70.00\%$ \greenup & \cellcolor{colorbest}\color{colortext}$65.00\%\tbspace$ \\
    & \dpdgc & $66.07\%\tbspace$ & $45.37\%$ \reddown & $68.60\%\tbspace$ & $31.92\%$ \reddown & $57.92\%\tbspace$ & $58.06\%\tbspace$ & $57.50\%\tbspace$ & $57.63\%\tbspace$ & $54.55\%$ \reddown \\
    & \gap & $47.92\%$ \reddown & $39.97\%$ \reddown & $67.82\%\tbspace$ & $33.39\%$ \reddown & $36.05\%$ \reddown & \cellcolor{colorsecond}\color{colortext}$65.62\%\tbspace$ & $55.00\%$ \reddown & \cellcolor{colorsecond}\color{colortext}$58.33\%\tbspace$ & \cellcolor{colorsecond}\color{colortext}$59.00\%\tbspace$ \\
    & \sage & $34.30\%$ \reddown & $21.43\%$ \reddown & $39.72\%$ \reddown & $19.93\%$ \reddown & $23.46\%$ \reddown & $59.38\%\tbspace$ &  \cellcolor{colorsecond}\color{colortext}$60.00\%\tbspace$ & $56.67\%\tbspace$ & $55.00\%$ \reddown \\
       & \mlpE & \cellcolor{colorsecond}\color{colortext}$69.56\%\tbspace$ & \cellcolor{colorsecond}\color{colortext}$47.95\%$ \reddown & \cellcolor{colorbest}\color{colortext}$80.6\%\tbspace$ & \cellcolor{colorsecond}\color{colortext}$37.82\%$ \reddown & \cellcolor{colorsecond}\color{colortext}$72.96\%\tbspace$ & $56.25\%\tbspace$ & $55.0\%$ \reddown & $55.0\%$ \reddown & $51.0\%$ \reddown \\
    \cmidrule[1pt]{1-11}
  \multirow{5}{*}{$\epsilon=4$}     
    & $\gcn$ & \cellcolor{colorbest}\color{colortext}$92.24\%$ \greenup & \cellcolor{colorbest}\color{colortext}$70.83\%$ \greenup & \cellcolor{colorbest}\color{colortext}$84.66\%\tbspace$ & \cellcolor{colorbest}\color{colortext}$85.61\%$ \greenup & \cellcolor{colorbest}\color{colortext}$95.36\%$ \greenup & \cellcolor{colorbest}\color{colortext}$90.62\%$ \greenup & \cellcolor{colorbest}\color{colortext}$82.50\%\tbspace$  & \cellcolor{colorbest}\color{colortext}$71.67\%$ \greenup & \cellcolor{colorbest}\color{colortext}$69.00\%$ \greenup  \\
    & \dpdgc & $72.35\%\tbspace$ & $48.81\%$ \reddown & $79.71\%\tbspace$ & $32.10\%$ \reddown & $73.49\%\tbspace$ & $58.06\%\tbspace$ & $57.50\%\tbspace$ & $57.63\%\tbspace$ & $54.55\%$ \reddown \\
    & \gap & $61.84\%\tbspace$ & $47.02\%$ \reddown & $79.33\%\tbspace$ & $33.39\%$ \reddown & $45.33\%$ \reddown & \cellcolor{colorsecond}\color{colortext}$65.62\%\tbspace$ & $55.00\%$ \reddown & \cellcolor{colorsecond}\color{colortext}$58.33\%\tbspace$ &  \cellcolor{colorsecond}\color{colortext}$59.00\%\tbspace$ \\
    & \sage & $36.12\%$ \reddown & $23.01\%$ \reddown & $40.30\%$ \reddown & $21.59\%$ \reddown & $25.78\%$ \reddown & $59.38\%\tbspace$ & \cellcolor{colorsecond}\color{colortext}$60.00\%\tbspace$ & $56.67\%\tbspace$ & $55.00\%$ \reddown \\
        & \mlpE & \cellcolor{colorsecond}\color{colortext}$73.94\%\tbspace$ & \cellcolor{colorsecond}\color{colortext}$49.2\%$ \reddown& \cellcolor{colorsecond}\color{colortext}$83.13\%\tbspace$ & \cellcolor{colorsecond}\color{colortext}$50.18\%$ \reddown& \cellcolor{colorsecond}\color{colortext}$78.99\%\tbspace$ & $56.25\%\tbspace$ & $55.0\%$ \reddown & $55.0\%$ \reddown & $51.0\%$ \reddown\\
    \cmidrule[1pt]{1-11}
  \multirow{5}{*}{$\epsilon=8$}     
    & $\gcn$ & \cellcolor{colorbest}\color{colortext}$92.39\%$ \greenup & \cellcolor{colorbest}\color{colortext}$72.21\%$ \greenup & \cellcolor{colorbest}\color{colortext}$87.12\%\tbspace$ & \cellcolor{colorbest}\color{colortext}$87.08\%$ \greenup & \cellcolor{colorbest}\color{colortext}$95.29\%$ \greenup & \cellcolor{colorbest}\color{colortext}$90.62\%$ \greenup & \cellcolor{colorbest}\color{colortext}$87.50\%\tbspace$  & \cellcolor{colorbest}\color{colortext}$73.33\%$ \greenup & \cellcolor{colorbest}\color{colortext}$69.00\%$ \greenup  \\
    & \dpdgc & $76.50\%\tbspace$ & $49.64\%$ \reddown & $83.39\%\tbspace$ & $43.54\%$ \reddown & $79.66\%\tbspace$ & $58.06\%\tbspace$ & $57.50\%\tbspace$ & $57.63\%\tbspace$ & \cellcolor{colorsecond}\color{colortext}$59.60\%\tbspace$ \\
    & \gap & $68.52\%\tbspace$ & $48.33\%$ \reddown & $82.35\%\tbspace$ & $31.55\%$ \reddown & $68.46\%\tbspace$ & \cellcolor{colorsecond}\color{colortext}$65.62\%\tbspace$ & $55.00\%$ \reddown & \cellcolor{colorsecond}\color{colortext}$58.33\%\tbspace$ & $59.00\%\tbspace$ \\
    & \sage & $37.38\%$ \reddown & $24.53\%$ \reddown & $42.07\%$ \reddown & $26.38\%$ \reddown & $28.36\%$ \reddown & $59.38\%\tbspace$ & \cellcolor{colorsecond}\color{colortext}$60.00\%\tbspace$ & $56.67\%\tbspace$ & $55.00\%$ \reddown \\
      & \mlpE & \cellcolor{colorsecond}\color{colortext}$77.65\%\tbspace$ & \cellcolor{colorsecond}\color{colortext}$49.96\%$ \reddown& \cellcolor{colorsecond}\color{colortext}$84.66\%\tbspace$ & \cellcolor{colorsecond}\color{colortext}$61.81\%\tbspace$ & \cellcolor{colorsecond}\color{colortext}$83.96\%\tbspace$ & $56.25\%\tbspace$ & $55.0\%$ \reddown & $55.0\%$ \reddown& $51.0\%$ \reddown\\
    \cmidrule[1pt]{1-11}
  \multirow{5}{*}{$\epsilon=16$}     
    & $\gcn$ & \cellcolor{colorbest}\color{colortext}$92.43\%$ \greenup & \cellcolor{colorbest}\color{colortext}$72.29\%$ \greenup & \cellcolor{colorbest}\color{colortext}$88.41\%\tbspace$ & \cellcolor{colorbest}\color{colortext}$88.01\%$ \greenup & \cellcolor{colorbest}\color{colortext}$95.29\%\tbspace$  & \cellcolor{colorbest}\color{colortext}$90.62\%$ \greenup & \cellcolor{colorbest}\color{colortext}$82.50\%$ \greenup & \cellcolor{colorbest}\color{colortext}$75.00\%$ \greenup & \cellcolor{colorbest}\color{colortext}$70.00\%$ \greenup \\
    & \dpdgc & $78.47\%\tbspace$ & \cellcolor{colorsecond}\color{colortext}$50.85\%$ \reddown & $85.32\%\tbspace$ & $56.09\%\tbspace$ & $83.23\%\tbspace$ & $58.06\%\tbspace$ & $57.50\%\tbspace$ & $57.63\%\tbspace$ & \cellcolor{colorsecond}\color{colortext}$59.60\%\tbspace$ \\
    & \gap & $73.94\%\tbspace$ & $49.98\%$ \reddown & $83.67\%\tbspace$ & $37.45\%$ \reddown & $76.74\%\tbspace$ & \cellcolor{colorsecond}\color{colortext}$65.62\%\tbspace$ & $55.00\%$ \reddown & \cellcolor{colorsecond}\color{colortext}$58.33\%\tbspace$ & $59.00\%\tbspace$ \\
    & \sage & $40.13\%$ \reddown & $26.88\%$ \reddown & $44.33\%$ \reddown & $27.68\%$ \reddown & $33.47\%$ \reddown & $59.38\%\tbspace$ & \cellcolor{colorsecond}\color{colortext}$60.00\%\tbspace$ & $56.67\%\tbspace$ & $55.00\%$ \reddown \\
    & \mlpE & \cellcolor{colorsecond}\color{colortext}$80.25\%\tbspace$ & $49.92\%$ \reddown& \cellcolor{colorsecond}\color{colortext}$85.57\%\tbspace$ & \cellcolor{colorsecond}\color{colortext}$67.53\%\tbspace$ & \cellcolor{colorsecond}\color{colortext}$87.34\%\tbspace$ & $56.25\%\tbspace$ & $55.0\%$ \reddown & $55.0\%$ \reddown & $51.0\%$ \reddown\\
    \cmidrule[1pt]{1-11}
  \multirow{5}{*}{$\epsilon=32$}     
    & $\gcn$ & \cellcolor{colorbest}\color{colortext}$92.39\%$ \greenup & \cellcolor{colorbest}\color{colortext}$73.33\%$ \greenup & \cellcolor{colorbest}\color{colortext}$88.92\%\tbspace$ & \cellcolor{colorbest}\color{colortext}$87.82\%$ \greenup & \cellcolor{colorbest}\color{colortext}$95.36\%\tbspace$ & \cellcolor{colorbest}\color{colortext}$87.50\%$ \greenup & \cellcolor{colorbest}\color{colortext}$82.50\%$ \greenup & \cellcolor{colorbest}\color{colortext}$75.00\%$ \greenup & \cellcolor{colorbest}\color{colortext}$70.00\%$ \greenup \\
    & \dpdgc & $81.33\%\tbspace$ & \cellcolor{colorsecond}\color{colortext}$51.19\%$ \reddown & \cellcolor{colorsecond}\color{colortext}$85.90\%\tbspace$ & $64.02\%\tbspace$ & $86.94\%\tbspace$ & $58.06\%\tbspace$ & $57.50\%\tbspace$ & $57.63\%\tbspace$ & $58.59\%\tbspace$ \\
    & \gap & $76.95\%\tbspace$ & $50.64\%$ \reddown & $85.04\%\tbspace$ & $57.01\%\tbspace$ & $80.25\%\tbspace$ & \cellcolor{colorsecond}\color{colortext}$65.62\%\tbspace$ & $55.00\%$ \reddown & \cellcolor{colorsecond}\color{colortext}$58.33\%\tbspace$ & \cellcolor{colorsecond}\color{colortext}$59.00\%\tbspace$ \\
    & \sage & $46.40\%$ \reddown & $31.30\%$ \reddown & $47.53\%$ \reddown & $29.89\%$ \reddown & $41.68\%$ \reddown & $59.38\%\tbspace$ & \cellcolor{colorsecond}\color{colortext}$60.00\%\tbspace$ & $56.67\%\tbspace$ & $55.00\%$ \reddown \\
    & \mlpE & \cellcolor{colorsecond}\color{colortext}$81.89\%\tbspace$ & $50.27\%$ \reddown & $85.87\%\tbspace$ & \cellcolor{colorsecond}\color{colortext}$70.85\%\tbspace$ & \cellcolor{colorsecond}\color{colortext}$89.07\%\tbspace$ & $56.25\%\tbspace$ & $55.0\%$ \reddown& $55.0\%$ \reddown& $51.0\%$ \reddown \\
         \cmidrule[1pt]{1-11} 
          \multicolumn{11}{c}{\textbf{Non-Private}} \\
         \cmidrule[1pt]{1-11} 
        &    $\gcn$  
     &  \cellcolor{colorsecond}\color{colortext}$92.4\%\tbspace$&  $79.0\%\tbspace$&  \cellcolor{colorbest}\color{colortext}$89.9\%\tbspace$&  \cellcolor{colorbest}\color{colortext}$89.3\%\tbspace$&  \cellcolor{colorsecond}\color{colortext}$96.0\%\tbspace$&  \cellcolor{colorbest}\color{colortext}$100.0\%\tbspace$&  \cellcolor{colorbest}\color{colortext}$100.0\%\tbspace$&  \cellcolor{colorbest}\color{colortext}$100.0\%\tbspace$&  \cellcolor{colorbest}\color{colortext}$100.0\%\tbspace$\\
   &\dpdgc   & \cellcolor{colorbest}\color{colortext}$92.8\%\tbspace$ & \cellcolor{colorbest}\color{colortext}$86.4\%\tbspace$ &$88.1\%\tbspace$&$83.9\%\tbspace$& \cellcolor{colorbest}\color{colortext}$96.2\%\tbspace$  & $59.4\%\tbspace$  & $77.5\%\tbspace$ 
     &$63.3\%\tbspace$ &$73.0\%\tbspace$\\
 Plain     &\gap    &  $91.0\%\tbspace$&  $79.0\%\tbspace$&  \cellcolor{colorsecond}\color{colortext}$89.3\%\tbspace$&  \cellcolor{colorsecond}\color{colortext}$85.2\%\tbspace$&  $95.5\%\tbspace$&  \cellcolor{colorbest}\color{colortext}$100.0\%\tbspace$&  \cellcolor{colorbest}\color{colortext}$100.0\%\tbspace$&  \cellcolor{colorbest}\color{colortext}$100.0\%\tbspace$&  \cellcolor{colorbest}\color{colortext}$100.0\%\tbspace$\\
  $\epsilon=\infty$  & \sage    & $91.6\%\tbspace$ & \cellcolor{colorsecond}\color{colortext}$79.7\%\tbspace$&$87.0\%\tbspace$&$82.1\%\tbspace$&$94.2\%\tbspace$ &$59.4\%\tbspace$    & $55.0\%$ \reddown
     & $60.0\%\tbspace$&$58.0\%\tbspace$ \\
   & \mlpE & $85.82\%\tbspace$ & $51.35\%$ \reddown & $87.45\%\tbspace$ & $75.83\%\tbspace$ & $91.98\%\tbspace$ & $59.38\%\tbspace$ & $55.0\%$ \reddown & $53.33\%$ \reddown & $51.0\%$ \reddown \\
    \bottomrule
    \end{tabular}
    }
\end{table*}


\section{Experimental Evaluation}
\label{sec:exp}

The empirical evaluation includes privacy-utility trade-offs, privacy audits, ablation studies of hops and hyper-parameters, and computational overhead.

\textit{Datasets.}
$\gcn$ was tested over nine graph datasets.
Five of the datasets have been broadly used to evaluate GNN methods, including \photo\ and \computers\ \citep{corr/abs-1811-05868}, \cora\ and \pubmed\ \citep{sen2008collective}, \facebook\ \citep{corr/abs-1102-2166}. We also adjusted the synthetic chain-structured dataset developed under \cite{nips/GuC0SG20} into various scales, termed \chainS, \chainM, \chainL\ and \chainX.
The chain-structured dataset has been used to understand the relations between privacy/utility and hops, as described in Section~\ref{subsec:gap-chain}. 
It is considered as an important benchmark to assist the development of long-range interaction GNNs by the ML community~\cite{nips/DwivediRGPWLB22}.
More details on datasets, model configurations, and privacy configurations are specialized in Appendix~\ref{app:dataset}.

\textit{Baselines.}
We compare $\gcn$ with multiple baselines~\cite{nips/chien2023differentially,uss/sajadmanesh2023gap,sp/0011L0022,arxiv/DaigavaneNode21,ccs/KolluriBHS22} about edge-level private GNNs and vanilla GNNs. 
To our best knowledge, \gap~\cite{uss/sajadmanesh2023gap} and \dpdgc~\cite{nips/chien2023differentially}  are the strongest baselines for perturbed message passing under Gaussian mechanisms in both
edge/node-level DP GNNs. 
In addition, we consider another research direction, i.e.,  first perturbing the graph through randomized response and then training GNNs over the perturbed graph~\sage~\cite{sp/0011L0022}. 
For a comprehensive evaluation, we adopt both research lines of works as our baselines. 
\mlpE\ is a baseline commonly compared with GNNs to demonstrate how well GNNs utilize graph structures, as it provides a reference counterpart to which GNNs learn the only node features without graph topology. 
More details of the baseline methods are provided in Appendix~\ref{app:baseline}.

\textit{Computational overhead.} We elaborate the overhead analysis, including latency and memory, in Appendix~\ref{app:overhead}.

\subsection{Trade-offs of Privacy and Accuracy}
\label{subsec:tradeoff}

We first compare  $\gcn$ with the baseline methods on all datasets for their downstream classification tasks and report top GNN model utility of both EDP and NDP.
We run each model $3$ times for each group of hyper-parameters, reporting the top classification accuracy in Table~\ref{tab:reulst_table_overall_acc_top1} and
the mean accuracy over the 3 runs in Table~\ref{tab:reulst_table_overall_acc_mean} of Appendix~\ref{app:endp_supple}.
For the experiments about NDP, we set the max node degree $D_{\max}$  to $20$, following the experiment setup of \gap.
We leave more reference results (e.g., different maximum node degrees) in Appendix~\ref{app:endp_supple}.

Regarding the standard graph datasets, for \computers, \pubmed, \cora\ and \photo, $\gcn$ can outperform all the other baselines in most cases with varying $\epsilon$, no matter for EDP or NDP. 
As established in Theorem~\ref{thm:node-sensitivity}, NDP requires injecting more noise compared to EDP under the same privacy budget, hence, the accuracy of NDP is often lower than EDP for standard datasets. 
In particular, $\gcn$ is the only framework that can surpass \mlpE\  of most cases in NDP settings, showing effective GNN learning over structural graphs.
We leave more detailed analysis on EDP and NDP in Appendix~\ref{app:endp_supple} and ablation study for different max node degrees in Section~\ref{sec:abla_degreenode}.

For standard benchmark datasets with informative node features, the utility of our model approaches that of non-private methods as the privacy  $\epsilon$ increases. 
For chain-structured graphs,  the learning task primarily relies on the underlying graph topology, which is more challenging.
Accordingly, model utility is more sensitive to the added noise realized by perturbed message passing.
{This sensitivity is due to their sparse chain structure: non-zero features are present only at the first node of each chain. Information must propagate from this source, and it can be degraded by noise accumulation during propagation.
In this case if the small training set, by chance, contains an imbalanced selection of nodes (e.g., sampling nodes only near the end of a chain, far from the feature source), the task becomes significantly more difficult. This can lead to higher variance in results. 
To ensure a fair comparison, we use the exact same data split  for all models mentioned above.
}

{In Appendix~\ref{subsec:larger}, we assess whether $\gcn$ scales with larger real-world datasets, e.g., Reddit2, and report our results.
}

 \begin{find}
\label{ref:find1}
  $\gcn$  achieves a more favorable privacy-utility trade-off than other baselines across standard graph datasets, chain-structured datasets with various parameter settings.
\end{find}

\subsection{Privacy Auditing}
\label{subsec:auditing}

{
Following the PAM outlined in Section~\ref{sec:privacy_veri}, we focus on black-box, membership-based privacy audits that match the theoretical guarantees of our DP mechanisms and those of prior perturbed message-passing methods~\cite{nips/chien2023differentially,uss/sajadmanesh2023gap}.
Under this threat model, LinkTeller~\cite{sp/Wulinkteller22} and G-MIA~\cite{tpsisa/OlatunjiNK21} are canonical and complementary:
LinkTeller targets edge-level membership by asking whether a specific edge exists in the training graph; G-MIA targets node-level membership by deciding whether a node and its connected edges were used during training.
We adopt G-MIA's attacking settings of TSTF, where models have been trained on subgraph and tested on full graph.
The adversary knows the whole graph $\mG$ and all edges contained in $\mG$ but has no access to the subgraphs used for early training.
Both attacks (i) are specifically designed for GNNs, (ii) operate in the transductive setting considered in our analysis, (iii) require only query access to GNN models, and (iv) are publicly available and already used to evaluate DP-GNN defenses.
This makes them ideal choices as mechanism-level auditing tools in $\gcn$.}

In Table~\ref{tab:privacy_audit} (in Appendix), we report the AUC  (Area Under the Curve) score about the attack effectiveness, when $\gcn$ is being attacked. AUC is a major metric to evaluate the  membership inference attack~\cite{sp/Wulinkteller22,he2021stealing}.
Specifically, AUC measures true positive rate against the false positive rate on various classification thresholds, 
and a score of $0.5$ suggesting random guessing.
We found $\gcn$ is very effective against LinkTeller, by dropping the attack AUC from between 0.86 to 0.998 across all standard datasets ($\epsilon = \inf$) to less than 0.500 ($\epsilon$ ranges from 1 to 32). 
The similar effect was also observed by Wu et al. (e.g., less than 0.5 attack precision for 3-layer GCN for high density belief, shown in Table IX)~\cite{sp/Wulinkteller22} and Tang et al. (e.g., less than 0.5 attack AUC sometimes in Figure 10)~\cite{popets/TangNAA24}.
For G-MIA, its AUC on the 5 datasets are already lower than LinkTeller by a notable margin when $\epsilon = \inf$ (between 0.567 to 0.702 for the 5 datasets), so the impact of $\gcn$ is relatively small. But we observe on \cora, $\gcn$ is able to drop AUC from 0.645 to 0.500.

\begin{find}
 In privacy auditing, $\gcn$'s shows effective resistance to membership inference attacks.
\end{find}

\subsection{Ablation Study}

\subsubsection{Impact of $K$}
Both $\gcn$ and \gap\ perform $K$-hop aggregations under $K$ aggregation layers. Here we evaluate the impact of $K$ on accuracy on the chain-structured datasets (\chainS, \chainM\ and \chainL), as their classification results highly depend on long-range interactions.  In Figure~\ref{fig:k-study} and Figure~\ref{fig:K-study} of Appendix, we compare $\gcn$ and \gap\ on varying $\epsilon$ and varying $K$, respectively. The result of \gap\ is drawn with lines and the result of $\gcn$ is illustrated with the colored boxes, because $\gcn$ also depends on other hyper-parameters  $\Clip, \alpha_1,\beta$ and we use the colored boxes to represent the interquartile range over 5 runs of their different value combinations.

In Figure~\ref{fig:acc5_chains1_e1}, we show the result of one setting ($\epsilon=4$ on \chainM), and $\gcn$ achieves higher accuracy at \textit{every} $K$. This observation is consistent in other settings as shown in Figure~\ref{fig:k-study}.
In addition,  the highest accuracy happens at $K=10$ (close to the number of nodes per chain) for $\gcn$, and the classification accuracy fluctuates when $K$ varies for both  $\gcn$  and \gap.
 Across all datasets, \gap’s accuracy degrades monotonically with depth, consistent with its privacy noise variance growing linearly in the number of layers ($\sigma^2 \propto K$). In contrast, $\gcn$ benefits from additional depth and then plateaus, owing to its convergent privacy cost with respect to depth. This behavior shows that $\gcn$ can leverage deeper architectures and realize the contractive privacy amplification guaranteed by our analysis.

\begin{figure}[!t]

		\subfloat[Different $K$ when $\epsilon=4$]{
		\includegraphics[width = 3.8cm]{./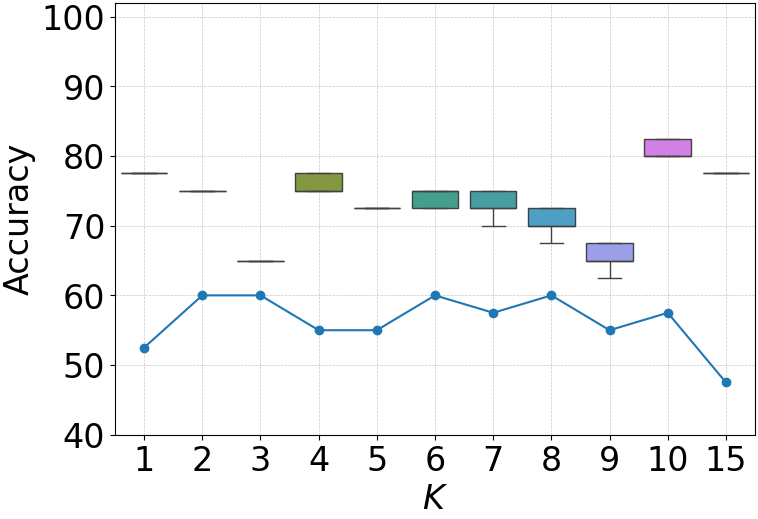}
		\label{fig:acc5_chains1_e1}
		}
      \subfloat[Different $\epsilon$ when $K=10$]{
		\includegraphics[width = 3.8cm]{./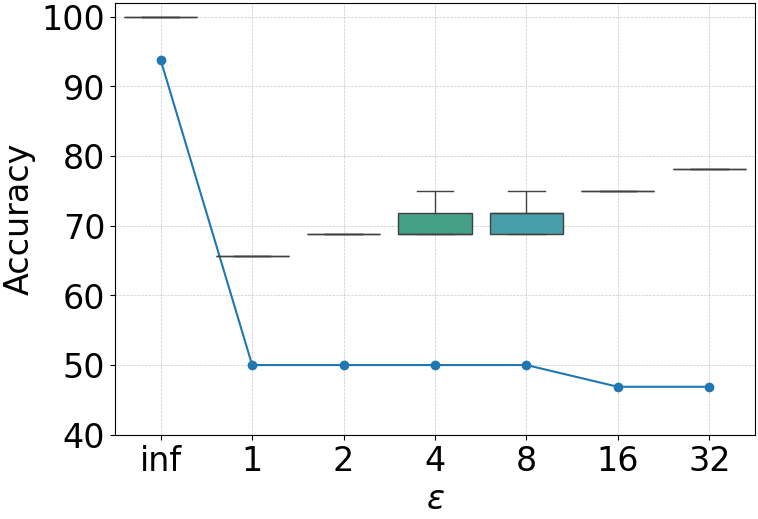}
		\label{fig:acc5_chains1_hops10}
		}
		\caption{Comparison between $\gcn$ (colored boxes) and \gap\ (blue lines) for ablation study. }
        \label{fig:acc5_chains1_e_all}
        \vspace{-5mm}
		\end{figure}

\subsubsection{Impact of $\epsilon$}
We assess how $\epsilon$ affects the performance of $\gcn$ and \gap, by flipping $K$ and $\epsilon$ from the previous ablation study.
Specifically, 
we evaluate the three chain-structured datasets (\chainS, \chainM\ and \chainL), and for each dataset, we use the same $K$ for both $\gcn$ and \gap, and then change $\epsilon$ from $1$ to $32$.

We present the full results in Figure~\ref{fig:K-study} in Appendix and one setting in Figure~\ref{fig:acc5_chains1_hops10} ($K=10$ on \chainS). Figure~\ref{fig:non-priv} in Appendix shows their non-private versions.
When $K$ is much less than the number of nodes per chain, \eg, $K=1$ vs. 8-nodes chain,  $\gcn$ and \gap\ cannot realize satisfactory accuracy (both under 75\%) even for non-private settings, as features from distant nodes cannot be effectively learnt. 
If $K$ is near to or larger than the number of nodes per chain (e.g., when $K=10$ for \chainS, as shown in Figure~\ref{fig:acc5_chains1_hops10}), though both \gap\ and $\gcn$ see very high accuracy for non-private mode, the accuracy of \gap\ drops to 50\% at $\epsilon=1$ and further decreases with increased $\epsilon$, suggesting the noise magnitude are not properly controlled. On the other hand, $\gcn$ sees steady growth of accuracy along with increased $\epsilon$, which is a desired outcome for privacy protection. 

\begin{find}
  Private GNNs face fundamental trade-offs between privacy, utility, and model depth $K$.
Model utility becomes more susceptible to the DP noise if  GNNs are tightly coupled with the underlying graph structure.
 \end{find}


{
\subsubsection{Noise scaling with depth}
Here we analyze how the noise scale changes with the number of layers $K$ under standard linear composition (Corollary~\ref{cor:k-layer-privacy}) and under  convergent privacy analysis (Theorem~\ref{thm:privacy_convergence_general}).
For a general study, we remove the effect of node degree $D_{\min},D_{\max}$ derived from a particular dataset and fix the target DP parameters $\epsilon,\delta$.  
In this case, the dependence of the calibrated noise on depth is governed by depth $K$. 
To make this comparison concrete, we instantiate a representative setting by normalizing  the sensitivity: $\Delta^2(\mathrm{MP}) = 1, \alpha = 6, \Clip = 0.9.$
Setting $\Delta^2(\mathrm{MP})=1$ removes a common multiplicative factor and highlights 
the qualitative dependence on $K$. The choice $\alpha=6$ is  common
and simplifies the expressions, while $\Clip=0.9$ represents a standard contractive layer.

\begin{table}[H]
\centering
\caption{
Noise Scale $\sigma$ under Different $K$.
We set $\Delta^2(\mathrm{MP})=1,\alpha=6,\Clip=0.9$ and $ \epsilon=4,\delta=0.001$.
}
\label{tab:noise-theory-compare}
\setlength\tabcolsep{5pt}
\begin{tabular}{c|cccccccc}
\toprule
$K$ & $1$ & $2$ & $4$ & $8$ & $16$ & $32$&$64$&$128$ \\
\midrule
Linear 
& $1.07$& $1.52$ & $2.15$ & $3.04$ & $4.30$ & $6.07$ & $8.56$ & $12.11$\\
Convergent
& $1.07$ & $1.52$ & $2.14$ & $3.00$ & $4.07$ & $4.66$ &$4.66$ & $4.67$ \\
\bottomrule
\end{tabular}
\end{table}
Table~\ref{tab:noise-theory-compare} reports the proportional values of $\sigma$ 
for several representative depths $K$, assuming the same target privacy budget $\epsilon=4$.
Under linear composition, the required  $\sigma$ grows proportionally to $K$,  becoming large for deep GNNs.
In contrast, under $\gcn$'s analysis, the required $\sigma$ grows from $1$ to a bounded constant (here approximately $4.7$) and then saturates. This study highlights $\gcn$ can support  deep architectures without unbounded noise growth.
In addition, a failure case of divergent noise allocation has been shown in Appendix~\ref{app:exp_failure}.
}

\subsubsection{hyper-parameters related to contractiveness}
We studied the impact of hyper-parameters $\Clip, \alpha_1,\beta$ in CGL. In Appendix, we draw Figure~\ref{fig:heatmaps} of classification accuracy using \chainS  and \cora.
As $\Clip$ constrains the features learned at each aggregation, in the relatively weak privacy guarantee ($\epsilon=16, 32$),  Figures~\ref{fig:heatmap_chains1_epsilon_cl} and \ref{fig:heatmap_cora_epsilon_cl} empirically confirms that the accuracy improves with $\Clip$ increases. 
In contrast, for strong privacy guarantee ($\epsilon=1, 2$), larger $\Clip$ reduces the model accuracy due to the accumulated large noise. 
{Small $\Clip$ enforces strong contraction, accelerating privacy convergence and reducing effective sensitivity, but overly small values may reduce expressive power.  
Larger $\Clip$ increases representational capacity but slows contraction and slightly increases noise amplification. }

Figures~\ref{fig:heatmap_chains1_epsilon_alpha} and \ref{fig:heatmap_cora_epsilon_alpha} describe the ratio ($\alpha_1$) of learning from the graph, where $\alpha_1=1.0 (\alpha_2=0.0)$ means the information from adjacent matrix is utilized at the maximum degree.
Larger $\alpha_1$ leads to higher accuracy across varying $\epsilon$ in general, suggesting $\gcn$ is able to achieve good balance between graph connectivity and privacy.
The impact of $\beta$, which decides the power of residual connection between node features and CGL, is different on the two datasets.
Since $\chainS$ is designed to tailor graph topology over node features, increasing $\beta$ to a large value (e.g., 15) might hurt accuracy. For \cora\ with rich node features, the model accuracy is generally increased along with $\beta$.

\begin{find}
All parameters $\sigma,\Clip,\beta,\alpha_1$ in perturbed CGL contribute to the privacy-utility trade-off.
\end{find}

\smallskip
\subsubsection{Impact of $D_{\max}, D_{\min}$}
\label{sec:abla_degreenode}
Figure~\ref{fig:acc_nodedegree_e2} shows an example ($\epsilon=2$) of the classification accuracy of NDP under different maximum node degrees, and more ablation study results are deferred to  Figure~\ref{fig:abla_nodedeg}.
As shown in Figure~\ref{fig:acc_nodedegree_e2_photo_main} on the \photo\ dataset, $\gcn$ consistently realizes the highest accuracy under different maximum node degrees ranging from $5$ to $100$.
Improving maximum node degree for \dpdgc\ can slightly increase the classification accuracy when maximum node degree is $20$, while the number of maximum node degree does not help for \sage\ and \gap.
For the \chainS\ dataset in Figure~\ref{fig:acc_nodedegree_e2_chains1_main}, classification accuracy of $\gcn$ is improved when the maximum node degree is increased and relatively small.
In addition, $\gcn$ outperforms baseline works significantly, \ie, approximated $20\%$-$25\%$ higher than the second best $\gap$.

{The sensitivity formulas in Theorems~\ref{thm:edge-sensitivity},\ref{thm:node-sensitivity} explicitly depend on structural properties of the graph, particularly minimum degree $D_{\min}$ and maximum degree $D_{\max}$.
Our empirical results (Table~\ref{tab:reulst_table_overall_acc_top1}, Figure~\ref{fig:abla_nodedeg}) reflect these theoretical dependencies:
(i) High-degree graphs such as \photo\  exhibit lower noise and higher accuracy; (ii) Low-degree or chain-like graphs incur higher sensitivity and lower accuracy, but $\gcn$ mitigates the impact.
}

\begin{figure}[!t]
	\subfloat[$\epsilon=2$ (\photo)]{
		\includegraphics[width = 3.9cm]{./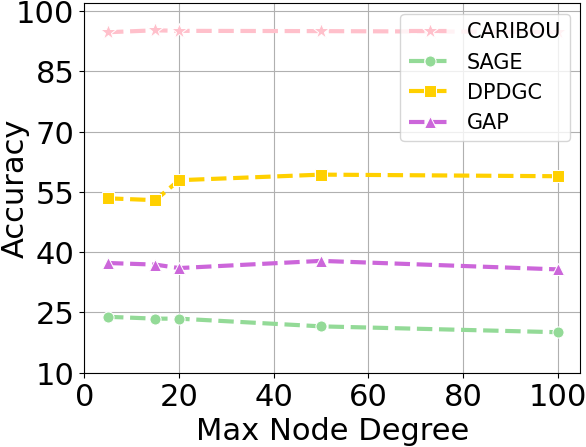}
		\label{fig:acc_nodedegree_e2_photo_main}
		}
	   \subfloat[$\epsilon=2$ (\chainS)]{
		\includegraphics[width = 3.9cm]{./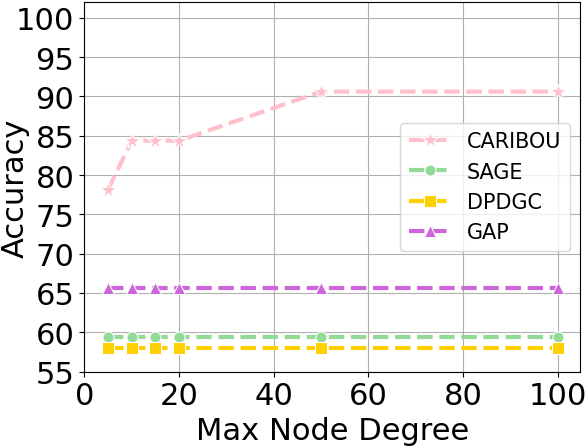}
		\label{fig:acc_nodedegree_e2_chains1_main}
		}
	   \caption{NDP Accuracy with Varying Max Node Degree.}
	   \label{fig:acc_nodedegree_e2}
\end{figure}

{
\section{Future Works and Discussions}


\textit{More GNN models.}
$\gcn$ is instantiated and evaluated primarily with commonly used message-passing architectures, where contractiveness naturally emerges or can be enforced by design. Extending our convergent privacy framework to a broader class of GNNs, including attention-based models (e.g., GAT), spectral convolution methods, and emerging graph transformers would be a natural next step, but new research problems will emerge.
For instance, these models differ in their Lipschitz properties, aggregation operators, and feature mixing patterns, which may influence the achievable privacy amplification and the expressiveness–contractiveness trade-off. Developing a unified analysis for these graph families, or designing contractive variants of non-message-passing architectures, is an open and promising direction.

\textit{White-box attacks and defenses.}
Our privacy auditing focuses on black-box membership threats, which align with the theoretical guarantees of perturbed message passing. However, stronger adversaries with white-box access to gradients, intermediate embeddings, or partial training states can mount reconstruction, inversion, or property inference attacks that fall outside our current threat model. Prior work has shown that gradient-based attacks can recover fine-grained structural information, especially in over-parameterized models. Investigating the extent to which contractiveness mitigates these stronger threats, and designing DP mechanisms that remain robust under partial or full white-box exposure, warrant future research. Such analyses may require combining $\gcn$ with complementary techniques such as gradient perturbation, secure aggregation, or private feature compression.

}

\section{Related Works}
\label{related_main}
\noindent\textbf{Multi-layer GNNs.}
Recent literature shows that multi-layer GNNs hold significant potential for modeling long-range dependencies and complex relational structures crucial for many real-world applications.
Node labels and attributes may depend on distant nodes, necessitating the aggregation of information over larger receptive fields~\cite{tnn/WuPCLZY21} through multi-layer GNNs.
Notably, Li~\etal~\cite{icml/Li0GK21} demonstrated, through the 1000-layer GNN, that increasing the network depth  attains substantial gains in accuracy, \eg, from $72\%$ with shallow GNNs to $88\%$ with  hundred- and thousand-layer GNNs, by capturing distant features.
However, enforcing DP in multi-layer GNNs is particularly challenging, as these GNN models aggregate node embeddings over deeper layers and broader neighborhoods. 
Current research still lacks an effective solution to injecting DP noise to multi-layer GNNs with privacy-utility balance.

 \smallskip

\noindent\textbf{Differentially private GNNs.}
Graphs consist of edges and nodes.
Corresponding to instance-level DP~\cite{csf/mironov2017renyi,ccs/AbadiCGMMT016,nips/0001S22}, the ``instance'' of graphs can be an edge or a node,  naturally called edge-level DP (EDP) and node-level DP (NDP). 
GNNs have emerged as a key approach for applications over graph-structured data, such as intrusion detection~\cite{eurosp/ApruzzeseLS23,sp/SommerP10}, social recommendation~\cite{icde/XiaSH0XP23}, and drug discovery~\cite{kdd/ZhongM23}.
Sharing trained GNN model can lead to privacy risks~\cite{uss/000100S022,ccs/0001MMBS22,corr/abs-2102-05429}, typically membership inference attack (MIA)~\cite{sp/CarliniCN0TT22,ccs/0001MMBS22}. 
MIA stems from ``overfitting'', where models can memorize training memberships~\cite{zhang2021understanding}, either an edge or a node. 
Consequently, GNNs can leak sensitive information about their edge- or node-level neighbors. 

To address these risks, existing research works~\cite{arxiv/sajadmanesh2020differential,arxiv/Wuprivacy23,nips/chien2023differentially,arxiv/DaigavaneNode21,corr/abs-2109-08907,uss/sajadmanesh2023gap} have integrating DP with GNNs to achieve EDP and NDP.
One research direction is to utilize graph perturbation (e.g., LPGNet~\cite{ccs/KolluriBHS22} and LapGraph~\cite{sp/0011L0022}) through  a randomized response mechanism and adding discrete DP noise to the adjacency matrix.
Then, the perturbed graph is passed to GNNs for subsequent training tasks, where the graph perturbation is required only once and also irrelevant to the GNN architectures.
 However, the GNN model utility is low when being trained over a perturbed graph when the privacy budget is tight, for example, $<40\%$ accuracy of $\epsilon=1,2,3,4$ reported in LPGNet~\cite{ccs/KolluriBHS22}.

To improve utility, \emph{perturbed message-passing mechanism} (PMP)~\cite{uss/sajadmanesh2023gap} has been proposed by adding the calibrated Gaussian noise to the message-passing layer, and \dpdgc\ perturbs the decoupled graph convolution~\cite{nips/chien2023differentially}.
As PMP realizes a better trade-off of privacy and utility, our work extends the research line of PMP.
Table~\ref{tab:compare_design} presents a comprehensive comparison.
Albeit their efforts on EDP and NDP, leveraging the contractive hidden node embeddings in private GNNs for amplifying privacy  remains an underexplored avenue; thus, $\gcn$ fills this gap.
More related works are detailed in Appendix~\ref{app:related}.

\section{Conclusion}
\label{sec:conclusion}
In this study, we provide a theoretical analysis establishing a convergent privacy budget for private deeper GNNs. Our analysis addresses a longstanding limitation in perturbed message-passing architectures, namely, the linear accumulation of noise with depth, by showing that privacy loss can remain bounded as the number of layers increases. Consequently, deeper models can be deployed with a significantly improved privacy-utility trade-off. Our analysis is broadly applicable, requiring only two conditions that are commonly satisfied in practice: the use of hidden intermediate states (also a standard design choice) and contractive message passing layers, which are often observed empirically.

To demonstrate the practical implications of our theory, we introduce a novel private GNN framework, $\gcn$, which incorporates a simple yet effective Contractive Graph Layer (CGL) that theoretically guarantees the contractiveness required by our analysis. $\gcn$ further integrates optimized privacy budgeting, and modular auditing mechanisms to deliver strong privacy guarantees while preserving model utility. Empirical results show that $\gcn$ substantially improves the privacy-utility trade-off and enhances robustness to membership inference attacks.

\section*{Acknowledgment}

We want to thank the reviewers for their insightful and constructive comments. 
The authors of University of California, Irvine are partially supported by NSF CNS-2220434 and TIP-2453148.
Any opinions, findings, and conclusions expressed in this material are those of the author(s) and do not necessarily reflect the views of the sponsors.

\bibliographystyle{IEEEtran}
\bibliography{references}


\appendix

\subsection{Review on Message Passing GNNs}
\label{app:mpgcn}
Every layer in the Graph Convolutional Network (GCN)~\cite{iclr/kipf2017semi} can be expressed as a message passing layer, where the aggregation function simply computes weighted sums of the features of the neighbors. The GCN layer applied to graph $\mG$ with a node feature matrix $\mX^{(k)}$ can be expressed as:
\begin{equation}
  \text{GCN}_G(\mX_u^{(k)}) = \sigma\left(\hat{\mA}\mX^{(k)}\mW^{(k)}\right),
\end{equation}
where $\hat{\mA} = \mD^{-\frac{1}{2}}(\mA + \mI)\mD^{-\frac{1}{2}}$ is the symmetrically normalized adjacency matrix of the graph, $\mI$ is the identity matrix, $\mD$ is the degree matrix of the graph (a diagonal matrix where $d_{ii} = \sum_j a_{ij}$), $a_{ij}$ is the $(i,j)$-th entry of the adjacency matrix $\mA$, $\mW^{(k)}$ is a learnable weight matrix at layer $k$, and $\sigma$ is a non-linear activation function.

Other simple variants include replacing the mean-type aggregation $\hat{\mA}\mX^{(k)}$ in GCN with a random walk adjacency matrix $\tilde{\mA}\mX^{(k)}$, where $\tilde{\mA} = \mD^{-1}\mA$ is the random walk normalized adjacency matrix. Alternative aggregation functions include sum or max aggregation, where $\hat{\mA}\mX^{(k)}$ is replaced with $\sum_{v \in \gN(u)} \mX_v^{(k)}$ (like in $\gap$) or $\max_{v \in \gN(u)} \mX_v^{(k)}$, respectively.
\section{Preliminary on $f$-DP and Related Definitions}
\label{app:pre_fdp}

In this section, we review the $f$-differential privacy (DP) framework with its definition using trade-off functions, and its special case of Gaussian differential privacy (GDP). We will also review some related results that will be used in our analysis.

The $f$-differential privacy framework~\cite{journal/dong2022gaussian,tit/KairouzOV17} is based on hypothesis testing where $f$ denotes a trade-off function between type I and type II errors. Given two output distributions $P$ and $Q$ of a mechanism $\Mcal$ on neighboring datasets $\Dcal$ and $\Dcal'$, the problem is to distinguish between them: $H_0: \text{dataset is } \Dcal$ vs. $H_1: \text{dataset is } \Dcal'$. For a rejection function $\phi: \mathcal{X} \to [0, 1]$, the type I error is $\E_P \phi$ and the type II error is $1 - \E_Q \phi$. The privacy guarantee is formalized through the trade-off function defined below.

This function characterizes the minimum Type II error (false negative rate) as a function of the maximum Type I error (false positive rate) in hypothesis testing between $P$ and $Q$.
A special case of the trade-off function is the \emph{Gaussian tradeoff function}, which is used to characterize privacy in the Gaussian differential privacy (GDP) setting.

\begin{definition}[Trade-off Function {\cite[Theorem~4.2]{icml/BokShifted24}}]
    \label{def:fdp_tradeoff}
    For distributions $P$, $Q$ on the same measurable space, the trade-off function $T(P, Q): [0,1] \to [0,1]$ is defined as:
    \begin{equation*}
        T(P, Q)(\alpha) = \inf\{1 - \Ebb_Q \phi : \Ebb_P \phi \leq \alpha,\ 0 \leq \phi \leq 1\}.
    \end{equation*}
    A randomized algorithm $\mathcal{M}$ satisfies $f$-differential privacy if for any adjacent datasets $D$ and $D'$, $T(\mathcal{M}(D), \mathcal{M}(D')) \geq f$.
\end{definition}

Note that a larger tradeoff function $f$ implies it is more difficult to distinguish the two neighboring datasets $\Dcal$ and $\Dcal'$, and thus the mechanism $\mathcal{M}$ is more private.

A special case of $f$-DP is the Gaussian differential privacy (GDP)~\cite{journal/dong2022gaussian}, which is defined as follows.
\begin{definition}[Gaussian Differential Privacy {\cite[Definition~2.1]{icml/BokShifted24}}]
    \label{def:GDP}
    A randomized algorithm $\mathcal{M}$ is $\mu$-GDP if for any adjacent datasets $\Dcal$ and $\Dcal'$, $T(\mathcal{M}(D), \mathcal{M}(D')) \geq G_\mu$,
    where $G_\mu$ is the Gaussian tradeoff function defined as
    $$G_\mu = T(\Ncal(0, 1), \Ncal(\mu, 1)).$$
    More specifically, the values of $G_\mu$ at $\alpha \in [0, 1]$ can be computed as
    \begin{equation*}
        G_\mu(\alpha) = \Phi\left( \Phi^{-1}(1 - \alpha) - \mu \right),
    \end{equation*}
    where $\Phi$ is the cumulative distribution function of the standard normal distribution.
\end{definition}


A common tool for analyzing the composition of $f$-DP is the tensor product of trade-off functions, which is defined as follows.
\begin{definition}[{\cite[Definition~3.1]{journal/dong2022gaussian}}]
    \label{def:tensor_product}
    The tensor product of two trade-off functions $f=T(P, Q)$ and $g=T(P', Q')$ is defined as
    \begin{equation*}
        f \otimes g = T(P \times P', Q \times Q').
    \end{equation*}
\end{definition}

The tensor product of trade-off functions satisfies the following properties as proved in \cite[Proposition~D.1]{journal/dong2022gaussian}.
\begin{lemma}[{\cite[Proposition~D.1]{journal/dong2022gaussian}}]
    \label{lem:tradeoff_composition}
    The tensor product of trade-off functions satisfies the following properties:
    \begin{itemize}
        \item $f \otimes g$ is a well-defined trade-off function.
        \item $f\otimes \mathrm{Id} = f$, where $\mathrm{Id}$ is the identity function.
        \item for GDP, $G_{\mu_1} \otimes G_{\mu_2} \otimes \cdots \otimes G_{\mu_n} = G_{\sqrt{\mu_1^2 + \mu_2^2 + \cdots + \mu_n^2}}$.
    \end{itemize}
\end{lemma}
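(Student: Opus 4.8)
The plan is to reduce every testing problem to its likelihood ratio via the Neyman--Pearson lemma, exploiting that the trade-off function $T(P,Q)$ is exactly the lower boundary of the optimal type II versus type I error curve and is therefore determined by the law of the likelihood ratio $L = dP/dQ$. All three properties then follow from how this likelihood ratio behaves under product measures and under the Gaussian location model.

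First, for well-definedness I would show that $T(P \times P', Q \times Q')$ depends only on $f = T(P,Q)$ and $g = T(P',Q')$, not on the chosen representing pairs. The key observation is that under independence the likelihood ratio factorizes, $d(P \times P')/d(Q \times Q') = (dP/dQ)(dP'/dQ')$, with the two factors independent under $Q \times Q'$. Since $T(P,Q)$ is in bijection with the law of $dP/dQ$ within the canonical class of convex, continuous, non-increasing functions $f$ satisfying $f(\alpha) \le 1 - \alpha$, the law of the product likelihood ratio, and hence the product trade-off function, is a function of the two marginal laws alone. I would make this rigorous by passing to canonical pairs: for any $f$ there is a canonical pair on $\R$ (the law of $\log L$) whose trade-off function is $f$, and I would verify that the tensor product of arbitrary representations agrees with that of the canonical pairs. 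The hard part will be precisely this step: arguing that equal trade-off functions force matching likelihood-ratio laws, that this matching is preserved under products, and handling the singular components where $P \not\ll Q$.

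Second, for $f \otimes \mathrm{Id} = f$, I would use that $\mathrm{Id}(\alpha) = 1 - \alpha = T(R,R)$ for any distribution $R$, since identical hypotheses are perfectly indistinguishable. Taking this representation gives $f \otimes \mathrm{Id} = T(P \times R, Q \times R)$, and the likelihood ratio $d(P \times R)/d(Q \times R) = dP/dQ$ almost surely, so its law is unchanged; by the well-definedness just established, the resulting trade-off function is exactly $T(P,Q) = f$.

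Third, for the Gaussian composition I would represent $G_{\mu_i} = T(\Ncal(0,1), \Ncal(\mu_i,1))$, so that $G_{\mu_1} \otimes \cdots \otimes G_{\mu_n} = T(\Ncal(\mathbf 0, I_n), \Ncal(\boldsymbol\mu, I_n))$ with $\boldsymbol\mu = (\mu_1,\dots,\mu_n)$. For two Gaussians with common covariance $I_n$, the log-likelihood ratio is affine in the sufficient statistic $\langle x, \boldsymbol\mu\rangle$; projecting onto the direction $\boldsymbol\mu$ reduces the problem to a one-dimensional test of $\Ncal(0, \norm{\boldsymbol\mu}^2)$ against $\Ncal(\norm{\boldsymbol\mu}^2, \norm{\boldsymbol\mu}^2)$, which after rescaling by $\norm{\boldsymbol\mu}$ is precisely the test defining $G_{\norm{\boldsymbol\mu}}$. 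Since $\norm{\boldsymbol\mu} = \sqrt{\mu_1^2 + \cdots + \mu_n^2}$, this gives the claim; I can either carry out the $n$-fold reduction directly or induct on $n$ using the two-factor identity $G_{\mu_1} \otimes G_{\mu_2} = G_{\sqrt{\mu_1^2 + \mu_2^2}}$ together with associativity of $\otimes$, which itself follows from the factorization of likelihood ratios used in the first part.
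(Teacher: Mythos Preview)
The paper does not prove this lemma at all; it is quoted verbatim as Proposition~D.1 of Dong, Roth, and Su (2022) and used as a black box in the later arguments (notably inside the proof of Theorem~\ref{thm:privacy_convergence_CNI}). So there is no ``paper's own proof'' to compare against.

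That said, your sketch is essentially the argument that appears in the original Dong et al.\ reference: reduce to the law of the likelihood ratio via Neyman--Pearson, observe that it factorizes under product measures (giving well-definedness and associativity), note that $\mathrm{Id}=T(R,R)$ for any $R$, and for the Gaussian case project onto the mean-shift direction to collapse the $n$-dimensional test to a one-dimensional one with shift $\|\boldsymbol\mu\|_2$. All three items are handled correctly; the only place requiring real care is, as you flag, the bijection between trade-off functions and likelihood-ratio laws when singular parts are present, and that is exactly where Dong et al.\ spend their effort.
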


\begin{lemma}[Post-processing inequality]\label{lemma:post_processing}
    Let $P, Q$ be two probability distributions and $K$ is some map possibly random, then
    \begin{equation}\label{equ:post_processing}
        T(K(P), K(Q)) \geq T(P, Q).
    \end{equation}
\end{lemma}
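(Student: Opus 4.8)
The plan is to prove the post-processing inequality directly from the variational (hypothesis-testing) definition of the trade-off function in Definition~\ref{def:fdp_tradeoff}, using the standard observation that every test on the post-processed distributions pulls back to a test on the original distributions with \emph{identical} Type I and Type II errors. Since post-processing can therefore only shrink, never enlarge, the set of achievable (Type I, Type II) error pairs, the associated infimum defining $T$ can only increase.

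First I would model the possibly-randomized map $K$ as a Markov kernel: for each point $x$ in the input space $\mathcal{X}$, $K(\cdot\mid x)$ is a probability distribution on the output space $\mathcal{Y}$, so that $K(P)$ is the mixture $\int K(\cdot\mid x)\,dP(x)$ and likewise for $K(Q)$ (when $K$ is deterministic this kernel is a point mass and the argument specializes to composition with $K$). Fix any level $\alpha\in[0,1]$ and any rejection rule $\phi:\mathcal{Y}\to[0,1]$ that is feasible for the post-processed problem, i.e. $\Ebb_{K(P)}\phi\le\alpha$. Define the pulled-back rule $\psi:\mathcal{X}\to[0,1]$ by $\psi(x)=\Ebb_{y\sim K(\cdot\mid x)}[\phi(y)]=\int\phi(y)\,K(dy\mid x)$; this $\psi$ is measurable and $[0,1]$-valued, hence a legitimate test for the original problem.

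The key step is to verify that $\psi$ has exactly the same error profile on $(P,Q)$ as $\phi$ has on $(K(P),K(Q))$. By Fubini/Tonelli (both integrands are nonnegative and bounded by $1$),
$$
\Ebb_P\psi=\int\!\!\int\phi(y)\,K(dy\mid x)\,dP(x)=\int\phi(y)\,d(K(P))(y)=\Ebb_{K(P)}\phi,
$$
and identically $\Ebb_Q\psi=\Ebb_{K(Q)}\phi$. Consequently $\Ebb_P\psi\le\alpha$, so $\psi$ is feasible for the original testing problem at level $\alpha$, and the defining infimum in Definition~\ref{def:fdp_tradeoff} yields $T(P,Q)(\alpha)\le 1-\Ebb_Q\psi=1-\Ebb_{K(Q)}\phi$.

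Finally I would take the infimum over all feasible $\phi$: since the bound $T(P,Q)(\alpha)\le 1-\Ebb_{K(Q)}\phi$ holds for \emph{every} $\phi$ with $\Ebb_{K(P)}\phi\le\alpha$, and its left-hand side does not depend on $\phi$, it survives passage to the infimum of the right-hand side, giving $T(P,Q)(\alpha)\le T(K(P),K(Q))(\alpha)$ for every $\alpha\in[0,1]$, which is precisely $T(K(P),K(Q))\ge T(P,Q)$. The only real obstacle is the measure-theoretic bookkeeping — ensuring $\psi$ is measurable and that the interchange of integration is justified — which is routine once $K$ is cast as a kernel; the combinatorial heart of the argument is simply the pullback $\phi\mapsto\psi$ that preserves both error types.
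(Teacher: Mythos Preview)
Your proof is correct and is exactly the standard argument for the post-processing inequality in the $f$-DP framework (as in Dong, Roth, and Su's original paper). The paper itself does not supply a proof of Lemma~\ref{lemma:post_processing}; it simply states the result as a known preliminary imported from the $f$-DP literature, so there is no paper proof to compare against. Your pullback-of-tests argument via the Markov kernel representation of $K$ is the canonical one and would be perfectly appropriate to include.
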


The following theorem shows the relationship between GDP and RDP, as well as the conversion from RDP to DP.
\begin{theorem}[GDP to RDP {\cite[Lemma~A.4]{icml/BokShifted24}}]
    \label{lem:gdp_to_rdp}
    If a mechanism is $\mu$-GDP, then it satisfies $(\alpha, \frac{1}{2}\alpha\mu^2)$-RDP for all $\alpha > 1$.
\end{theorem}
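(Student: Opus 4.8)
The plan is to prove the conversion by \emph{reducing an arbitrary $\mu$-GDP mechanism to the canonical Gaussian pair} $\bigl(\Ncal(0,1),\Ncal(\mu,1)\bigr)$ and then transferring that pair's exact Rényi divergence through a data-processing inequality. The guiding observation is that the target parameter $\tfrac12\alpha\mu^2$ is precisely the Rényi divergence of order $\alpha$ between two unit-variance Gaussians whose means differ by $\mu$; so if the mechanism can be shown to be ``no more distinguishable'' than this Gaussian experiment in a way that Rényi divergence respects, the stated bound follows immediately.

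First I would fix a pair of adjacent datasets $\Dcal,\Dcal'$ and write $P=\Mcal(\Dcal)$, $Q=\Mcal(\Dcal')$. By Definition~\ref{def:GDP}, the $\mu$-GDP hypothesis gives $T(P,Q)\ge G_\mu=T\bigl(\Ncal(0,1),\Ncal(\mu,1)\bigr)$ pointwise on $[0,1]$, which I read as the statement that the binary experiment $(P,Q)$ is \emph{less informative} for the associated hypothesis test than the Gaussian experiment, since a uniformly larger trade-off function means uniformly smaller achievable power. Second, I would invoke the Blackwell--Sherman--Stein characterization for dichotomies: for two binary experiments, pointwise domination of the testing (ROC) trade-off curve is equivalent to the less-informative experiment being a randomized post-processing of the other. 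This yields a Markov kernel $K$ with $K_\#\Ncal(0,1)=P$ and $K_\#\Ncal(\mu,1)=Q$. Third, I would apply the data-processing inequality for Rényi divergence (the analogue of the post-processing Lemma~\ref{lemma:post_processing}, valid for every order $\alpha>1$), giving $D_\alpha(P\|Q)=D_\alpha\bigl(K_\#\Ncal(0,1)\,\|\,K_\#\Ncal(\mu,1)\bigr)\le D_\alpha\bigl(\Ncal(0,1)\,\|\,\Ncal(\mu,1)\bigr)$. Finally, the equal-variance Gaussian Rényi divergence has the closed form $D_\alpha\bigl(\Ncal(0,1)\,\|\,\Ncal(\mu,1)\bigr)=\tfrac{\alpha\mu^2}{2}$, symmetric in its two arguments so that both testing directions are covered; since this bound is independent of the choice of $\Dcal,\Dcal'$, the mechanism is $(\alpha,\tfrac12\alpha\mu^2)$-RDP for every $\alpha>1$.

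The hard part will be making the reduction step---the existence of the garbling kernel $K$---fully rigorous: Blackwell's equivalence for binary experiments requires care with the measurability and standard-experiment hypotheses, and one must use the correct direction (trade-off domination $\Rightarrow$ garbling, not its converse). To sidestep this if needed, I would take an alternative, purely analytic route: by the conjugacy between trade-off functions and privacy profiles \citep{journal/dong2022gaussian}, $\mu$-GDP is equivalent to the hockey-stick domination $H_{e^\epsilon}(P\|Q)\le H_{e^\epsilon}\bigl(\Ncal(0,1)\,\|\,\Ncal(\mu,1)\bigr)$ for all $\epsilon$; then, using the integral representation of the Rényi moment $\Ebb_Q\bigl[(dP/dQ)^\alpha\bigr]$ as a nonnegatively weighted integral of hockey-stick divergences, the pointwise domination transfers directly to $D_\alpha$, and the identical Gaussian computation closes the argument. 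Either way, the single genuinely technical ingredient is the \emph{monotonicity of Rényi divergence under trade-off-function domination}; the Gaussian moment computation and the uniformity over adjacent datasets are routine.
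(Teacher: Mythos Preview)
The paper does not provide its own proof of this statement; it is quoted verbatim as a known lemma from \cite{icml/BokShifted24} (appearing once in the main text as Lemma~\ref{lem:gdp_to_rdp} and once in Appendix~\ref{app:pre_fdp}) and is used as a black box in the proof of Theorem~\ref{thm:privacy_convergence_general}. So there is no in-paper argument to compare against.

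Your proposal is a correct proof of the cited result. The Blackwell reduction is exactly the mechanism behind the $f$-DP framework of \citet{journal/dong2022gaussian}: trade-off domination $T(P,Q)\ge G_\mu$ is equivalent (under mild regularity on the output space) to $(P,Q)$ being a garbling of the Gaussian pair, and then Rényi data processing plus the closed-form $D_\alpha(\Ncal(0,1)\|\Ncal(\mu,1))=\alpha\mu^2/2$ finish it. Your fallback via the hockey-stick representation of $\Ebb_Q[(dP/dQ)^\alpha]$ is also sound and avoids the Blackwell machinery entirely; either route is standard. One cosmetic point: for RDP you need the bound in both orders, $D_\alpha(P\|Q)$ and $D_\alpha(Q\|P)$, which you note follows from the symmetry of the Gaussian divergence together with $T(Q,P)\ge G_\mu$ (since $G_\mu$ is its own ``swap,'' i.e., $G_\mu=G_\mu^{-1}$ in the sense of \citet{journal/dong2022gaussian}).
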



\begin{theorem}[RDP to DP {\cite[Proposition~3]{csf/mironov2017renyi}}]
    \label{thm:RDP_to_DP}
    If $f$ is an $(\alpha, \epsilon)$-RDP mechanism, then for any \(\delta \in (0, 1)\), it satisfies $(\epsilon + \frac{\log(1/\delta)}{\alpha - 1}, \delta)$-DP.
\end{theorem}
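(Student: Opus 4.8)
The plan is to derive the $(\epsilon',\delta)$-DP guarantee, with $\epsilon' = \epsilon + \frac{\log(1/\delta)}{\alpha-1}$, directly from the Rényi divergence bound by a Chernoff-style tail argument on the \emph{privacy loss random variable}. Fix two adjacent datasets $X,X'$ and write $P = \Mcal(X)$, $Q = \Mcal(X')$ for the mechanism of the statement; by hypothesis $D_\alpha(P\|Q)\leq\epsilon$, which unpacks (via the definition of Rényi divergence) to the moment bound $\Ebb_{x\sim Q}\bigl[(P(x)/Q(x))^\alpha\bigr]\leq e^{(\alpha-1)\epsilon}$. The goal is to establish $\Pr_P[S]\leq e^{\epsilon'}\Pr_Q[S] + \delta$ for every measurable event $S$ in the output space, which is exactly the $(\epsilon',\delta)$-DP condition.

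First I would introduce the loss $L(x)=\log\bigl(P(x)/Q(x)\bigr)$ and split any event $S$ according to whether the loss is small or large: $\Pr_P[S]=\Pr_P[S\cap\{L\leq\epsilon'\}]+\Pr_P[S\cap\{L>\epsilon'\}]$. On the low-loss part I would rewrite $P(x)=e^{L(x)}Q(x)$ and bound $e^{L(x)}\leq e^{\epsilon'}$ pointwise, giving $\Pr_P[S\cap\{L\leq\epsilon'\}]\leq e^{\epsilon'}\Pr_Q[S]$; this is the term that supplies the multiplicative $e^{\epsilon'}$ factor.

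The crux is bounding the high-loss part $\Pr_P[L>\epsilon']$ by $\delta$. Here I would apply a Markov inequality to the exponentiated loss: since $\{L>\epsilon'\}$ is equivalent to $\{e^{(\alpha-1)L}>e^{(\alpha-1)\epsilon'}\}$, Markov yields $\Pr_P[L>\epsilon']\leq e^{-(\alpha-1)\epsilon'}\,\Ebb_P\bigl[e^{(\alpha-1)L}\bigr]$. The key identity is that the moment under $P$ equals the Rényi moment under $Q$: a change of measure gives $\Ebb_P\bigl[e^{(\alpha-1)L}\bigr]=\Ebb_{x\sim P}\bigl[(P(x)/Q(x))^{\alpha-1}\bigr]=\Ebb_{x\sim Q}\bigl[(P(x)/Q(x))^{\alpha}\bigr]\leq e^{(\alpha-1)\epsilon}$. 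Substituting, and using the definition of $\epsilon'$ so that $(\alpha-1)(\epsilon-\epsilon')=\log\delta$, collapses the bound exactly to $\Pr_P[L>\epsilon']\leq e^{\log\delta}=\delta$.

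Combining the two pieces gives $\Pr_P[S]\leq e^{\epsilon'}\Pr_Q[S]+\delta$, and since $X,X'$ were arbitrary adjacent datasets this is precisely $(\epsilon',\delta)$-DP. The main obstacle—though standard—is the tail step: one must apply Markov at the correct moment order $\alpha-1$ so that the \emph{available} Rényi moment of order $\alpha$ is exactly what controls it, and handle the change of measure in the moment identity rigorously over the support where $Q>0$, using that finiteness of $D_\alpha(P\|Q)$ forces $P\ll Q$. A little measure-theoretic care is also needed when $S$ meets the region $\{Q=0\}$, whose $P$-mass vanishes under the same absolute-continuity assumption.
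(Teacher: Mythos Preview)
The paper does not supply its own proof of this statement; it is quoted verbatim as Proposition~3 of Mironov (2017) and used as a black box. Your argument is correct and is essentially the original proof from that reference: split on the privacy-loss event $\{L\le\epsilon'\}$, handle the small-loss part pointwise, and bound the tail $\Pr_P[L>\epsilon']$ by Markov at order $\alpha-1$ together with the change-of-measure identity $\Ebb_P\bigl[(P/Q)^{\alpha-1}\bigr]=\Ebb_Q\bigl[(P/Q)^{\alpha}\bigr]$. There is nothing further to compare.
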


\subsection{Preliminaries: Perturbed Message Passing}\label{sec:one-layer-privacy}

Recall that each layer of a perturbed message-passing GNN updates the node features following Equation~\ref{equ:mp_sequence}, where $\mZ \sim \Ncal\bigl(0, \sigma^2 \mathrm{I}_d\bigr)$ is Gaussian noise added to obscure the true output of $\mp_{\mG},$ and $\Pi_\gK$ projects onto some convex feasible set $\gK.$
As shown in Proposition~\ref{prop:one-layer-privacy}, for a single layer, the privacy cost depends only on $\Delta(\mp)$ and $\sigma.$ However, by the composition theorem of RDP \citep{csf/mironov2017renyi}, stacking $K$ such layers yields a cost that scales linearly in $K$ (Corollary~\ref{cor:k-layer-privacy}).

\begin{proposition}
    \label{prop:one-layer-privacy}
    A one-layer perturbed message-passing GNN with mechanism $\mp$ is
    $
        \Bigl(\alpha, \frac{\alpha\,\Delta^2(\mp)}{2\sigma^2}\Bigr)\text{-R\'enyi DP},
    $
    which implies
    $
        \Bigl(\frac{\alpha\,\Delta^2(\mp)}{2\sigma^2} + \frac{\log(1/\delta)}{\alpha - 1}, \delta\Bigr)\text{-DP}.
    $
\end{proposition}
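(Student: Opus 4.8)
The statement is, at its core, the R\'enyi DP guarantee of the Gaussian mechanism applied to a matrix-valued query, followed by the standard R\'enyi-to-approximate-DP conversion. The plan is as follows.

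First I would fix an arbitrary pair of neighboring graphs $\mG, \mG'$ (edge-level or node-level adjacent per Definitions~\ref{def::edp}--\ref{def::ndp}) sharing the same input feature matrix $\mX = \mX^{(0)} \in \gK$, and observe that the one-layer mechanism factors as $\mX \mapsto \Pi_\gK\bigl(\mp_{\mG}(\mX) + \mZ\bigr)$ with $\mZ \sim \Ncal(0,\sigma^2 \mathrm{I}_d)$ added independently and isotropically to every one of the $|\mV|\,d$ output coordinates. The randomized core is therefore the Gaussian mechanism $\mX \mapsto \mp_{\mG}(\mX) + \mZ$ applied to a statistic whose Frobenius-norm sensitivity is, by the definition of $\Delta(\mp)$ in Equation~\ref{equ:sensitivity}, at most $\Delta(\mp)$.

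Second I would invoke the exact R\'enyi divergence between two isotropic Gaussians with common covariance $\sigma^2 \mathrm{I}$: $D_\alpha\bigl(\Ncal(\mu_1,\sigma^2\mathrm{I}) \,\|\, \Ncal(\mu_2,\sigma^2\mathrm{I})\bigr) = \tfrac{\alpha \|\mu_1-\mu_2\|_2^2}{2\sigma^2}$, where the matrices are viewed as vectors in $\Rbb^{|\mV|d}$. Taking $\mu_1 = \mp_{\mG}(\mX)$ and $\mu_2 = \mp_{\mG'}(\mX)$ and using $\|\mu_1-\mu_2\|_2 = \|\mp_{\mG}(\mX) - \mp_{\mG'}(\mX)\|_F \le \Delta(\mp)$ gives $D_\alpha \le \tfrac{\alpha\,\Delta^2(\mp)}{2\sigma^2}$; taking the supremum over neighboring graphs shows the Gaussian-mechanism core is $\bigl(\alpha, \tfrac{\alpha\,\Delta^2(\mp)}{2\sigma^2}\bigr)$-R\'enyi DP. I would then note that $\Pi_\gK$ is a fixed deterministic map that does not depend on the underlying graph, so the R\'enyi DP data-processing inequality shows post-composing with $\Pi_\gK$ cannot increase the R\'enyi divergence; hence the full one-layer mechanism inherits the same RDP bound.

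Third I would apply the standard conversion (Mironov~\cite{csf/mironov2017renyi}): any $(\alpha,\epsilon)$-RDP mechanism is $\bigl(\epsilon + \tfrac{\log(1/\delta)}{\alpha-1}, \delta\bigr)$-DP for every $\delta\in(0,1)$, and substituting $\epsilon = \tfrac{\alpha\,\Delta^2(\mp)}{2\sigma^2}$ yields the claimed $(\epsilon,\delta)$-DP statement. This argument is essentially routine; the only points that require care are (i) ensuring the Frobenius-norm sensitivity feeds correctly into the isotropic-Gaussian divergence formula — which works precisely because the same variance $\sigma^2$ is used on all $|\mV|d$ coordinates, so the vectorized $\ell_2$-distance equals the Frobenius distance — and (ii) justifying the projection as genuine post-processing, which holds since $\gK$ is a fixed convex set independent of the graph. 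The $K$-layer extension in Corollary~\ref{cor:k-layer-privacy} then follows immediately by applying the RDP composition theorem (Theorem~\ref{the:rdp_composition}) across the $K$ layers, each contributing $\tfrac{\alpha\,\Delta^2(\mp)}{2\sigma^2}$.
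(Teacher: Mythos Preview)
Your proposal is correct and follows essentially the same approach as the paper's own proof: invoke the Gaussian-mechanism RDP bound from Mironov~\cite{csf/mironov2017renyi} for the map $X \mapsto \mp_{\mG}(X) + Z$, apply post-processing for the projection $\Pi_\gK$, and then convert RDP to $(\epsilon,\delta)$-DP. Your write-up is simply more explicit about the vectorization/Frobenius-norm point and the data-independence of $\gK$, which the paper leaves implicit.
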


\begin{corollary}
    \label{cor:k-layer-privacy}
    By applying the composition theorem of R\'enyi differential privacy, a $K\text{-layer}$ perturbed message-passing GNN satisfies
    $
        \Bigl(\alpha, \frac{K\,\alpha\,\Delta^2(\mp)}{2\sigma^2}\Bigr)\text{-RDP},
    $
    which implies
    $
        \Bigl(\frac{K\,\alpha\,\Delta^2(\mp)}{2\sigma^2} + \frac{\log(1/\delta)}{\alpha - 1}, \delta\Bigr)\text{-DP}.
    $
\end{corollary}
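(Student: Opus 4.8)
The plan is to derive the $K$-layer bound directly from the single-layer bound of Proposition~\ref{prop:one-layer-privacy} via adaptive composition of R\'enyi differential privacy. First I would observe that the sequence $\{\mX^{(k)}\}_{k=0}^{K}$ defined by Equation~\ref{equ:mp_sequence} is generated by a Markov chain: conditioned on $\mX^{(k)}$, the next iterate $\mX^{(k+1)} = \Pi_\gK(\mp_{\mG}(\mX^{(k)}) + \mZ^{(k)})$ is obtained by applying the Gaussian mechanism to $\mp_{\mG}(\mX^{(k)})$ and then the data-independent post-processing map $\Pi_\gK$. Hence the entire $K$-layer mechanism is the adaptive composition $\Mcal_1, \dots, \Mcal_K$, where $\Mcal_k$ takes the prefix $(\mX^{(0)}, \dots, \mX^{(k)})$ and releases $\mX^{(k+1)}$.

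Next I would bound the RDP of each step uniformly. For neighboring graphs $\mG, \mG'$, the definition of sensitivity (Equation~\ref{equ:sensitivity}) gives $\|\mp_{\mG}(\mX) - \mp_{\mG'}(\mX)\|_F \leq \Delta(\mp)$ for every $\mX \in \gK$, i.e.\ the $\ell_2$-sensitivity of the per-layer map is at most $\Delta(\mp)$ regardless of which (random) iterate is fed in. By the Gaussian-mechanism RDP bound used in Proposition~\ref{prop:one-layer-privacy}, each $\Mcal_k$ is therefore $\bigl(\alpha, \tfrac{\alpha\,\Delta^2(\mp)}{2\sigma^2}\bigr)$-RDP for any fixed value of its input prefix, and post-composition with $\Pi_\gK$ does not increase the R\'enyi divergence by the data-processing inequality. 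Applying Theorem~\ref{the:rdp_composition} over the $K$ steps then yields $\bigl(\alpha, \sum_{k=1}^{K} \tfrac{\alpha\,\Delta^2(\mp)}{2\sigma^2}\bigr) = \bigl(\alpha, \tfrac{K\,\alpha\,\Delta^2(\mp)}{2\sigma^2}\bigr)$-RDP for the full output (and a fortiori for $\mX^{(K)}$ alone, by post-processing).

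Finally, I would convert to approximate DP using the standard RDP-to-$(\epsilon,\delta)$ conversion of Mironov~\cite{csf/mironov2017renyi}: an $(\alpha,\epsilon)$-RDP mechanism is $\bigl(\epsilon + \tfrac{\log(1/\delta)}{\alpha - 1}, \delta\bigr)$-DP, which with $\epsilon = \tfrac{K\,\alpha\,\Delta^2(\mp)}{2\sigma^2}$ gives exactly the claimed $\bigl(\tfrac{K\,\alpha\,\Delta^2(\mp)}{2\sigma^2} + \tfrac{\log(1/\delta)}{\alpha - 1}, \delta\bigr)$-DP guarantee.

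The only genuinely delicate point is the adaptive-composition step: the input to layer $k$ is itself a random variable whose law differs between $\mG$ and $\mG'$, so one is not simply composing mechanisms on a fixed dataset. The resolution is that the per-step RDP bound holds \emph{uniformly} over all admissible inputs in $\gK$ (because $\Delta(\mp)$ is a supremum over $\gK$), which is precisely the hypothesis under which the adaptive RDP composition theorem applies; I would make this uniformity explicit before invoking Theorem~\ref{the:rdp_composition}. Everything else is routine --- and this corollary is deliberately the linear-in-$K$ baseline that the subsequent contractive analysis is designed to improve upon.
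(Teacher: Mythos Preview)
Your proposal is correct and follows the same approach as the paper: apply Proposition~\ref{prop:one-layer-privacy} to each layer, compose via Theorem~\ref{the:rdp_composition}, then convert with Theorem~\ref{thm:RDP_to_DP}. The paper's proof is a two-sentence sketch; your version is simply a more careful unpacking of the same argument, including the observation that the per-step sensitivity bound holds uniformly over $\gK$ so that adaptive composition is justified.
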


This linear growth of the privacy budget is common in multi-hop settings, such as \citep{uss/sajadmanesh2023gap}, but is troublesome for deep GNNs, especially when many message-passing steps are needed for long-range dependencies. Excessively large privacy loss often undermines the model's utility.


\begin{proof}[Proof of Proposition~\ref{prop:one-layer-privacy}]
    By the Gaussian mechanism guarantee under R\'enyi differential privacy (RDP) \citep[Corollary 3]{csf/mironov2017renyi}, the map
    $X \mapsto \mp_{\mG}(X) + Z$
    is $\bigl(\alpha,\alpha\,\Delta^2(\mp)/(2\sigma^2)\bigr)\text{-RDP}.$ The post-processing theorem then ensures that projecting to $\gK$ does not increase the privacy cost. Converting from RDP to $(\epsilon,\delta)\text{-DP}$ \citep[Proposition 3]{csf/mironov2017renyi} completes the proof.
\end{proof}

\begin{proof}
[Proof of Corollary~\ref{cor:k-layer-privacy}]
    The proof follows directly from applying the RDP composition Theorem~\ref{the:rdp_composition} to Proposition~\ref{prop:one-layer-privacy} for $K$ independent layers. The RDP guarantee of the $K$-layer GNN is then converted to $(\epsilon,\delta)$-DP using Theorem~\ref{thm:RDP_to_DP}.
\end{proof}

\subsection{Proof of results in Section~\ref{sec:convergent-privacy}}
\label{app:convergent-privacy}
\begin{proof}[Proof of Theorem~\ref{thm:privacy_convergence_CNI}]
    The proof is reproduced from the original proof in \cite{icml/BokShifted24}, we include it here for self-containedness and to clarify that the contractive condition (Lipschitz constant bounded by $\gamma < 1$) is only required for iteration steps $k \geq 2$.

    Let $\mX^{(k)}$ and $\mX'^{(k)}$ be the outputs of the two CNI processes, that is,
    \begin{align*}
        \mX^{(k+1)}  & = \Pi_{\gK}\bigl(\phi_{k+1}(\mX^{(k)}) + \mZ^{(k+1)}\bigr),    \\
        \mX'^{(k+1)} & = \Pi_{\gK}\bigl(\phi'_{k+1}(\mX'^{(k)}) + \mZ'^{(k+1)}\bigr).
    \end{align*}
    The key idea in the proof, following \cite{icml/BokShifted24}, is to construct an auxiliary interpolating sequence $\{\widetilde{\mX}^{(k)}\}_{k=0}^K$ between the two CNI processes. For each step $k$, we define: \begin{equation}\label{eq:interpolated_cni}
        \begin{aligned}
        \widetilde{\mX}^{(k+1)} = &\Pi_{\gK}\bigl(\lambda_{k+1}\phi_{k+1}(\mX^{(k)}) \\
        & \quad + (1-\lambda_{k+1})\phi'_{k+1}(\widetilde{\mX}^{(k)}) + \mZ^{(k+1)}\bigr),
        \end{aligned}
    \end{equation}
    That is, $\widetilde{\mX}^{(k+1)}$ interpolates between using $\phi_{k+1}$ and $\phi'_{k+1}$ at each step with a mixing parameter $\lambda_{k+1} \in [0, 1]$, where $\lambda_{K} = 1$ so that $\widetilde{\mX}^{(K)} = \mX^{(K)}$. Note that this interpolation process uses the same noise vector $\mZ^{(k+1)}$ as in the original CNI process for $\mX^{(k+1)}$.

    We now recall a lemma from \cite{icml/BokShifted24} that establishes the trade-off function bound for one step of interpolation.
    \begin{lemma}[{\cite[Lemma~3.2]{icml/BokShifted24}}]
        \label{lem:one-step-tradeoff}
        Suppose that $\phi$ and $\phi'$ are $c$-Lipschitz functions and that $\|\phi(x) - \phi'(x)\| \leq s$ for all $x \in \gK$.
        Then for any $\lambda \geq 0$ and any random variable $\widetilde{\mX}$ satisfying $\|\mX - \widetilde{\mX}\| \leq z$, there is
        \begin{align*}
             & T(\lambda \phi(\mX) + (1-\lambda) \phi'(\widetilde{\mX}) + \gN(0, \sigma^2), \phi'(\mX') + \gN(0, \sigma^2)) \geq \\
             & \quad T(\widetilde{\mX}, \mX') \otimes G\left(\frac{\lambda(cz +s)}{\sigma}\right),
        \end{align*}
    \end{lemma}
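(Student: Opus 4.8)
To prove Lemma~\ref{lem:one-step-tradeoff}, the plan is to adapt the \emph{shift‑reduction} argument behind privacy amplification by iteration to the trade‑off–function language. The only external facts needed are the post‑processing inequality (Lemma~\ref{lemma:post_processing}), the tensorization identities for trade‑off functions together with monotonicity of $\otimes$ and $f\otimes\mathrm{Id}=f$ (Lemma~\ref{lem:tradeoff_composition}), and a Gaussian smoothing/translation estimate.

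First I would fix a coupling of $(\mX,\widetilde{\mX})$ realizing $\|\mX-\widetilde{\mX}\|\le z$ almost surely, keep $\mX'$ alongside it, and let $Z\sim\gN(0,\sigma^2 I)$ be independent of everything. Set $\mu := \lambda\,\phi(\mX)+(1-\lambda)\,\phi'(\widetilde{\mX})$, $\mu_0 := \phi'(\widetilde{\mX})$ and $\nu := \phi'(\mX')$, so the quantity to bound is $T(\mu+Z,\ \nu+Z)$. The key elementary estimate is the almost‑sure bound
\[
  \|\mu-\mu_0\| \;=\; \lambda\,\|\phi(\mX)-\phi'(\widetilde{\mX})\| \;\le\; \lambda\bigl(\|\phi(\mX)-\phi'(\mX)\| + \|\phi'(\mX)-\phi'(\widetilde{\mX})\|\bigr) \;\le\; \lambda\,(s+cz) \;=:\; \rho ,
\]
using $\|\phi-\phi'\|\le s$ on $\gK$, the $c$‑Lipschitzness of $\phi'$, and $\|\mX-\widetilde{\mX}\|\le z$; thus the two location vectors differ by a (random) vector of norm at most $\rho$.

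Then I would split the claim into two reductions. For the \emph{unshifted} pair, the map $w\mapsto\phi'(w)+Z$ is a randomized post‑processing applied to both $\widetilde{\mX}$ and $\mX'$, so Lemma~\ref{lemma:post_processing} gives $T(\mu_0+Z,\ \nu+Z)=T(\phi'(\widetilde{\mX})+Z,\ \phi'(\mX')+Z)\ge T(\widetilde{\mX},\ \mX')$. For the \emph{shift}, I would invoke a Gaussian translation lemma of the form: whenever $W,W'$ are coupled with $\|W-W'\|\le\rho$ a.s.\ and $R$ is arbitrary, $T(W+Z,\ R)\ge T(W'+Z,\ R)\otimes G(\rho/\sigma)$; applying it with $W=\mu$, $W'=\mu_0$, $R=\nu+Z$, and then chaining with the first reduction and monotonicity of $\otimes$ (with $G(0)=\mathrm{Id}$, $f\otimes\mathrm{Id}=f$ covering the degenerate case $\rho=0$), yields $T(\mu+Z,\ \nu+Z)\ge T(\widetilde{\mX},\ \mX')\otimes G(\lambda(cz+s)/\sigma)$, which is exactly the assertion.

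The main obstacle is the Gaussian translation lemma itself. I would first establish its \emph{conditional} version: if $v$ is a fixed vector with $\|v\|\le\rho$, then $T(U+v+Z,\ R)\ge T(U+Z,\ R)\otimes G(\rho/\sigma)$ for any random $U$ and any $R$ — this is in essence the privacy of the Gaussian mechanism at $\ell_2$‑sensitivity $\rho$ phrased via the tensor product, and is where the Gaussian‑specific identity $G_{\mu_1}\otimes G_{\mu_2}=G_{\sqrt{\mu_1^2+\mu_2^2}}$ (used to organize the noise budget between the ``smoothing'' part and the ``shift'' part) enters. The remaining subtlety is that in our application the shift $\mu-\mu_0$ is a \emph{random} vector correlated with $\widetilde{\mX}$ and $\mX'$, not a fixed one; I would handle this by conditioning on the realization of $(\mX,\widetilde{\mX},\mX')$ — for which the shift is a fixed vector of norm $\le\rho$, so the conditional version applies — and then lifting back by averaging over the conditioning, using that trade‑off functions are convex, hence the set of distribution pairs with $T(\cdot,\cdot)\ge f$ is convex and closed under mixtures.
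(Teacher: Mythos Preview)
The paper does not prove this lemma itself; it is quoted from \cite{icml/BokShifted24} and used as a black box inside the proof of Theorem~\ref{thm:privacy_convergence_CNI}. So there is nothing in this paper to compare against, and the question reduces to whether your outline stands on its own.

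Your estimate $\|\mu-\mu_0\|\le\lambda(cz+s)$ and the post-processing reduction $T(\mu_0+Z,\nu+Z)\ge T(\widetilde{\mX},\mX')$ are both fine. The gap is the ``Gaussian translation lemma'' you invoke: the inequality $T(W+Z,R)\ge T(W'+Z,R)\otimes G(\rho/\sigma)$ for arbitrary $R$ and coupled $\|W-W'\|\le\rho$ is \emph{false}. Take $W=\rho e$, $W'=0$ deterministic and $R\sim \gN(-\rho e,\sigma^2 I)$. Then the left side is $T\bigl(\gN(\rho e,\sigma^2 I),\gN(-\rho e,\sigma^2 I)\bigr)=G(2\rho/\sigma)$, while the right side is $G(\rho/\sigma)\otimes G(\rho/\sigma)=G(\sqrt{2}\,\rho/\sigma)$, and $G(2\rho/\sigma)<G(\sqrt{2}\,\rho/\sigma)$ since $G_\mu$ is decreasing in $\mu$. (This configuration is realizable in the lemma's own setup: take $\phi'=\mathrm{id}$, $\phi(x)=x+se$, $\mX=\widetilde{\mX}=0$, $\mX'=-\lambda s e$.) The error is that the single Gaussian $Z$ is being counted twice --- once to pay for the shift (the $G(\rho/\sigma)$ factor) and once more inside $T(W'+Z,R)$ --- so your two-stage plan ``translate, then post-process'' cannot be chained.

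The shift and the map $\phi'$ have to be absorbed in a \emph{single} post-processing kernel acting on the product pair $\bigl(\widetilde{\mX},\gN(0,\sigma^2 I)\bigr)$ versus $\bigl(\mX',\gN(\rho e,\sigma^2 I)\bigr)$. Concretely: given input $(x,z)$, sample $\delta$ from the conditional law of $\mu-\mu_0$ given $\widetilde{\mX}=x$, apply a Blackwell kernel $g_\delta$ sending $\gN(0,\sigma^2 I)\mapsto \gN(\delta,\sigma^2 I)$ and $\gN(\rho e,\sigma^2 I)\mapsto \gN(0,\sigma^2 I)$ (such a kernel exists precisely because $\|\delta\|\le\rho$, i.e.\ $G(\|\delta\|/\sigma)\ge G(\rho/\sigma)$), and output $\phi'(x)+g_\delta(z)$. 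This pushes the first product to the law of $\mu+Z$ and the second to the law of $\nu+Z$, giving the lemma directly via Lemma~\ref{lemma:post_processing}. Your conditioning-then-averaging instinct for the random shift is the right one, but it must live inside this one-step construction; the intermediate $T(\mu_0+Z,\nu+Z)$ is simply too small to serve as a waypoint.
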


    With the help of this lemma, we can then bound the trade-off function between $\mX^{(k)}$ and $\mX'^{(k)}$ by first bound the distance between $\mX^{(k)}$ and $\widetilde{\mX}^{(k)}$ and then applying the above lemma iteratively.

    We let $c_k$ be the maximum of the Lipschitz constants of $\phi_k$ and $\phi'_k$ for $k=1,\dots,K$, and by our assumption, $c_k < 1$ for $k=2,\dots,K$. We now track the distance between $\mX^{(k)}$ and the interpolated sequence $\widetilde{\mX}^{(k)}$.

    \noindent \textbf{Claim.} Let $z_k$ be a sequence of non-negative numbers given by
    $z_0 = 0$ and $z_{k+1} = (1-\lambda_{k+1})(c_{k+1} z_k + s)$ for $k=0, \dots, K-1$. Let $\widetilde{\mX}^{(k)}$ be the output of the interpolated CNI process in Equation~\ref{eq:interpolated_cni}.
    Then we have
    $\|\mX^{(k)} - \widetilde{\mX}^{(k)}\| \leq z_k$ for all $k=0, \dots, K$.

    \noindent \textbf{Proof of Claim.}
    The claim is proved by induction. For $k=0$, we have $\|\mX^{(0)} - \widetilde{\mX}^{(0)}\| = 0$ as $\widetilde{\mX}^{(0)} = \mX^{(0)}$.
    For $k\geq 1$, we have
    \begin{align*}
       & \|\mX^{(k)} - \widetilde{\mX}^{(k)}\| X \leq \|\Pi_{\gK}(\phi_k(\mX^{(k-1)}) + \mZ^{(k)})                                                           \\
                                              & \quad- \Pi_{\gK}(\lambda_k\phi_k(\mX^{(k-1)}) + (1-\lambda_k)\phi'_k(\widetilde{\mX}^{(k-1)}) + \mZ^{(k)})\|     \\
                                              & \leq \|\phi_k(\mX^{(k-1)}) - \lambda_k\phi_k(\mX^{(k-1)}) - (1-\lambda_k)\phi'_k(\widetilde{\mX}^{(k-1)})\| \\
                                              & \leq (1-\lambda_k)\|\phi_k(\mX^{(k-1)}) - \phi'_k(\widetilde{\mX}^{(k-1)})\|                                \\
                                              & \leq (1-\lambda_k)(\|\phi_k(\mX^{(k-1)}) - \phi'_k(\widetilde{\mX}^{(k-1)})\|                               \\
                                              & \quad  + \|\phi_k(\mX^{(k-1)}) - \phi_k(\widetilde{\mX}^{(k-1)})\|)                                         \\
                                              & \leq (1-\lambda_k)(c_k z_{k-1} + s),
    \end{align*}
    where the second inequality follows from the fact that $\Pi_{\gK}$ is a projection onto a convex set and hence has Lipschitz constant $1$ (\citep[Lemma~2.9]{icml/BokShifted24}).
    This concludes the proof of the claim by induction.
    We let $a_{k+1} = \lambda_{k+1}(c_{k+1} z_k + s)$ for $k=0, \dots, K-1$. In particular, $a_1 = \lambda_1 s$ regardless of the value $c_1$ since it is multiplied with $z_0 = 0$. This is the main our observation that the Lipchitz constant of the first iteration does not affect the constant $a_1$ and consequently the privacy guarantee.

    We can now apply Lemma~\ref{lem:one-step-tradeoff} iteratively to conclude the proof.
    There is,
    \begin{align*}
       & T(\mX^{(K)}, \mX'^{(K)})  = T(\widetilde{\mX}^{(K)}, \mX'^{(K)})                                                                                                                               \\
                                 & \leq T(\lambda_K \phi_K(\mX^{(K-1)}) + (1-\lambda_K)\phi'_K(\widetilde{\mX}^{(K-1)}) + \mZ^{(K)},                                                                    \\
                                 & \quad \phi'_K(\mX'^{(K-1)}) + \mZ^{(K)})                                                                                                                             \\
                                 & \leq T(\widetilde{\mX}^{(K-1)}, \mX'^{(K-1)}) \otimes G\left(\frac{a_K}{\sigma}\right)                                                                               \\
                                 & \leq T(\widetilde{\mX}^{(K-2)}, \mX'^{(K-1)}) \otimes G\left(\frac{a_{K-1}}{\sigma}\right) \otimes G\left(\frac{a_K}{\sigma}\right)                                  \\
                                 & \leq T(\widetilde{\mX}^{(0)}, \mX'^{(0)}) \otimes G\left(\frac{a_1}{\sigma}\right) \otimes G\left(\frac{a_2}{\sigma}\right) \cdots G\left(\frac{a_K}{\sigma}\right). \\
                                 & \leq G\left(\frac{1}{\sigma}\sqrt{\sum_{k=1}^Ka_k^2} \right)
    \end{align*}
    where the last inequality follows from the fact that $\widetilde{\mX}^{(0)}= \mX^{(0)}$ and property of Gaussian tradeoff function in Lemma~\ref{lem:tradeoff_composition}.

    Lastly, by the assumption that $\phi_k$ and $\phi'_k$ are $\gamma$-Lipschitz for $k=2,\dots,K$, we can let $c_k = \gamma$ for $k=2,\dots,K$. We set $\lambda_k = \frac{\gamma^{K -k} (1- \gamma^2)}{1 - \gamma^{K -k +2} - \gamma^k + \gamma^{K}}$ for $k=1,\dots,K$ as computed in \cite[Lemma~C.6]{icml/BokShifted24}. Then $z_k = \frac{(1-\gamma^k)(1 - \gamma^{K -k})}{(1 + \gamma^{K})(1-\gamma)}s$, and consequently, $a_k = \frac{\gamma^{K-k}(1+ \gamma)}{1 +\gamma^k} s$.
    In this case, $\sum_{k=1}^K a_k^2 = \frac{(1-\gamma^K)(1+\gamma)}{(1+\gamma^K)(1-\gamma)}s^2$. This concludes the proof of the theorem.
\end{proof}

\begin{proof}[Proof of Theorem~\ref{thm:privacy_convergence_general}]
    Let $G, G'$ be two adjacent graphs dataset. Let $\mp_G$ represent message passing operation for a fixed graph $G$, with projection $\Pi_\gK$ ensuring output lies in convex set $\gK$.
    Then starting from $\mX^{(0)}$, the perturbed message passing GNN with $\mp$ can be represented as the CNI process
    $$\text{CNI}(\mX^{(0)}, \{\mp_G\}_{k=1}^K, \{\Ncal(0, \sigma^2 \mI_d)\}_{k=1}^K, \gK).$$
    Similarly, the perturbed message passing GNN with $\mp'$ for graph $G'$ is represented as
    $$\text{CNI}(\mX^{(0)}, \{\mp_{\mG'}\}_{k=1}^K, \{\Ncal(0, \sigma^2 \mI_d)\}_{k=1}^K, \gK).$$

    By assumptions, both $\mp_G$ and $\mp_{\mG'}$ are contractive with Lipschitz constant $\gamma< 1$ for layers $k=2,\dots,K$ and sensitivity bound of $\mp$ implies that $\|\mp_G(x) - \mp_{\mG'}(x)\| \leq \Delta(\mp)$ for all $x\in \gK$.
    This shows that we can the meta-theorem in Theorem~\ref{thm:privacy_convergence_CNI} to bound the trade-off function between the two CNI processes as
    \begin{align*}
        T(\mX^{(K)}, \mX'^{(K)}) & \geq G\left(\frac{\Delta(\mp)}{\sigma}\sqrt{\frac{1-\gamma^K}{1+\gamma^K} \frac{1+\gamma}{1-\gamma} }\right)
    \end{align*}
    This implies that the $K$-layer perturbed message passing GNN with $\mp$ is $\frac{\Delta(\mp)}{\sigma}\sqrt{\frac{1-\gamma^K}{1+\gamma^K} \frac{1+\gamma}{1-\gamma} }$-Gaussian differential privacy (GDP) defined in Definition~\ref{def:GDP}. By applying Lemma~\ref{lem:gdp_to_rdp}, we then obtain the stated RDP guarantee.
    Then by applying Theorem~\ref{thm:RDP_to_DP}, we can convert the RDP guarantee to $(\epsilon,\delta)$-DP.
\end{proof}

\subsection{Proof of results in Section~\ref{sec:gcn_privacy}}
\label{app:gcn_privacy}
\begin{proof}[Proof of Proposition~\ref{prop:lip}]
    Let $\mY^{(k-1)}, \mY'^{(k-1)} \in \gK$ be two inputs to the message passing operator $\mp_G$ at layer $k$.
    Since $k\geq 2$, the residue term $\beta \mX^{(0)}$ is independent of the input $\mY^{(k-1)}$ and $\mY'^{(k-1)}$, and thus does not affect the Lipschitz constant.
    We can write the difference between the outputs of $\mathsf{CGL}$ as follows:
    \begin{align*}
         & \|\mathsf{CGL}(\mY^{(k-1)}) - \mathsf{CGL}(\mY'^{(k-1)})\|                                                                           \\
         & \leq \|\Clip(\alpha_1 \hat{\mA}\mY^{(k-1)}+\alpha_2\mathsf{Mean}(\mY^{(k-1)})) + \beta \mX^{(0)}                                     \\
         & \quad \quad - \Clip(\alpha_1 \hat{\mA}\mY'^{(k-1)}+\alpha_2\mathsf{Mean}(\mY'^{(k-1)})) - \beta \mX^{(0)}\|                          \\
         & \leq \Clip\|\alpha_1 (\hat{\mA}\mY^{(k-1)}-\hat{\mA}\mY'^{(k-1)})\\
         & \quad \quad +\alpha_2(\mathsf{Mean}(\mY^{(k-1)})-\mathsf{Mean}(\mY'^{(k-1)}))\| \\
         & \leq \Clip(\alpha_1 + \alpha_2)\|\mY^{(k-1)}-\mY'^{(k-1)}\|\\
         & = \Clip\|\mY^{(k-1)}-\mY'^{(k-1)}\|,
    \end{align*}
    where the second line follows from the fact that the operator norms of $\hat{\mA}$ and $\mathsf{Mean}$ are bounded by $1$.
\end{proof}

\begin{proof}[Proof of Theorem~\ref{thm:edge-sensitivity}]
    Let $\mG, \mG'$ be two edge adjacent graphs and $\hat{\mA}, \hat{\mA}'$ be the corresponding adjacency matrices of $\mG, \mG'$ respectively. Without loss of generality, we assume that the edge $e_{uv}$ is added to $\mG$ to form $\mG'$ for two nodes $u$ and $v$. Then the $\mathsf{CGL}$ layer updates the node features as follows:
    \begin{align*}
        \mX^{(k)} & = \Clip(\alpha_1 \hat{\mA}\mX^{(k-1)}+\alpha_2\mathsf{Mean}(\mX^{(k-1)})) + \beta \mX^{(0)}, \\
        \mX'^{(k)} & = \Clip(\alpha_1 \hat{\mA}'\mX'^{(k-1)}+\alpha_2\mathsf{Mean}(\mX'^{(k-1)})) + \beta \mX^{(0)}.
    \end{align*}
    The difference between the two outputs is given by the aggregation of $\hat{\mA}$ and $\hat{\mA}'$. Then the edge-level sensitivity $\Delta_e(\mathsf{CGL})$ is the amount to bound $\|\hat{\mA}\mX^{(k)} - \hat{\mA}'\mX'^{(k)}\|_F$.
    Since only one edge is added, the difference between $\hat{\mA}$ and $\hat{\mA}'$ is only on the row corresponding to $u$ and $v$.

    For row $u$, we need to bound $\|(\hat{\mA} \mX^{(k)})_u - (\hat{\mA}' \mX^{(k)})_u\|_2$. For $(\hat{\mA}\mX^{(k)})_u$, we can write it as
    \begin{equation}
        \begin{aligned}
            (\hat{\mA}\mX^{(k)})_u & = \frac{1}{d_u + 1} \mX^{(k)}_u + \sum_{w \in \mN_u} \frac{1}{\sqrt{d_u+1}\sqrt{d_w+1}} \mX^{(k)}_w \\
        \end{aligned}
    \end{equation}
    where $d_u$ is the degree of node $u$ in graph $\mG$ and $\mN_u$ is the neighbors of node $u$ in graph $\mG$.
    For $(\hat{\mA}' \mX^{(k)})_u$, with the same notation for $d_u$ and $\mN_u$, we can write it as
    \begin{equation}
        \begin{aligned}
            (\hat{\mA}'\mX^{(k)})_u & = \frac{1}{d_u + 2} \mX^{(k)}_u + \sum_{w \in \mN_u} \frac{1}{\sqrt{d_u+2}\sqrt{d_w+1}} \mX^{(k)}_w \\
                                    & + \frac{1}{\sqrt{d_u+2}\sqrt{d_v'+1}} \mX^{(k)}_v                                                   \\
        \end{aligned}
    \end{equation}
    where $d'_v$ is the degree of node $v$ in graph $\mG'$.
    
    Then there is
    \begin{align*}
         & \|(\hat{\mA}\mX^{(k)})_u - (\hat{\mA}' \mX^{(k)})_u\|_2                                                                                                         \\
         & \leq \left\| \frac{1}{d_u + 1} \mX^{(k)}_u - \frac{1}{d_u + 2} \mX^{(k)}_u \right\|_2                                                                           \\
         & + \left\| \sum_{w \in \mN_u} \hspace{-2mm}\frac{1}{\sqrt{d_u+1}\sqrt{d_w+1}} \mX^{(k)}_w - \hspace{-2mm}\sum_{w \in \mN_u}\hspace{-2mm} \frac{1}{\sqrt{d_u+2}\sqrt{d_w+1}} \mX^{(k)}_w \right\|_2 \\
         & + \left\| \frac{1}{\sqrt{d_u+2}\sqrt{d_v'+1}} \mX^{(k)}_v \right\|_2                                                                                            \\
         & \leq \frac{1}{(d_u + 1)(d_u + 2)}  + \sum_{w\in N_u}\frac{1}{\sqrt{d_w+1}}(\frac{1}{\sqrt{d_u+1}} - \frac{1}{\sqrt{d_u+2}})                                     \\
         & + \frac{1}{\sqrt{d_u+2}\sqrt{d_v'+1}}                                                                                                                           \\
         & \leq \frac{1}{(d_u + 1)(d_u + 2)}  + \frac{d_u}{\sqrt{d_w+1}}(\frac{1}{\sqrt{d_u+1}} - \frac{1}{\sqrt{d_u+2}})                                                  \\
         & + \frac{1}{\sqrt{d_u+2}\sqrt{d_v'+1}}
    \end{align*}
    To bound $\frac{d_u}{\sqrt{d_w+1}}(\frac{1}{\sqrt{d_u+1}} - \frac{1}{\sqrt{d_u+2}})$, we study the monotonicity of the function $f(x) = \frac{x}{\sqrt{x+1}} - \frac{x}{\sqrt{x+2}}$ for $x>0$. It turns out that $f(x)$ only has one positive critical point and is around $x=2.9$, and when evaluated on integers, $f(x)$ increases from $x=1$ to $x=3$ and then decreases from $x=3$ to $\infty$.
    Thus, when the minimum degree $D_{\min}$ of $\mG$ is larger than $3$, the function $f(x)$ is maximized at $x= D_{\min}$, and we have
    \begin{align*}
          \|(\hat{\mA}\mX^{(k)})_u &- (\hat{\mA}' \mX^{(k)})_u\|_2                                                                                      \\
         & \leq \frac{1}{(D_{\min} +1) (D_{\min} + 2)}   \\
         &\quad  +  \frac{D_{\min}}{\sqrt{D_{\min}+1}}(\frac{1}{\sqrt{D_{\min}+1}} - \frac{1}{\sqrt{D_{\min}+2}})\\
         &\quad + \frac{1}{\sqrt{D_{\min}+2}\sqrt{D_{\min}+1}}                                                                                               \\
    \end{align*}
    where we use the fact that the minimum degree of $\mG'$ is larger than that of $\mG$.
    When $1\leq D_{\min} \leq 3$, we can bound the function $f(x)$ by $f(3)= \frac{3}{\sqrt{4}} - \frac{3}{\sqrt{5}}$. This results in
    \begin{align*}
         & \|(\hat{\mA}\mX^{(k)})_u - (\hat{\mA}' \mX^{(k)})_u\|_2                                                                                                            \\
         & \leq \frac{1}{(D_{\min} +1) (D_{\min} + 2)}  + (\frac{3}{\sqrt{4}} - \frac{3}{\sqrt{5}})\frac{1}{\sqrt{D_{\min}+1}}\\
        &\quad \quad + \frac{1}{\sqrt{D_{\min}+2}\sqrt{D_{\min}+1}}
    \end{align*}
    For notation convenience, we use $C(D_{\min})$ to denote the piecewise function of $D_{\min}$, which is defined as
    \begin{equation}
        C(D_{\min}) = \begin{cases}
            \frac{D_{\min}}{\sqrt{D_{\min}+1}} - \frac{D_{\min}}{\sqrt{D_{\min}+2}} & D_{\min} > 3          \\
            (\frac{3}{\sqrt{4}} - \frac{3}{\sqrt{5}})                               & 1\leq D_{\min} \leq 3 \\
        \end{cases}
        \label{eq:C_Dim}
    \end{equation}
    
    Therefore, the effect of modifying an edge on a single node $u$ of $\hat{\mA}\mX^{(k)}$ is bounded by
    \begin{equation}
        \begin{aligned}
             & \|(\hat{\mA}\mX^{(k)})_u - (\hat{\mA}' \mX^{(k)})_u\|_2                                                                             \\
             & \leq \frac{1}{(D_{\min} +1) (D_{\min} + 2)}  + \frac{C(D_{\min})}{\sqrt{D_{\min}+1}} \\
             &\quad + \frac{1}{\sqrt{D_{\min}+2}\sqrt{D_{\min}+1}}
        \end{aligned}
        \label{eq:single_node_sensitivity_bound}
    \end{equation}
    
    The same analysis can be applied to the row $v$ of $\hat{\mA}\mX^{(k)}$ and $\hat{\mA}'\mX^{(k)}$. The edge sensitivity $\Delta_e(\gcn)$ can then be bounded as the following:
   \begin{align*}
    & \Delta_e(\gcn) := \max_{\mG, \mG'} \|\hat{\mA}\mX^{(k)} - \hat{\mA}'\mX^{(k)}\|_F \\
    & \leq \alpha_1 \Clip\sqrt{
           \begin{multlined}[t]
           \|(\hat{\mA}\mX^{(k)})_u - (\hat{\mA}' \mX^{(k)})_u\|_2^2 \\
           + \|(\hat{\mA}\mX^{(k)})_v - (\hat{\mA}' \mX^{(k)})_v\|_2^2
           \end{multlined}
    } \\
    & \leq \sqrt{2}\alpha_1 \Clip \left( \frac{1}{(D_{\min}+1)(D_{\min}+2)} + \frac{C(D_{\min})}{\sqrt{D_{\min}+1}} \right. \\
    & \qquad \left. + \frac{1}{\sqrt{D_{\min}+2}\sqrt{D_{\min}+1}} \right)
\end{align*}
\end{proof}

\begin{proof}[Proof of Theorem~\ref{thm:node-sensitivity}]
Let $\mG, \mG'$ be two node adjacent graphs and $\hat{\mA}, \hat{\mA}'$ be the corresponding adjacency matrices of $\mG, \mG'$ respectively. Without loss of generality, we assume that the node $v$ is added to $\mG$ to form $\mG'$ and connected to nodes $N_v$ in $\mG$. The layer updates the node features as follows:
\begin{align*}
    \mX^{(k)} & = \Clip(\alpha_1 \hat{\mA}\mX^{(k-1)}+\alpha_2\mathsf{Mean}(\mX^{(k-1)})) + \beta \mX^{(0)}, \\
    \mX'^{(k)} & = \Clip(\alpha_1 \hat{\mA}'\mX'^{(k-1)}+\alpha_2\mathsf{Mean'}(\mX'^{(k-1)})) + \beta \mX^{(0)}.
\end{align*}
The difference between the two outputs is given by the aggregation of $\hat{\mA}$ and $\hat{\mA}'$ as well as the mean operator $\mathsf{Mean}$ and $\mathsf{Mean'}$ since $\mG'$ has one more node than $\mG$.
Then the node-level sensitivity $\Delta_v(\mathsf{CGL})$ can be bounded as follows:
\begin{align*}
    & \Delta_n(\mathsf{CGL})                                                                                                                                                          \\
    & = \max_{\mG, \mG'} \|\mathsf{CGL}(\mX^{(k)}) - \mathsf{CGL}'(\mX^{(k)})\|_F                                                                                                                      \\
    & \leq \|\mX_v'\|_2 + \sum_{u \in \mN_v} \alpha_1 \Clip\|(\hat{\mA}\mX^{(k)})_u - (\hat{\mA}'\mX^{(k)})_u\|_2                                                                      \\
    & + \quad \sum_{u \in \mN_v} \alpha_2 \Clip\|\mathsf{Mean}(\mX^{(k)})_u - \mathsf{Mean'}(\mX^{(k)})_u\|_2
\end{align*}
For the  first term $\|\mX_v'\|_2$, it is bounded by $1$ by constraint of $\gK$.
For the second term, we can argue similar as in the proof of Theorem~\ref{thm:edge-sensitivity}. For nodes $u \in \mN_v$, there is
\begin{align*}
     & \|(\hat{\mA}\mX^{(k)})_u - (\hat{\mA}' \mX^{(k)})_u\|_2                                                                                                         \\
     & \leq \left\| \frac{1}{d_u + 1} \mX^{(k)}_u - \frac{1}{d_u + 2} \mX^{(k)}_u \right\|_2                                                                           \\
     & + \left\| \sum_{w \in \mN_u} \hspace{-2mm}\frac{1}{\sqrt{d_u+1}\sqrt{d_w+1}} \mX^{(k)}_w - \sum_{w \in \mN_u} \hspace{-2mm}\frac{1}{\sqrt{d_u+2}\sqrt{d_w+1}} \mX^{(k)}_w \right\|_2 \\
     & + \left\| \frac{1}{\sqrt{d_u+2}\sqrt{d_v'+1}} \mX^{(k)}_v \right\|_2                                                                                            \\
     & \leq \frac{1}{(d_u + 1)(d_u + 2)}  + \sum_{w\in N_u}\frac{1}{\sqrt{d_w+1}}(\frac{1}{\sqrt{d_u+1}} - \frac{1}{\sqrt{d_u+2}})                                     \\
     & + \frac{1}{\sqrt{d_u+2}\sqrt{d_v'+1}}                                                                                                                           \\
     & \leq \frac{1}{(d_u + 1)(d_u + 2)}  + \frac{d_u}{\sqrt{d_w+1}}(\frac{1}{\sqrt{d_u+1}} - \frac{1}{\sqrt{d_u+2}})                                                  \\
     & + \frac{1}{\sqrt{d_u+2}\sqrt{d_v'+1}}                                                                                                                          \\
     & \leq \frac{1}{(D_{\min} +1) (D_{\min} + 2)}  + \frac{C(D_{\min})}{\sqrt{D_{\min}+1}} \\
     & + \frac{1}{\sqrt{D_{\min}+2}\sqrt{d_v'+1}}
\end{align*}
Then the summation term $\sum_{u \in \mN_v} \|(\hat{\mA}\mX^{(k)})_u - (\hat{\mA}'\mX^{(k)})_u\|_2$ can be bounded by
\begin{align*}
     & \sum_{u \in \mN_v} \|(\hat{\mA}\mX^{(k)})_u - (\hat{\mA}'\mX^{(k)})_u\|_2                                             \\
     & \leq \sum_{u \in \mN_v} \left(\frac{1}{(D_{\min} +1) (D_{\min} + 2)}  + \frac{C(D_{\min})}{\sqrt{D_{\min}+1}} \right. \\
     & \qquad \left. + \frac{1}{\sqrt{D_{\min}+2}\sqrt{d_v'+1}}\right) \\
     & \leq |d_v'| \left(\frac{1}{(D_{\min} +1) (D_{\min} + 2)}  + \frac{C(D_{\min})}{\sqrt{D_{\min}+1}} \right. \\
     & \qquad \left. + \frac{1}{\sqrt{D_{\min}+2}\sqrt{d_v'+1}}\right)             \\
\end{align*}
\begin{align*}
 & \leq
    \frac{\sqrt{d_v'}}{(D_{\min} +1) (D_{\min} + 2)}  + \frac{C(D_{\min})\sqrt{d_v'}}{\sqrt{D_{\min}+1}} \\
    &\qquad + \frac{\sqrt{d_v'}}{\sqrt{D_{\min}+2}\sqrt{d_v'+1}}\\
     & \leq
    \frac{\sqrt{d_v'}}{(D_{\min} +1) (D_{\min} + 2)}  + \frac{C(D_{\min})\sqrt{d_v'}}{\sqrt{D_{\min}+1}} + \frac{1}{\sqrt{D_{\min}+2}}
    \\
     & \leq
    \frac{\sqrt{D_{\max}}}{(D_{\min} +1) (D_{\min} + 2)}  + \frac{C(D_{\min})\sqrt{D_{\max}}}{\sqrt{D_{\min}+1}} + \frac{1}{\sqrt{D_{\min}+2}},
\end{align*}
where $D_{\max}$ is the maximum degree of the graph $\mG$ and $d_v'$ is the degree of node $v$ in graph $\mG'$.
Additionally, there is
\begin{align*}
     & \|\mathsf{Mean}(\mX^{(k)})_u - \mathsf{Mean'}(\mX^{(k)})_u\|_2                                                                             \\
     & = \left\| \frac{1}{|V|} \sum_{w \in V} \mX^{(k)}_w - \frac{1}{|V|+1} \sum_{w \in V} \mX^{(k)}_w - \frac{1}{|V|+1} \mX^{(k)}_{v'}\right\|_2 \\
     & \leq \left\| \frac{1}{|V|(|V|+1)} \sum_{w \in V} \mX^{(k)}_w - \frac{1}{|V|+1} \mX^{(k)}_{v'}\right\|_2                                    \\
     & \leq \frac{2}{|V|+1},
\end{align*}
where $|V|$ is the number of nodes in graph $\mG$.

Then for the node-level sensitivity of one layer of $\mathsf{CGL}$, we have
\begin{align*}
     & \Delta_n(\mathsf{CGL})                                                                                                                                                          \\
     & = \max_{\mG, \mG'} \|\mathsf{CGL}(\mX^{(k)}) - \mathsf{CGL}'(\mX^{(k)})\|_F                                                                                                                      \\
     & \leq \|\mX_v'\|_2 + \sum_{u \in \mN_v} \alpha_1 \Clip\|(\hat{\mA}\mX^{(k)})_u - (\hat{\mA}'\mX^{(k)})_u\|_2                                                                      \\
     & + \quad \sum_{u \in \mN_v} \alpha_2 \Clip\|\mathsf{Mean}(\mX^{(k)})_u - \mathsf{Mean'}(\mX^{(k)})_u\|_2                                                                          \\
     & \leq 1 +  \alpha_1 \Clip \left(\frac{\sqrt{D_{\max}}}{(D_{\min} +1) (D_{\min} + 2)}  \right.\\
     & \qquad \left. + \frac{C(D_{\min})\sqrt{D_{\max}}}{\sqrt{D_{\min}+1}} + \frac{1}{\sqrt{D_{\min}+2}}\right) \\
     & + \quad \alpha_2 \Clip\frac{2|V|}{|V|+1}
\end{align*}
\end{proof}

\begin{proof}[Proof of Corollary~\ref{the:CGL_rdp}]
    The proof follows from plugging the edge-level and node-level sensitivity of $\gcn$ into Theorem~\ref{thm:privacy_convergence_general}.
\end{proof}

\begin{proof}[Proof of Theorem~\ref{the:overall_epsilon}]
    The result comes from directly applying the composition theorem of DP to different modules of $\gcn$.
\end{proof}

\subsection{Experimental Setup}
\label{app:exp_setup}
This section presents  experiments setups and counterpart baselines in detail.

\subsubsection{Datasets}
\label{app:dataset}
Table~\ref{tab:datasets} lists statistics of these datasets, including Amazon co-purchase networks (\photo,  \computers~\cite{corr/abs-1811-05868}), social network (\facebook~\cite{corr/abs-1102-2166}),  and citation networks (\cora, \pubmed~\cite{sen2008collective}).
These datasets are also widely adopted
as benchmark datasets to evaluate various GNNs~\cite{nips/GuC0SG20,uss/sajadmanesh2023gap,nips/chien2023differentially,nips/YaoJRJ23}.
In addition, we use the synthetic graph generation algorithm developed in IGNN~\cite{nips/GuC0SG20}, to produce chain-structured datasets for convenient and generalized verification of multi-hop aggregations.
The chain-structured dataset can be configured with various number of nodes per chain, number of chains per class, and the number of classes.
In Table~\ref{tab:syn_datasets}, the statistics of the four synthesized chain-structured datasets, namely \chainS, \chainM,  \chainL, and \chainX\, are shown.
We unify the feature dimension to $5$ and focus on the structure of chain-structured dataset, and we configure the number of nodes to $\{8,10,15\}$ per chain, and the number of chains to $\{3,5\}$ per class.

\begin{table}[!h]
\caption{Standard Graph Dataset Statistics. ``\#Tra'' and ``\#Test'' are the ratios of the total nodes used for training and testing.}
\setlength\tabcolsep{2pt}
\begin{center}
\begin{tabular}{l|c|c|c|c|rr}
\hline
    \multicolumn{1}{l|}{\textbf{Dataset}} & \textbf{Node} & \textbf{Edge} & \textbf{Feature} & \textbf{Class} &\textbf{\#Tra} &\textbf{\#Test}
\\ \hline
\computers	 & $13,471$ 	 & $491,722$ 	 & $767$ & $10$ & $10\%$ & $20\%$ \\
\facebook 	 & $26,406$ 	 & $2,117,924$ 	 & $501$ & $6$ & $10\%$ & $20\%$ \\
\pubmed 	 &$19,717$& $88,648$&$500$ &$3$&$10\%$ & $20\%$\\
\cora 	 & $2,708$ & $10,556$ &$1,433$ &$7$&$10\%$ & $20\%$\\
\photo 	 &$7,535$ & $238,162$ & $745$ & $8$&$10\%$ & $20\%$\\
\hline
\end{tabular}
\end{center}
\label{tab:datasets}
\end{table}

\begin{table}[!h]
\caption{Chain-structured Datasets Statistics. Each chain has the same number of nodes and ``Node'' is the sum of nodes across chains.}
\setlength\tabcolsep{2pt}
\begin{center}
\begin{tabular}{l|c|c|c|c|rr}
\hline
    \multicolumn{1}{l|}{\textbf{Dataset}} & \textbf{Node} & \textbf{Chain} &  \textbf{Feature} & \textbf{Class} &\textbf{\#Tra} &\textbf{\#Test}
\\ \hline
\chainS 	 & $48$ 	 & $6$ 	 & $5$ & $2$ & $8$ & $ 32$ \\
\chainM	 & $60$ 	 & $6$ 	 & $5$ & $2$ & $10$ & $40$ \\
\chainL	 & $90$ 	 & $6$ 	 & $5$ & $2$ & $15$ & $60$ \\
\chainX&       $150$& $10$ &$5$ & $2$ & $25$& $100$\\\hline
\end{tabular}
\end{center}
\label{tab:syn_datasets}
\end{table}


\subsubsection{Hardware and software}
Our experiments have been conducted on the Ubuntu 20.04.2 LTS server, with AMD Ryzen Threadripper 3970X 32-core CPUs of 256 GB CPU memory and NVIDIA GeForce RTX 3090 of 24GB memory.
$\gcn$ equipped with edge- and node-level privacy has been implemented in PyTorch using PyTorch-Geometric (PyG)~\cite{corr/abs-1903-02428} framework.
DP implementation adopts the \texttt{autodp} library~\cite{AutoDP2025}, which includes analytical moments accountant and private training with DP-SGD~\cite{ccs/AbadiCGMMT016}.

\subsubsection{Baseline algorithms} 
\label{app:baseline}

The evaluated edge-level private algorithms are introduced below.
\begin{itemize}
    \item \dpdgc: This algorithm decouples the neighborhood aggregation process from the transformation of node features, thereby avoiding direct aggregation of sensitive data.
    This decoupling enables improved privacy guarantees through the DP composition theorem, allowing for a more efficient balance between privacy guarantee and  model performance.
    \item \gap: This algorithm preserves edge privacy via
aggregation perturbation, \ie, adding calibrated Gaussian noise to the output of the aggregation function for hiding the presence of a particular edge.
\gap's architecture involves pre-training encoder, aggregation module, and classification module, so that  \gap\   can reduce the privacy costs of the perturbed aggregations by one-time computation over lower-dimensional embeddings.
\item  \sage: This algorithm adopts the graph perturbation, building on the popular GraphSAGE architecture as its backbone GNN model.
To realize graph perturbation, \sage\ perturbs the adjacency matrix of graph  using the asymmetric randomized response.
\item \mlpE: Typical \mlpE\ model is trained over node features, without referring to graph edges. 
Thus, $\epsilon=0$ always holds for \mlpE, revealing nothing about edges and providing complete edge-level privacy.

\end{itemize}

 By extending edge-level private algorithms, node-level DP detailed below can be realized by protecting full information (\ie, edges, node features) of a node.

\begin{itemize}
    \item \dpdgc:  This algorithm extends \dpdgc\ and bounds the out-degree of nodes to realize the node-level DP.
    Thus, \dpdgc\ reduces the dependency of DP noise variance on the maximum node degree, improving the trade-off between privacy guarantee and model utility.
    \item \gap: This algorithm extends \gap\ for bounded-degree graphs, where each node has controllable influence to its neighbors by sampling a limited number of neighbors.
    \item \sage: This algorithm adapts DP-SGD to the GraphSAGE model, and simultaneously adds the noise to aggregation function constrained by node-level sensitivity.
    \item \mlpE: This algorithm is trained with DP-SGD without accessing the edges.
\end{itemize}

\subsubsection{Vanilla algorithms} The non-private versions of all private algorithms above are used to quantify the accuracy loss of corresponding EDP and NDP algorithms.

 \subsubsection{Model architectures \& configurations of $\gcn$}
We concatenate $K$ layers ranging from $1$ to $20$, with $K, \alpha_1, \beta, \Clip$ as the hyper-parameters. 
If the range between two interactive nodes is long, we use larger values to test the influence of hops.
Specifically, for \chainS\  and \chainM\  datasets, $\gcn$ uses the number of hops $K\in\{15,10,9,8,7,6,5,4,3,2,1\}$, while for \chainS\  and \chainM\  datasets consisting of longer chains, $\gcn$ takes $K\in \{20,17,15,\allowbreak13, 11,9,7,5\}$. 
As for standard datasets listed in Table~\ref{tab:datasets}, $K$ is chosen from the union sets of the aforementioned two sets, i.e., $\{20,17,15,\allowbreak13,11,10,9,8,7,6,5,4,3,2,1\}$.
To align with \gap's configuration~\cite{uss/sajadmanesh2023gap}, we set the number of hidden units to $\{16,64\}$ and use the SeLU activation function~\cite{nips/KlambauerUMH17}
at every layer.
We adopt the Adam optimizer over $100$ epochs with a learning rate $0.001$, and pick the best accuracy to report.

\subsubsection{Privacy configuration and parameters}
We implement convergent privacy allocation based on \gap's privacy budget accounting mechanism~\cite{uss/sajadmanesh2023gap} and numerically calibrate noise level $\sigma$ by setting $\epsilon$.
All edge/node-level private algorithms adopt the Gaussian mechanism to sample the i.i.d. noise to realize the desired $(\epsilon,\delta)$-DP.
For comprehensive evaluation, we consider different choices of $\epsilon\in\{1, 2, 4, 8, 16, 32\}$, while $\delta$ is set to be smaller than the inverse number of edges for EDP or the inverse number of nodes for NDP.

\subsection{Computational Overhead}
\label{app:overhead}

\begin{table}[h]
\small
    \caption{Execution Time of $3$ Training-Test Runs (seconds) for EDP}\label{tab:reulst_table_running_time}
    \vspace{1pt}
    \centering
    \setlength\tabcolsep{3pt}
    \adjustbox{max width=\textwidth}{
    \begin{tabular}{c|c|c|c|c|c}
    \toprule[1pt]
    \textbf{Dataset} & \textbf{$\gcn$} & \dpdgc& \gap & \sage & \mlpE \\
    \cmidrule[1pt]{1-6}
    \computers & $3.70$ & $87.72$ & $6.62$ & $3.37$ & $39.37$ \\
    \facebook & $4.34$ & $233.65$ & $6.36$ & $4.29$ & $70.93$ \\
    \pubmed & $3.46$ & $149.07$ & $6.08$ & $3.78$ & $55.11$ \\
    \cora & $3.05$ & $16.85$ & $5.98$ & $3.10$ & $10.88$ \\
    \photo & $3.07$ & $43.84$ & $6.00$ & $3.33$ & $23.96$ 
    \\\hline
    \chainS & $2.92$ & $5.93$ & $5.54$ & $2.99$ & $4.85$ \\
    \chainM & $2.76$ & $5.62$ & $5.67$ & $3.16$ & $4.76$ \\
    \chainL & $2.77$ & $5.93$ & $5.39$ & $3.15$ & $4.61$ \\
    \chainX & $2.91$ & $5.93$ & $5.23$ & $3.06$ & $5.04$ \\
    \bottomrule
    \end{tabular}
    }
\end{table}

\begin{table}[h]
\small
    \caption{Execution Time of $3$ Training-Test Runs (seconds) for NDP}\label{tab:reulst_table_running_time_ndp}
    \vspace{1pt}
    \centering
    \setlength\tabcolsep{2pt}
    \adjustbox{max width=\textwidth}{
    \begin{tabular}{c|c|c|c|c|c}
    \toprule[1pt]
    \textbf{Dataset} & \textbf{$\gcn$} & \dpdgc& \gap & \sage & \mlpE \\
    \cmidrule[1pt]{1-6}
    \computers                  & $3.23$ & $526.12$ & $351.83$ & $386.04$ & $1452.91$ \\
    \facebook                   & $4.39$ & $2056.61$ & $726.72$ & $767.28$ & $2217.46$ \\
    \pubmed                     & $2.82$ & $1183.27$ & $528.56$ & $418.21$ & $1472.64$ \\
    \cora                       & $2.52$ & $96.29$ & $92.04$ & $62.06$ & $184.39$ \\
    \photo                      & $2.91$ & $298.93$ & $217.79$ & $203.94$ & $329.01$ \\\hline
    \chainS   & $2.26$ & $10.96$ & $9.41$ & $1.51$ & $13.77$ \\
    \chainM                    & $2.27$ & $11.37$ & $10.52$ & $1.49$ & $13.56$ \\
    \chainL                    & $2.29$ & $10.48$ & $7.71$ & $1.53$ & $13.74$ \\
    \chainX                    & $2.26$ & $11.66$ & $7.21$ & $1.51$ & $13.63$ \\
    \bottomrule
    \end{tabular}
    }
\end{table}

We measure the latency and memory usage to assess the overhead of $\gcn$ when $\epsilon=1,K=20$ as an example. 

\subsubsection{Execution time}
For latency, we use the total execution time (training and testing) for EDP, and in Table~\ref{tab:reulst_table_running_time} we list the average of running $3$ times. 
We found $\gcn$ has less latency than baselines including $3.22$ seconds of $\gcn$ averaged among datasets, comparing to $61.8$ seconds of \dpdgc, $5.99$ seconds of \gap,  and $24.39$ seconds \mlpE, except for comparative $3.25$ seconds of \sage. Notable latency increase is observed on the \facebook\ dataset because its large number of nodes (26,406) and edges (2,117,924), and $\gcn$ shows a larger lead comparing the other private GNNs: $233.65$ and $6.36$ for \dpdgc\  and \gap, respectively. 

As for NDP, among all the evaluated models, $\gcn$ consistently achieves the lowest execution times across all datasets, with an average of only $2.77$ seconds per training-test run. This is substantially faster than the other methods, particularly in comparison to resource-intensive models like DPDGC and MLP, which have average execution times of $467.52$ seconds and $633.57$ seconds, respectively. $\gcn$’s ability to process both large and small datasets demonstrates its potential advantage for real-world applications where computational resources or time may be limited.  

\subsubsection{Memory costs}
For memory, we measure the max memory usage over the 3 runs and report the results in Table~\ref{tab:reulst_table_memory} under EDP. For chain-structured datasets, the memory consumption averaged across datasets of $\gcn$ (1.09MB) is close to \sage\ (1.12MB) and \mlpE\ (1.05MB), while much lower than \gap\ (1.55MB) and \dpdgc\ (1.55MB). On the other hand, we found that a higher memory consumption of $\gcn$ is introduced in the standard graph datasets, compared to the baseline methods due to the graph loading mechanism in \gap and controllable parameters for contractiveness. 
Still, the memory consumption of $\gcn$ is reasonable, which is smaller than our GPU memory limit (24GB) for all datasets. We argue the privacy, and utility benefits of $\gcn$ outweigh its memory costs here.
 
\begin{table}[H]
\label{tab:ndp_acc}
\small
    \caption{Max Memory Usage (MB) for EDP}\label{tab:reulst_table_memory}
    \vspace{1pt}
    \centering
    \setlength\tabcolsep{3pt}
    \adjustbox{max width=\textwidth}{
    \begin{tabular}{c|c|c|c|c|c}
    \toprule[1pt]
    \textbf{Dataset} & \textbf{$\gcn$} & \dpdgc& \gap & \sage & \mlpE \\
    \cmidrule[1pt]{1-6}
    \computers & $314.04$ & $1833.63$ & $138.40$ & $154.57$ & $129.27$ \\
    \facebook & $525.18$ & $3657.00$ & $315.84$ & $359.63$ & $251.24$ \\
    \pubmed & $272.70$ & $2594.26$ & $190.12$ & $167.64$ & $98.26$ \\
    \cora & $104.94$ & $377.78$ & $27.55$ & $32.85$ & $77.01$ \\
    \photo & $170.28$ & $1021.70$ & $77.54$ & $83.49$ & $80.14$ \\\hline
        \chainS & $1.08$ & $1.34$ & $1.41$ & $1.10$ & $1.03$ \\
    \chainM & $1.09$ & $1.42$ & $1.46$ & $1.11$ & $1.03$ \\
    \chainL & $1.09$ & $1.69$ & $1.58$ & $1.13$ & $1.03$ \\
    \chainX & $1.11$ & $2.42$ & $1.82$ & $1.17$ & $1.04$ \\
    \bottomrule
    \end{tabular}
    }
\end{table}

\begin{table}[H]
\label{tab:ndp_acc}
\small
    \caption{Max Memory Usage (MB) for NDP. Max node degree is 20.}\label{tab:reulst_table_memory_ndp}
    \vspace{1pt}
    \centering
    \setlength\tabcolsep{3pt}
    \adjustbox{max width=\textwidth}{
    \begin{tabular}{c|c|c|c|c|c}
    \toprule[1pt]
    \textbf{Dataset} & \textbf{$\gcn$} & \dpdgc& \gap & \sage & \mlpE \\
    \cmidrule[1pt]{1-6}
    \computers & $357.28$ & $1831.35$ & $319.80$ & $1621.09$ & $130.88$  \\
    \facebook & $531.76$ & $3484.93$ & $479.18$ & $1676.24$ & $223.17$  \\
    \pubmed & $314.45$ & $2597.45$ & $345.10$ & $577.42$ & $101.49$    \\
    \cora & $121.39$ & $469.11$ & $299.51$ & $890.55$ & $78.15$     \\
    \photo & $195.02$ & $1051.85$ & $239.85$ & $1352.86$ & $81.29$   \\\hline
     \chainS & $1.40$ & $1.37$ & $2.13$ & $1.10$ & $1.09$     \\
    \chainM & $1.41$ & $1.44$ & $2.32$ & $1.11$ & $1.10$     \\
    \chainL & $1.42$ & $1.71$ & $2.79$ & $1.13$ & $1.11$   \\
    \chainX & $1.46$ & $2.44$ & $3.72$ & $1.19$ & $1.13$  \\
    \bottomrule
    \end{tabular}
    }
\end{table}

\begin{table*}[!h]
\caption{Privacy Auditing via LinkTeller and G-MIA. AUC score is reported.}
\setlength\tabcolsep{9pt}
\begin{center}
\adjustbox{max width=\textwidth}{
\begin{tabular}{l|c|c|c|c|c|c|c|c|c|c|c|c|c|c}
  \toprule[1pt]
\multirow{2}{*}{\textbf{Dataset}} & \multicolumn{7}{c|}{\textbf{LinkTeller}} & \multicolumn{7}{c}{\textbf{G-MIA}} \\
 & \textbf{$\epsilon=\inf$} & \textbf{$\epsilon=1$} & \textbf{$\epsilon=2$} & \textbf{$\epsilon=4$} & \textbf{$\epsilon=8$} & \textbf{$\epsilon=16$} & \textbf{$\epsilon=32$} & \textbf{$\epsilon=\inf$} & \textbf{$\epsilon=1$} & \textbf{$\epsilon=2$} & \textbf{$\epsilon=4$} & \textbf{$\epsilon=8$} & \textbf{$\epsilon=16$} & \textbf{$\epsilon=32$} \\ \cmidrule[1pt]{1-15}
\facebook&$0.977$&$0.463$&$0.483$&$0.482$&$0.478$&$0.472$&$0.462$&$0.567$&$0.587$&$0.587$&$0.587$&$0.583$&$0.587$&$0.588$\\\hline
\pubmed&$0.981$&$0.446$&$0.442$&$0.441$&$0.445$&$0.442$&$0.449$&$0.600$&$0.599$&$0.598$&$0.598$&$0.601$&$0.601$&$0.605$\\\hline
\cora&$0.998$&$0.427$&$0.448$&$0.399$&$0.443$&$0.451$&$0.450$&$0.645$&$0.500$&$0.500$&$0.500$&$0.500$&$0.500$&$0.500$\\\hline
\photo&$0.962$&$0.475$&$0.404$&$0.421$&$0.428$&$0.417$&$0.434$&$0.678$&$0.677$&$0.682$&$0.682$&$0.678$&$0.672$&$0.676$\\  \hline
\computers&$0.860$&$0.367$&$0.364$&$0.372$&$0.363$&$0.361$&$0.384$&$0.702$&$0.701$&$0.661$&$0.707$&$0.708$&$0.711$&$0.701$\\ \bottomrule
\end{tabular}}
\end{center}
\label{tab:privacy_audit}
\end{table*}

Table~\ref{tab:reulst_table_memory_ndp} presents the maximum memory usage (in MB) for NDP across a variety of datasets with a maximum node degree of 20, highlighting the efficiency of our method, $\gcn$. Across all datasets, $\gcn$ demonstrates significantly lower memory consumption compared to \dpdgc, which incurs the highest memory usage, particularly on large-scale datasets like \facebook\  (3484.93 MB for \dpdgc\  vs. 531.75 MB for $\gcn$) and \pubmed\  (2597.45 MB for \dpdgc\  vs. 314.44 MB for $\gcn$). While baseline methods such as \gap\  and \mlpE\  show lower memory usage on certain datasets, $\gcn$ achieves a favorable balance of efficiency and scalability, maintaining competitive memory consumption across both small and large graphs. Notably, on the \cora\  and \photo\  datasets, $\gcn$ uses 104.51 MB and 165.86 MB, respectively, which is substantially less than \dpdgc and only moderately higher than the most memory-efficient baselines.
Notably,  \sage\ incurs much larger memory overhead under NDP than that of EDP settings due to its $O(|\mV|^2)$ computational complexity.
Overall, these results illustrate that $\gcn$ effectively controls memory usage, outperforming \dpdgc\  by a wide margin and providing robust scalability under NDP settings.

\subsection{More Results of EDP and NDP}
\label{app:endp_supple}

\subsubsection{Analysis of EDP} 
Regarding the standard graph datasets, for \computers, \pubmed, \cora\ and \photo, $\gcn$ can outperform all the other baselines in most cases with varying $\epsilon$. 
It is also worth noting that $\gcn$ achieves fairly stable accuracy with with varying $\epsilon$ (e.g., $92.0\%$ to $92.4\%$ for \computers\ with $\epsilon$ growing from $1$ to $\inf$), suggesting the convergent privacy design is indeed effective. For \facebook, we found $\gcn$ outperforms the other baselines with small $\epsilon$ (i.e., $\epsilon \leq 4$), but falls behind at larger $\epsilon$. 
Yet, we argue that performance at low privacy budget is more critical for privacy-sensitive datasets like \facebook, and stable accuracy across privacy budget $\epsilon$ might be more desired.

Now, we highlight the results on the chain-structured datasets. In summary, we observe that the non-private versions (i.e., $\epsilon=\inf$) of \gap\ and $\gcn$ can achieve perfect classification accuracy, 
while \dpdgc, \sage\ and \mlpE\ show relatively low accuracy. 
The root cause is that features of a distant node can only be learned through a sufficient number of hops of message passing, \ie, aggregating and passing features of the neighboring node one by one, but \dpdgc\ and \sage\ are not designed to leverage long-hop interactions. For \mlpE, as the graph structure is not utilized at all, the result is close to random guessing.
After adding DP noises to protect edge-level memberships, the accuracy of both \gap\ and $\gcn$ drop, but the accuracy loss of \gap\ is more prominent (from $100\%$ down to $57.8\%$-$62.5\%$). \gap\ is able to maintain reasonable accuracy even with fairly tight privacy budget $\epsilon=1$ on small chain-structured datasets like \chainS\ and \chainM. On the other hand, \chainL\ and \chainX\ have relatively low accuracy even for $\gcn$ (e.g., $70.0\%$ and $66.0\%$ when $\epsilon=1$).
This observed phenomenon is attributed to the increased difficulty of GNN classification with longer chains, as DP noise is injected at every layer, compounding its impact.




\subsubsection{Analysis of NDP} 
 In line with the EDP results, the NDP results also demonstrate that precision increases in general with larger $\epsilon$ values.
As established in Theorem~\ref{thm:node-sensitivity}, NDP requires injecting more noise compared to EDP under the same privacy budget, hence, the accuracy of NDP is often lower than EDP for standard datasets. 
This phenomenon does not hold for chain-structured datasets as the original max node degree is very small, \ie, $0,1,2$. 
Ablation study on configuring different max node degree for NDP is in Section~\ref{sec:abla_degreenode}.

Notably, $\gcn$ consistently outperforms all other methods across all privacy budgets $\epsilon$ and across all nine datasets, while \dpdgc\ shows the second-best performance overall but remains significantly behind $\gcn$. 
For instance, $\gcn$ achieves $91.91\%$, which is $35\%$+ higher than the next best \mlpE\  of $63.44\%$,  $54\%$ higher than \dpdgc\  of $56.72\%$, and more than $60\%$ higher than \sage with $29.51\%$.
\mlpE represents the baseline of learning node features independently without graph topology.
In particular, $\gcn$ is the only framework that can surpass \mlpE in many cases, showing effective GNN learning over structural graphs.
In addition, \gap, \sage, and \mlpE\  generally yield lower values as the realization of NDP is more challenging than the that of EDP. 
Thus, Table~\ref{tab:reulst_table_overall_acc_top1} indicates strong and stable advantage of $\gcn$ over state-of-the-art baselines in NDP settings.

\subsubsection{More Results}

Table~\ref{tab:reulst_table_overall_acc_mean} shows the mean accuracy compared between $\gcn$ and other baselines across all datasets. 
For both EDP and NDP tasks, $\gcn$ always achieves the highest or second-highest accuracy, often outperforming competing methods by a notable margin. 
This is especially apparent in the NDP task, where the performance gap widens under stricter privacy budgets (lower $\epsilon$), demonstrating $\gcn$’s robustness in privacy-sensitive regimes.
 For example, under NDP with $\epsilon = 1$ on the \cora\ dataset, $\gcn$ achieves an accuracy of $67.47\%$, while the next best method, DPDGC, reaches just $31.04\%$ with a difference of over $50\%$ drop.
In the chain-structured datasets, $\gcn$ maintains strong performance even as the privacy constraints increase, whereas other models often see substantial accuracy drops. 
Notably, in standard graph datasets and higher privacy budgets, $\gcn$'s accuracy advantage is frequently greater than $10\%$ compared to the next best approach, as indicated by green arrows in the table. 
Even in the non-private setting, $\gcn$ delivers near-optimal or best accuracy across all datasets. These results highlight $\gcn$'s overall reliability, adaptability, and superior utility for both private and non-private graph learning tasks.

Figure~\ref{fig:non-priv} demonstrates the learning behavior on chain-structured datasets, where increasing the number of aggregation hops ($K$) leads to continuous improvement in accuracy for both \gap\ and $\gcn$ in general. This indicates that deeper message passing is beneficial for capturing information on long chains.
For the first $15$ hops, $\gcn$ outperforms \gap, suggesting it is more effective to aggregate node features at smaller range.

\begin{figure}[H]
    \subfloat[Chain-S]{
		\includegraphics[width = 2.5cm]{./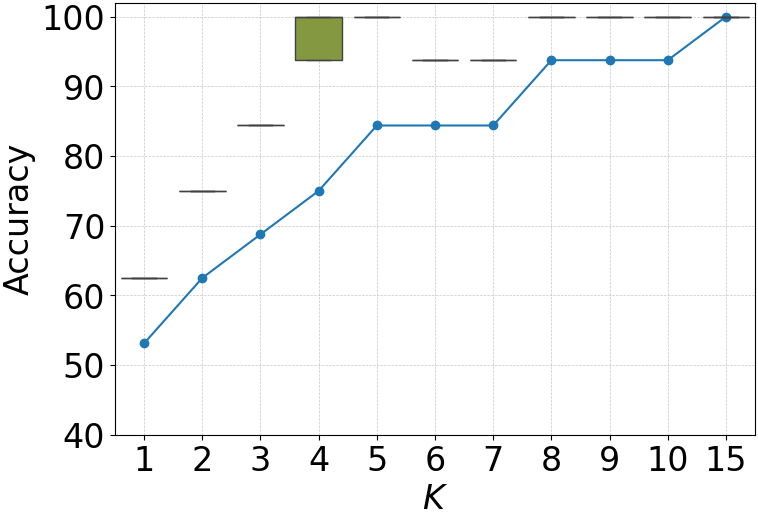}
		\label{img:acc5_chains1_inf}
		}
    \subfloat[Chain-M]{
		\includegraphics[width = 2.5cm]	{./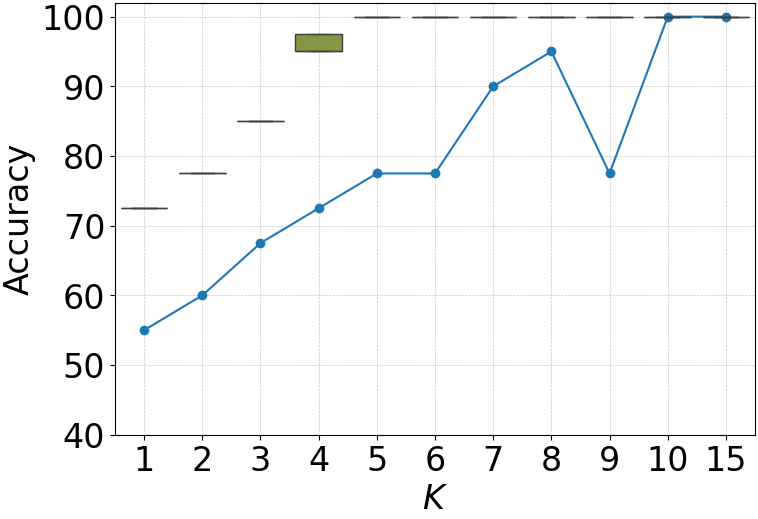}
	\label{img:acc5_chains2_inf}
		} 
    \subfloat[Chain-L]{
		\includegraphics[width = 2.5cm]{./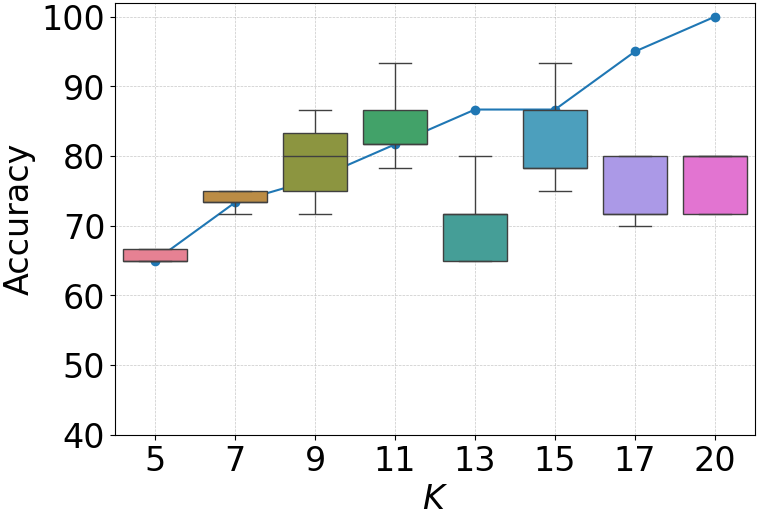}
	\label{img:acc5_chains3_inf}
		}  
	\caption{Accuracy of non-private $\gcn$ and \gap\ under different $K$.}
	\label{fig:non-priv}
\end{figure}

Figure~\ref{fig:k-study} and Figure~\ref{fig:K-study} provide the complete ablation study results on $K$ and $\epsilon$.

\begin{figure*}[h]
\subfloat[$\epsilon=1$ (\photo)]{
		\includegraphics[width = 2.6cm]{./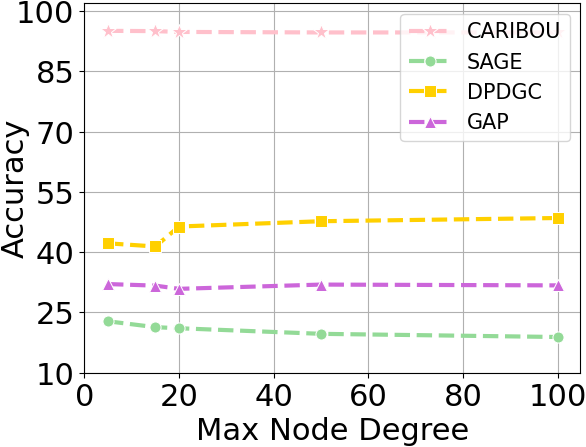}
		\label{fig:acc_nodedegree_e1_photo}
		}
\subfloat[$\epsilon=2$ (\photo)]{
		\includegraphics[width = 2.6cm]{./img/acc_nodedegree_e2_photo.png}
		\label{fig:acc_nodedegree_e2_photo}
		}
\subfloat[$\epsilon=4$ (\photo)]{
		\includegraphics[width = 2.6cm]{./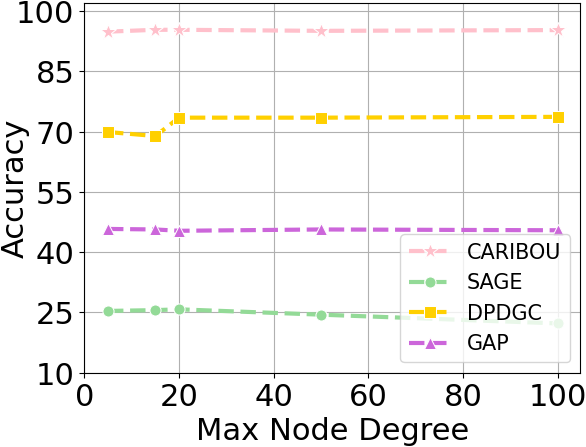}
		\label{fig:acc_nodedegree_e4_photo}
		}
\subfloat[$\epsilon=8$ (\photo)]{
		\includegraphics[width = 2.6cm]{./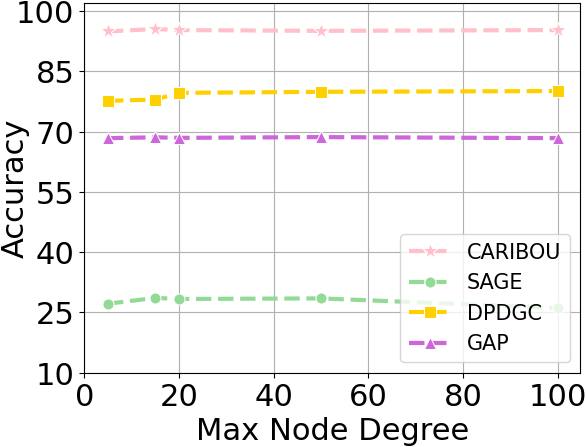}
		\label{fig:acc_nodedegree_e8_photo}
		} 
\subfloat[$\epsilon=16$ (\photo)]{
		\includegraphics[width = 2.6cm]{./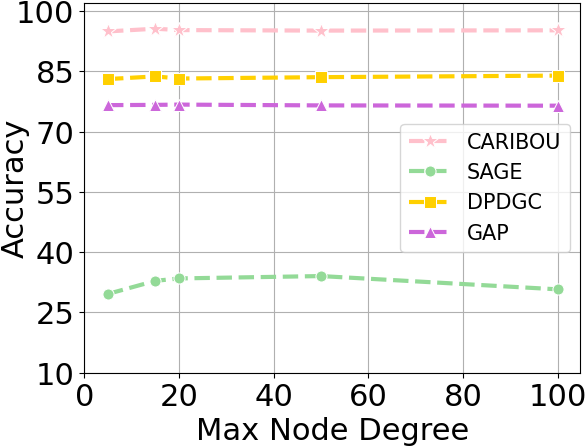}
		\label{fig:acc_nodedegree_e16_photo}
		}
\subfloat[$\epsilon=32$ (\photo)]{
		\includegraphics[width = 2.6cm]{./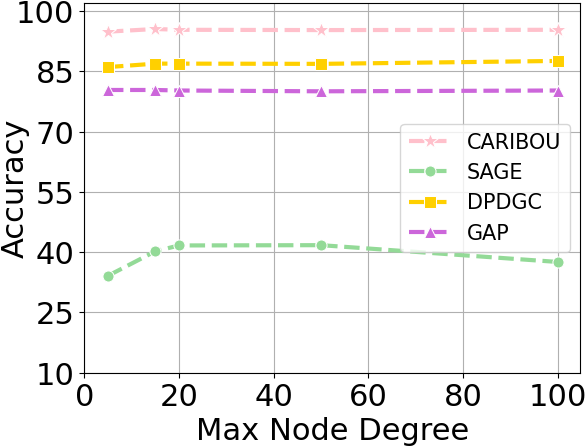}
		\label{fig:acc_nodedegree_e32_photo}
		}

\subfloat[$\epsilon=1$ (\chainS)]{
		\includegraphics[width = 2.6cm]{./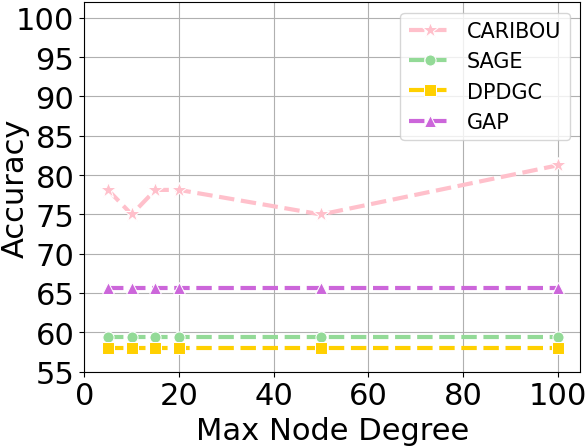}
		\label{fig:acc_nodedegree_e1_chains1}
		}
\subfloat[$\epsilon=2$ (\chainS)]{
		\includegraphics[width = 2.6cm]{./img/acc_nodedegree_e2_chains1.png}
		\label{fig:acc_nodedegree_e2_chains1}
		}
\subfloat[$\epsilon=4$ (\chainS)]{
		\includegraphics[width = 2.6cm]{./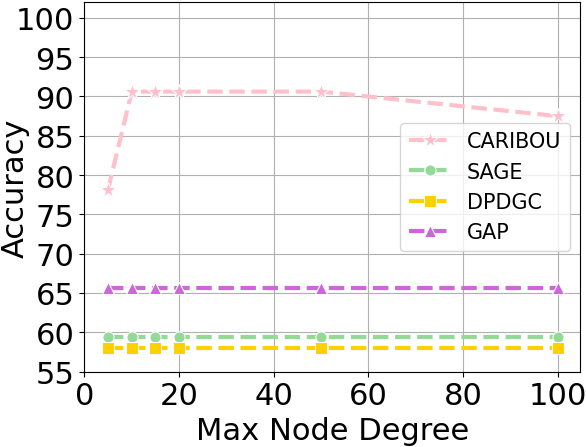}
		\label{fig:acc_nodedegree_e4_chains1}
		}
\subfloat[$\epsilon=8$ (\chainS)]{
		\includegraphics[width = 2.6cm]{./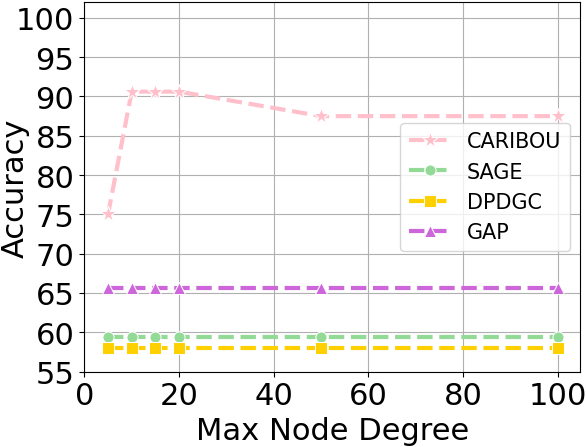}
		\label{fig:acc_nodedegree_e8_chains1}
		} 
\subfloat[$\epsilon=16$ (\chainS)]{
		\includegraphics[width = 2.6cm]{./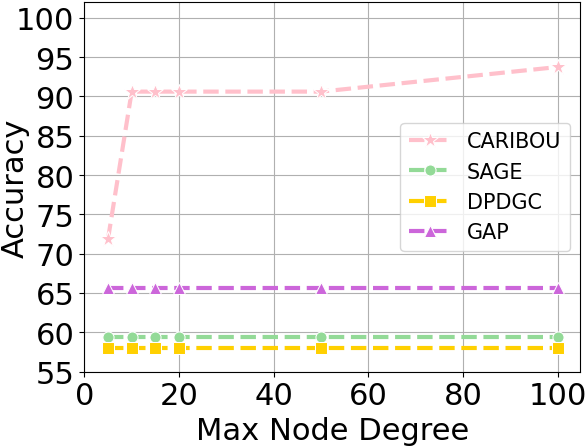}
		\label{fig:acc_nodedegree_e16_chains1}
		}
\subfloat[$\epsilon=32$ (\chainS)]{
		\includegraphics[width = 2.6cm]{./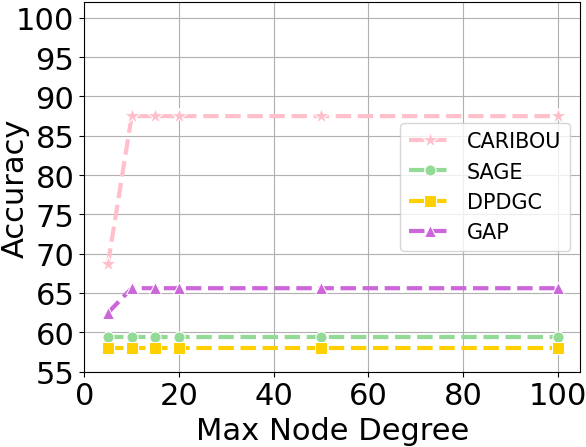}
		\label{fig:acc_nodedegree_e32_chains1}
		}
\caption{Ablation study of max node accuracy for NDP.}
        \label{fig:abla_nodedeg}
\end{figure*}

\begin{figure*}[h]
     \subfloat[$\epsilon=1$ (\chainS)]{
		\includegraphics[width = 2.6cm]{./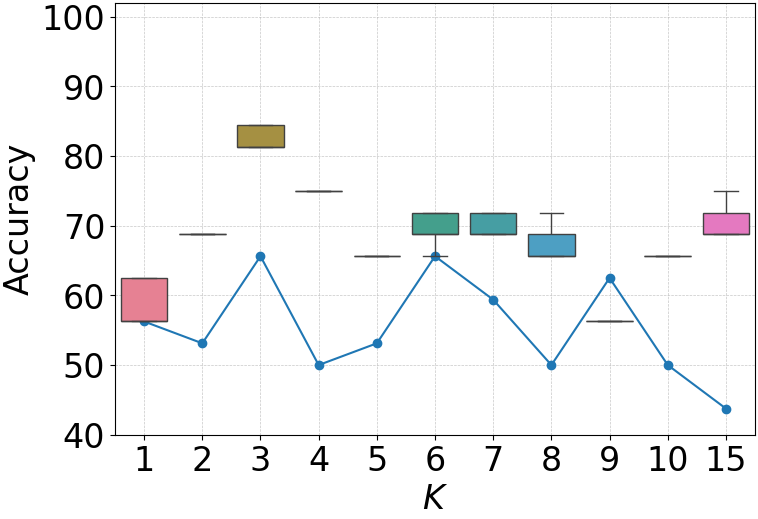}
		\label{fig:acc5_chains1_e1_1}
		}
        \subfloat[$\epsilon=2$ (\chainS)]{
		\includegraphics[width = 2.6cm]{./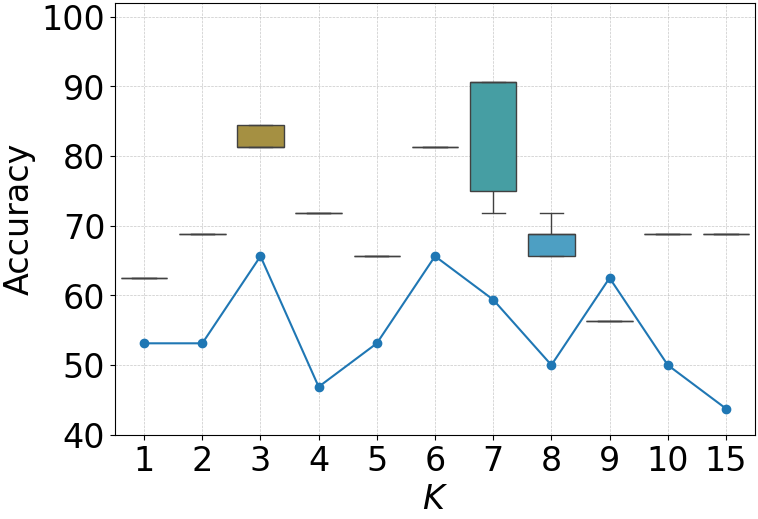}
		\label{fig:acc5_chains1_e2}
		}
		\subfloat[$\epsilon=4$ (\chainS)]{
		\includegraphics[width = 2.6cm]{./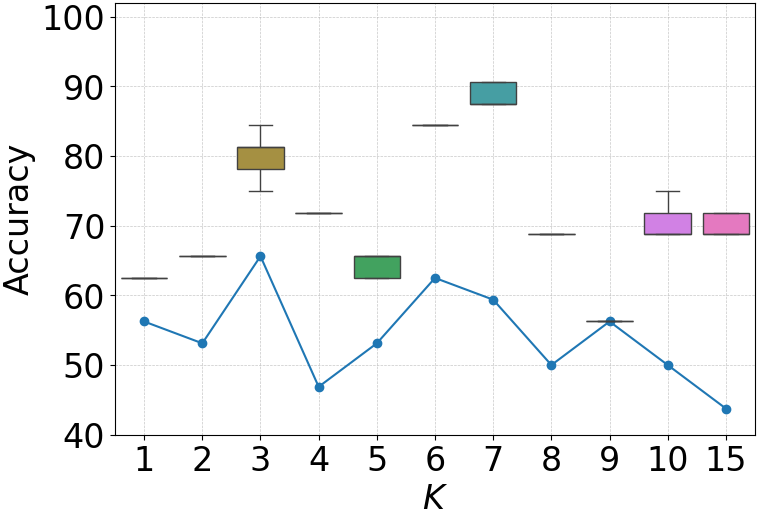}
		\label{fig:acc5_chains1_e4}
		}
		\subfloat[$\epsilon=8$ (\chainS)]{
		\includegraphics[width = 2.6cm]{./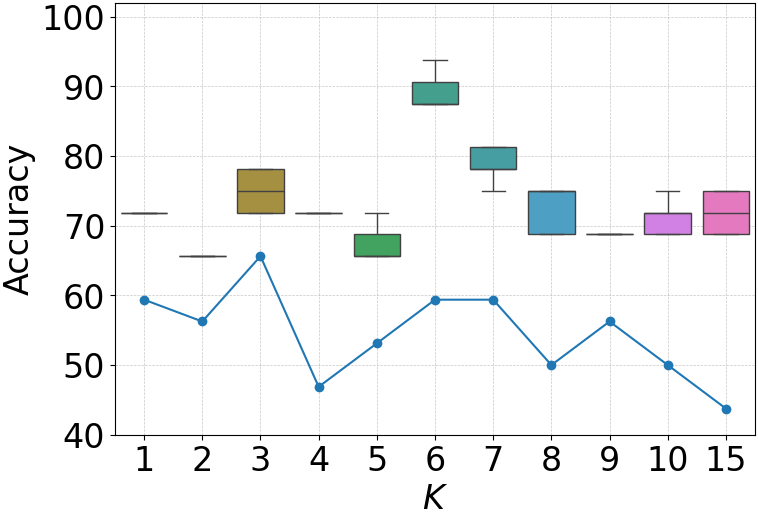}
		\label{fig:acc5_chains1_e8}
		}
		\subfloat[$\epsilon=16$ (\chainS)]{
		\includegraphics[width = 2.6cm]{./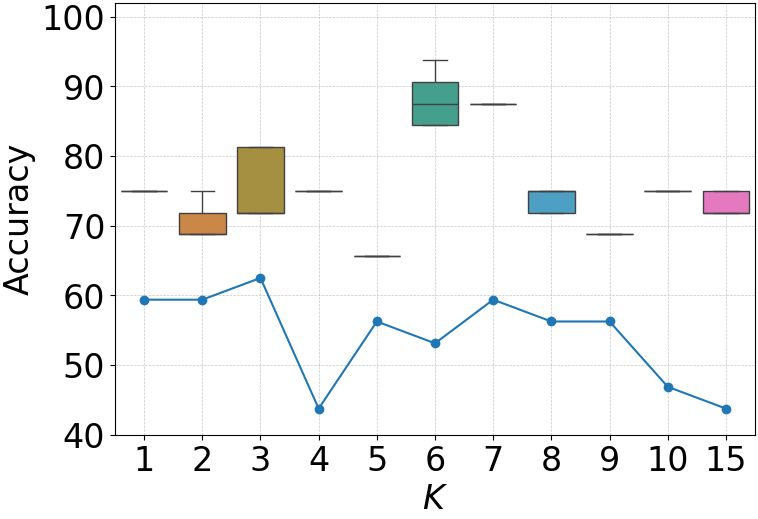}
		\label{fig:acc5_chains1_e16}
		}
		\subfloat[$\epsilon=32$ (\chainS)]{
		\includegraphics[width = 2.6cm]{./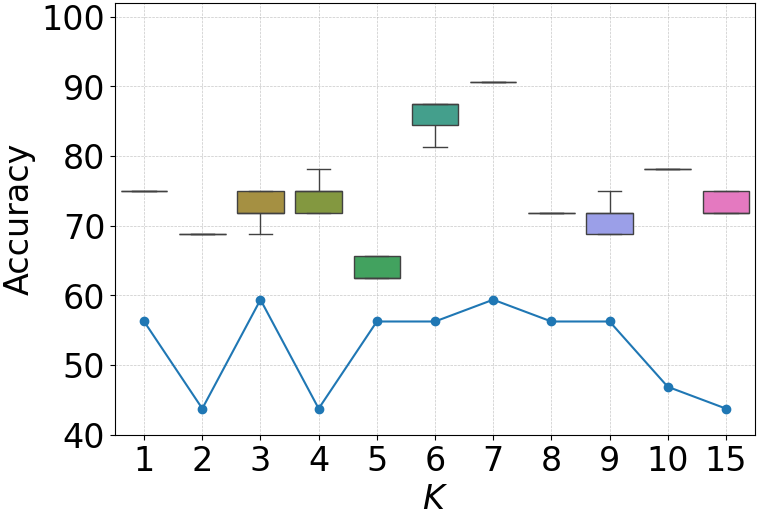}
		\label{fig:acc5_chains1_e32}
		}

    		\subfloat[$\epsilon=1$ (\chainM)]{
		\includegraphics[width = 2.6cm]{./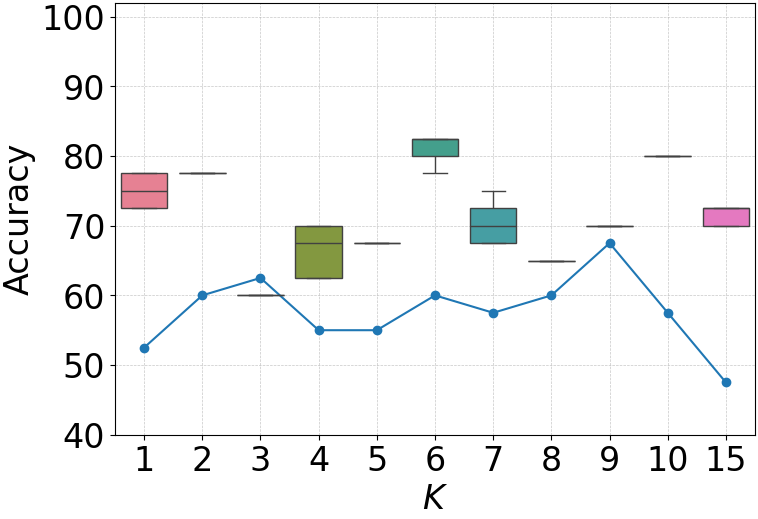}
		\label{fig:acc5_chains2_e1}
		}
		\subfloat[$\epsilon=2$ (\chainM)]{
		\includegraphics[width = 2.6cm]{./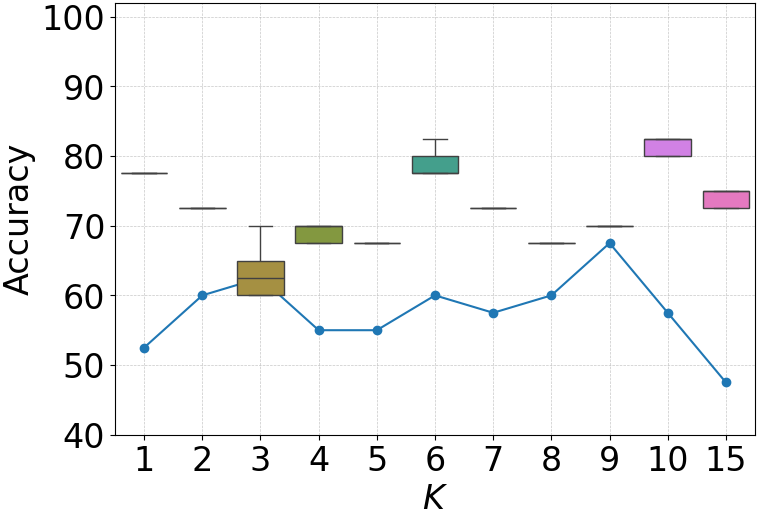}
		\label{fig:acc5_chains2_e2}
		}
		\subfloat[$\epsilon=4$ (\chainM)]{
		\includegraphics[width = 2.6cm]{./img/acc5_chains2_e4.png}
		\label{fig:acc5_chains2_e4}
		}
		\subfloat[$\epsilon=8$ (\chainM)]{
		\includegraphics[width = 2.6cm]{./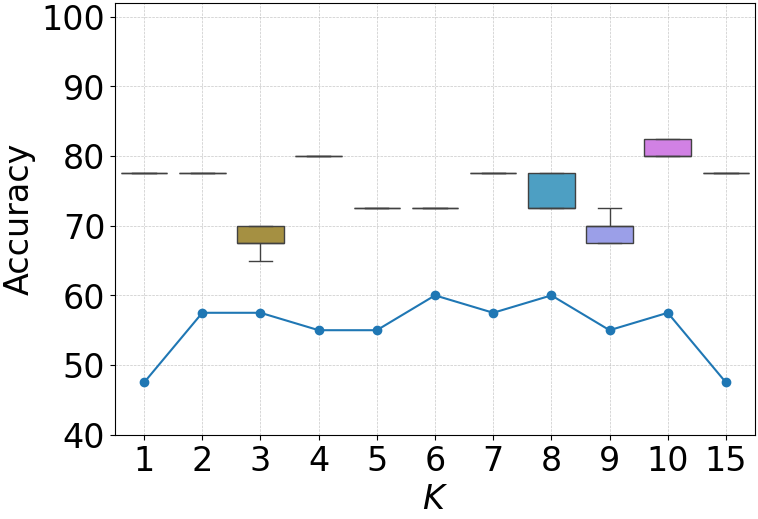}
		\label{fig:acc5_chains2_e8}
		}
		\subfloat[$\epsilon=16$ (\chainM)]{
		\includegraphics[width = 2.6cm]{./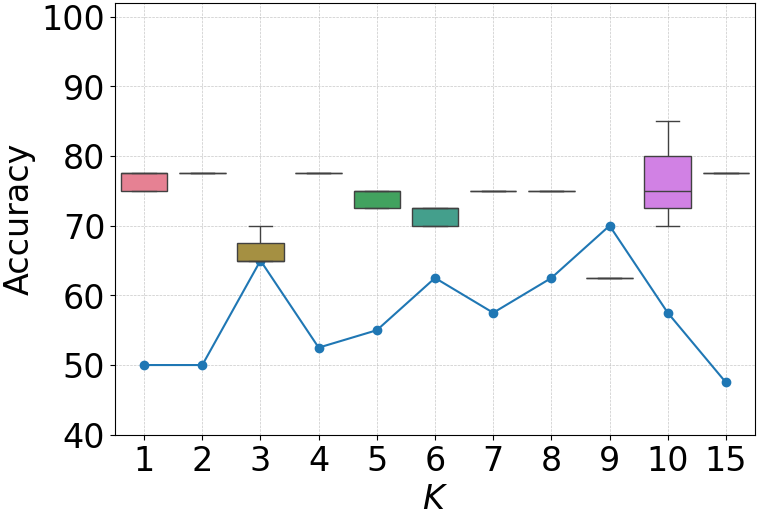}
		\label{fig:acc5_chains2_e16}
		}
		\subfloat[$\epsilon=32$ (\chainM)]{
		\includegraphics[width = 2.6cm]{./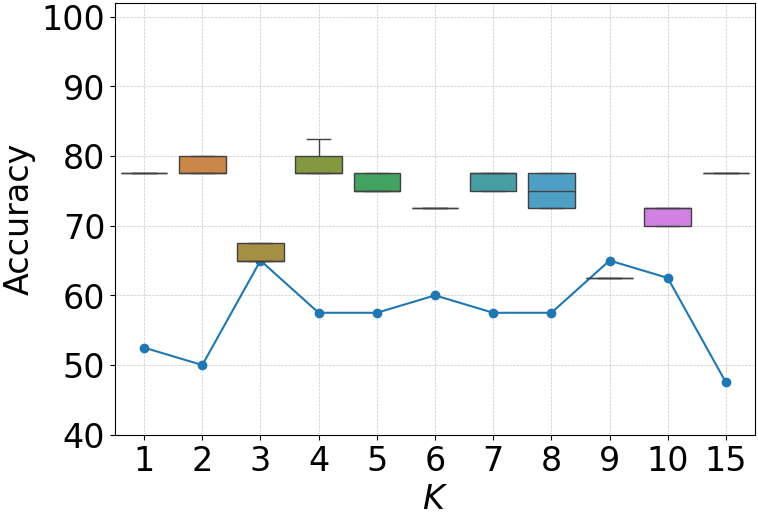}
		\label{fig:acc5_chains2_e32}
		}
        
		\subfloat[$\epsilon=1$ (\chainL)]{
		\includegraphics[width = 2.6cm]{./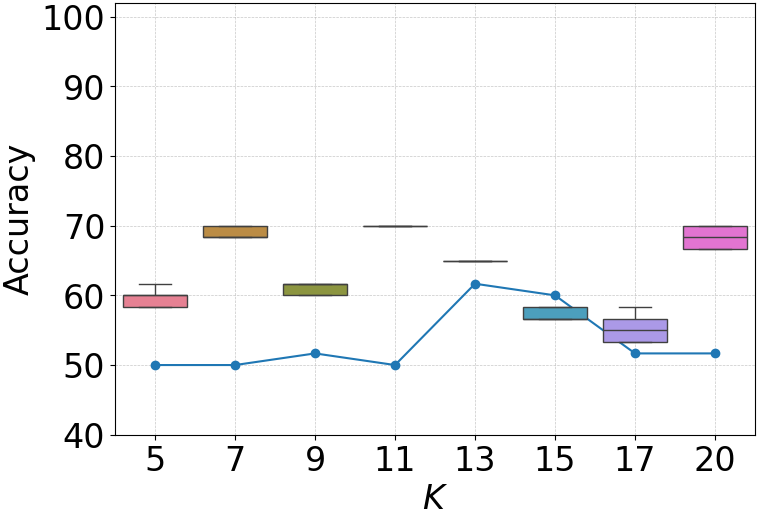}
		\label{fig:acc5_chains3_e1}
		}
		\subfloat[$\epsilon=2$ (\chainL)]{
		\includegraphics[width = 2.6cm]{./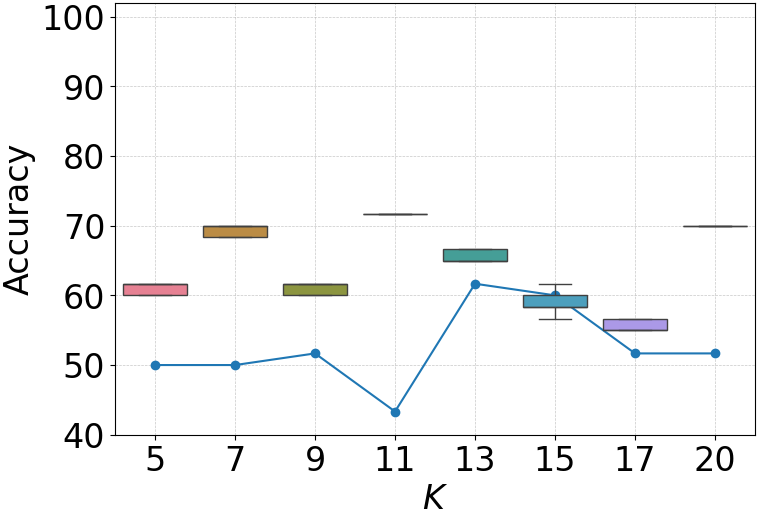}
		\label{fig:acc5_chains3_e2}
		}
		\subfloat[$\epsilon=4$ (\chainL)]{
		\includegraphics[width = 2.6cm]{./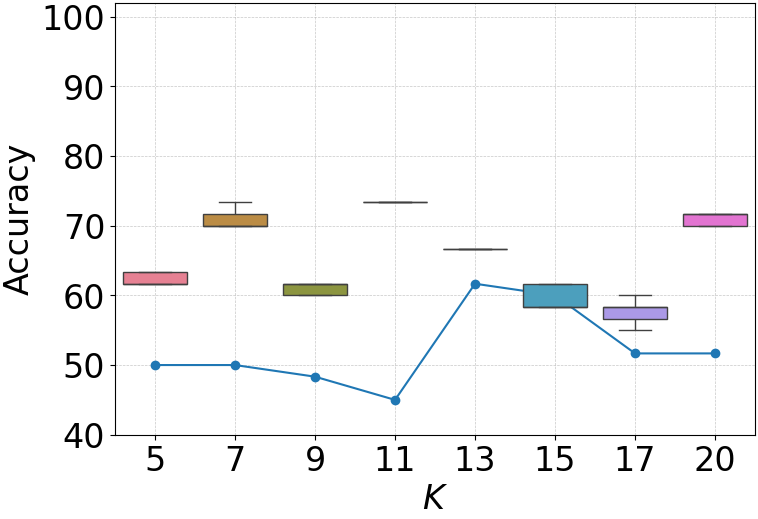}
		\label{fig:acc5_chains3_e4}
		}
		\subfloat[$\epsilon=8$ (\chainL)]{
		\includegraphics[width = 2.6cm]{./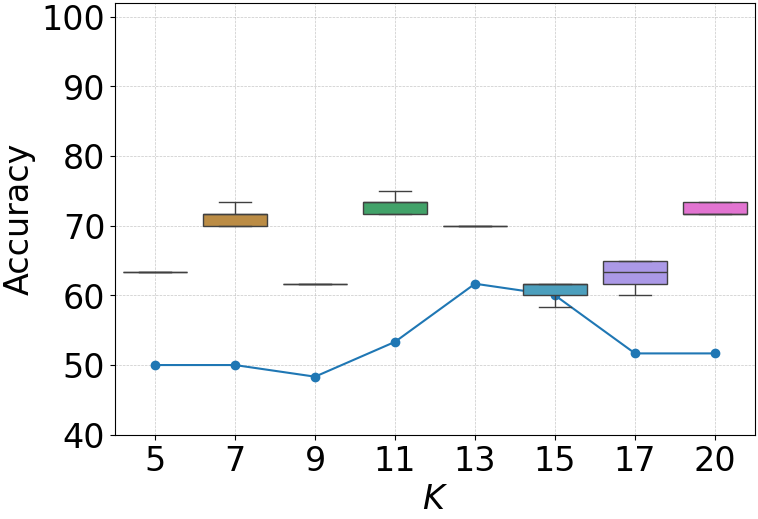}
		\label{fig:acc5_chains3_e8}
		}
		\subfloat[$\epsilon=16$ (\chainL)]{
		\includegraphics[width = 2.6cm]{./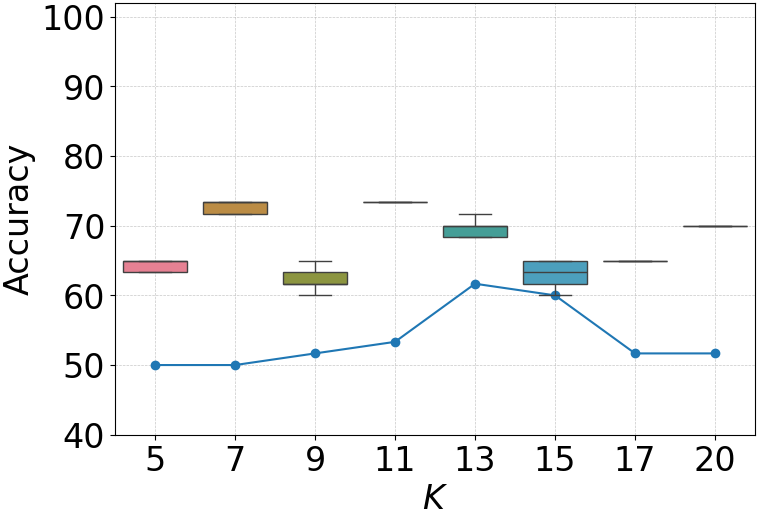}
		\label{fig:acc5_chains3_e16}
		}
		\subfloat[$\epsilon=32$ (\chainL)]{
		\includegraphics[width = 2.6cm]{./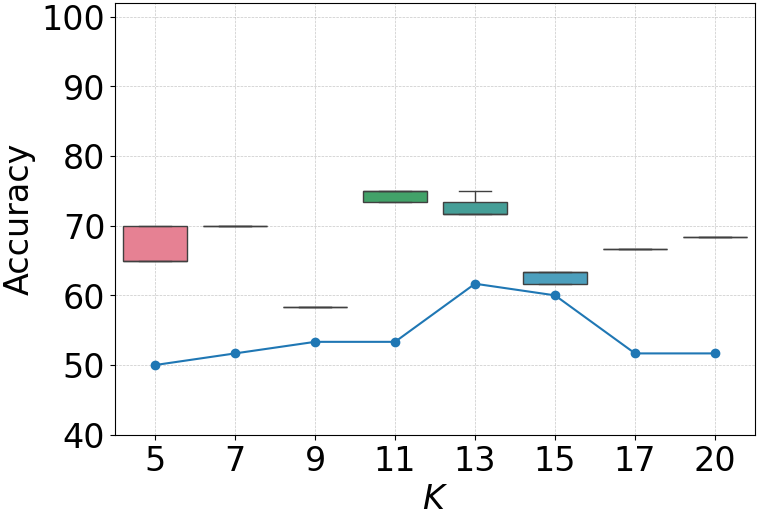}
		\label{fig:acc5_chains3_e32}
		}
             
		\caption{Ablation Study of $K$ on $\gcn$ (colored boxes) and \gap\ (blue lines). For each pair of $\epsilon$ and dataset (\chainS, \chainM\  and \chainL), different $K$ are used. }
        \label{fig:k-study}
	\end{figure*}

\begin{figure*}[h]
\begin{center}
           \subfloat[$K\hspace{-1mm}=\hspace{-1mm}1$  (\chainS)]{
		\includegraphics[width = 2.6cm]{./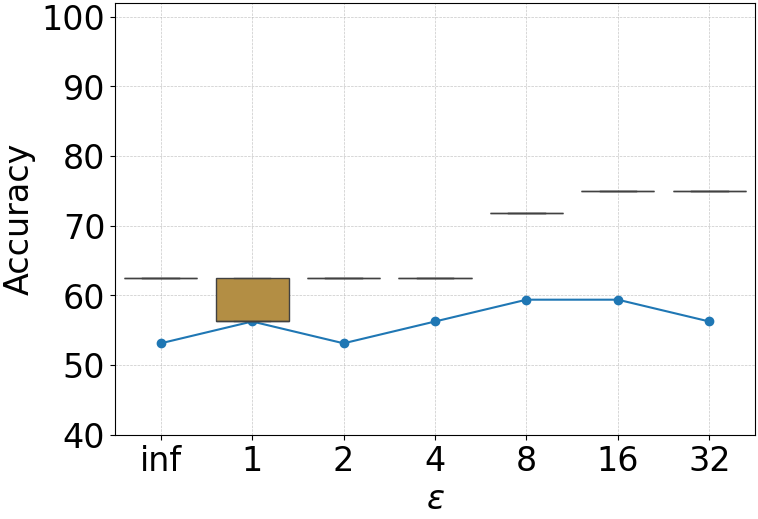}
		\label{fig:acc5_chains1_hops1}
		}
		\subfloat[$K\hspace{-1mm}=\hspace{-1mm}2$ (\chainS)]{
		\includegraphics[width = 2.6cm]{./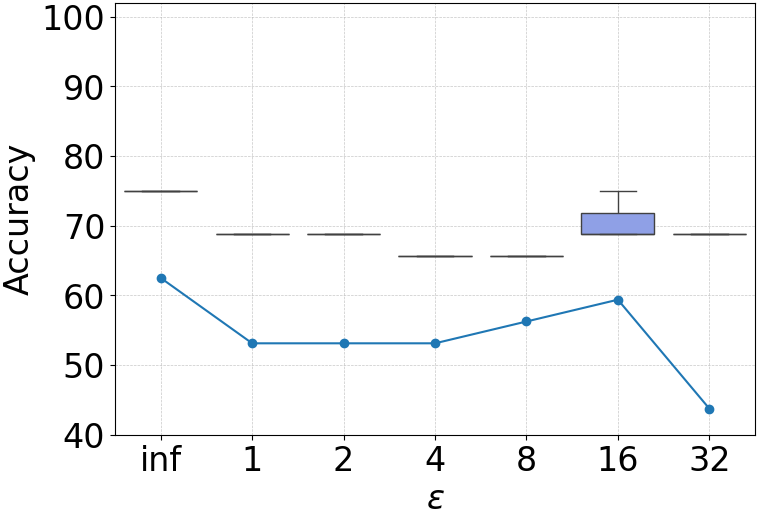}
		\label{fig:acc5_chains1_hops2}
		}
		\subfloat[$K\hspace{-1mm}=\hspace{-1mm}3$ (\chainS)]{
		\includegraphics[width = 2.6cm]{./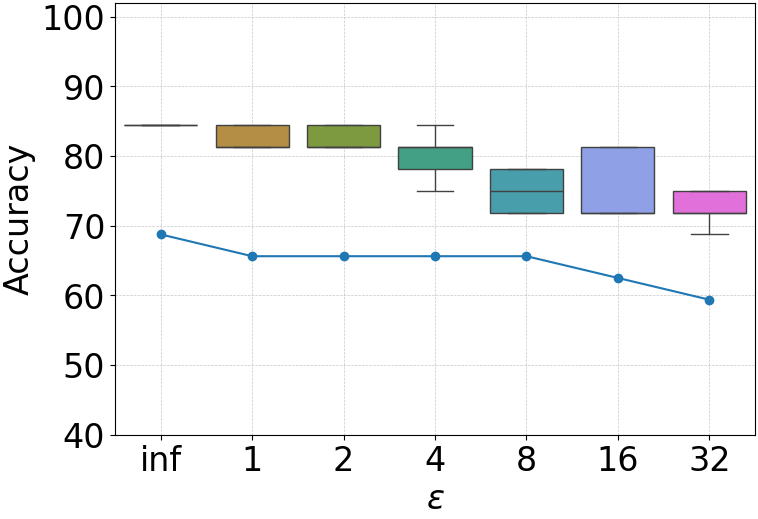}
		\label{fig:acc5_chains1_hops3}
		}
		\subfloat[$K\hspace{-1mm}=\hspace{-1mm}4$ (\chainS)]{
		\includegraphics[width = 2.6cm]{./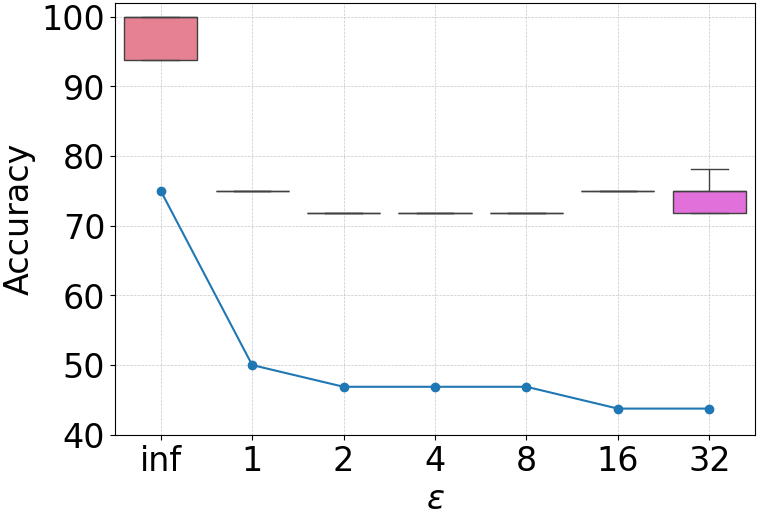}
		\label{fig:acc5_chains1_hops4}
		}
        \subfloat[$K\hspace{-1mm}=\hspace{-1mm}5$ (\chainS)]{
		\includegraphics[width = 2.6cm]{./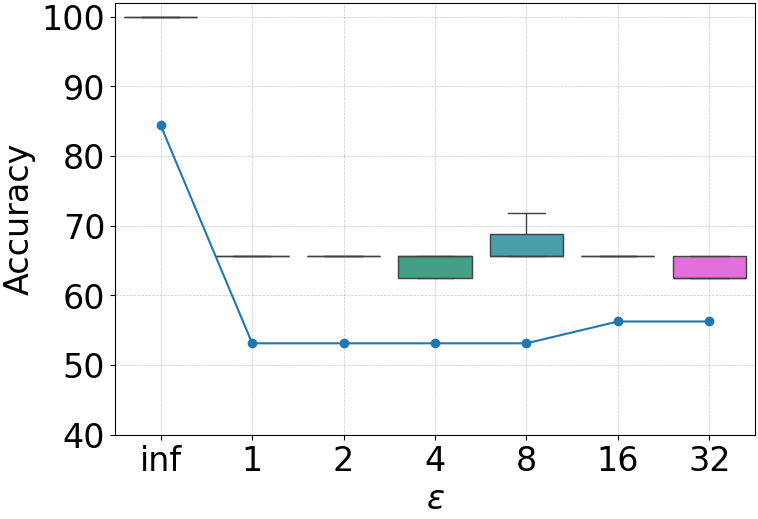}
		\label{fig:acc5_chains1_hops5}
		}
        
        \subfloat[$K\hspace{-1mm}=\hspace{-1mm}6$ (\chainS)]{
		\includegraphics[width = 2.6cm]{./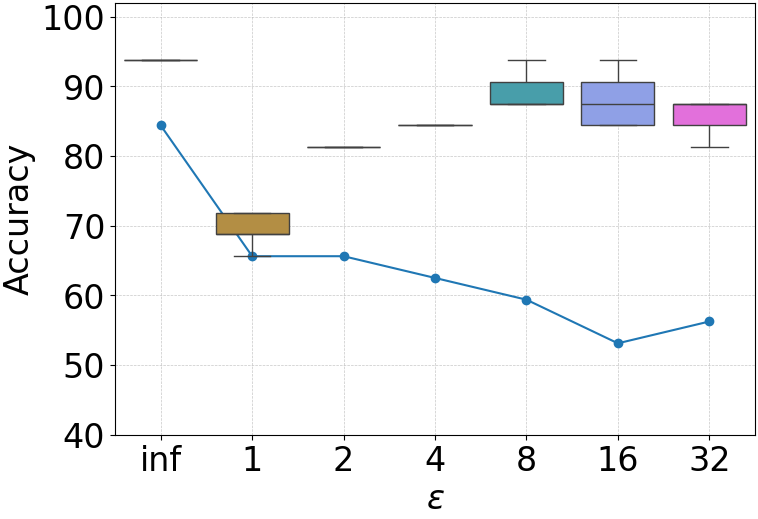}
		\label{fig:acc5_chains1_hops6}
		}
        \subfloat[$K\hspace{-1mm}=\hspace{-1mm}7$ (\chainS)]{
		\includegraphics[width = 2.6cm]{./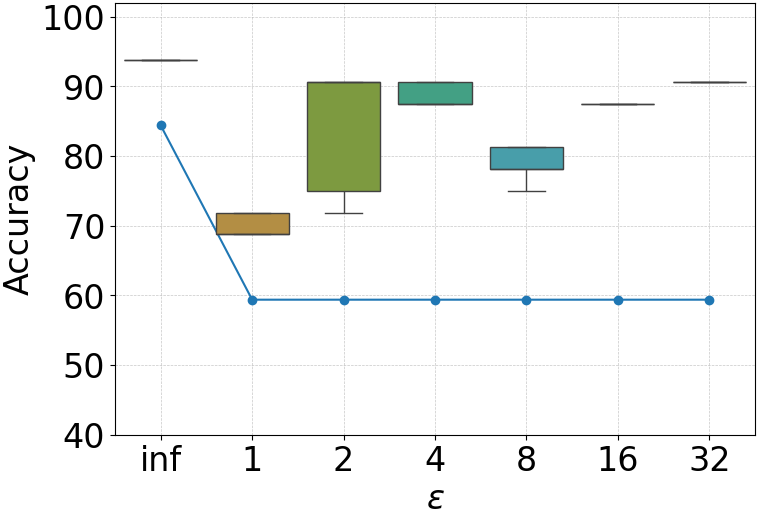}
		\label{fig:acc5_chains1_hops7}
		}
		\subfloat[$K\hspace{-1mm}=\hspace{-1mm}8$ (\chainS)]{
		\includegraphics[width = 2.6cm]{./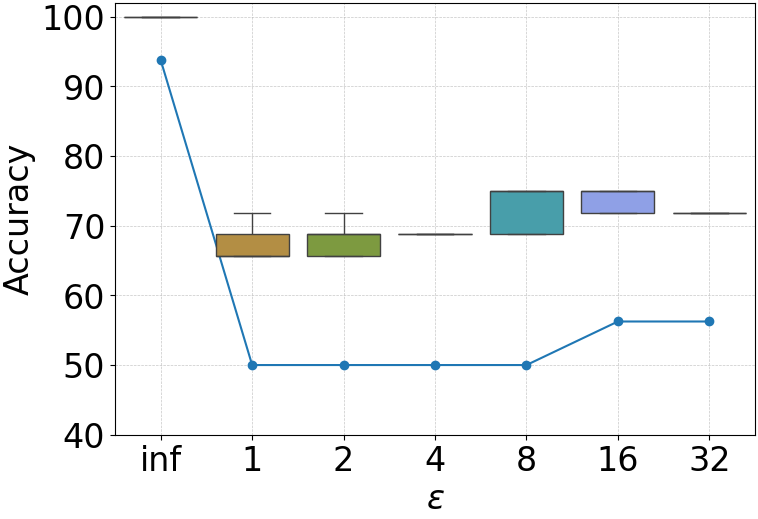}
		\label{fig:acc5_chains1_hops8}
		}
        \subfloat[$K\hspace{-1mm}=\hspace{-1mm}9$ (\chainS)]{
		\includegraphics[width = 2.6cm]{./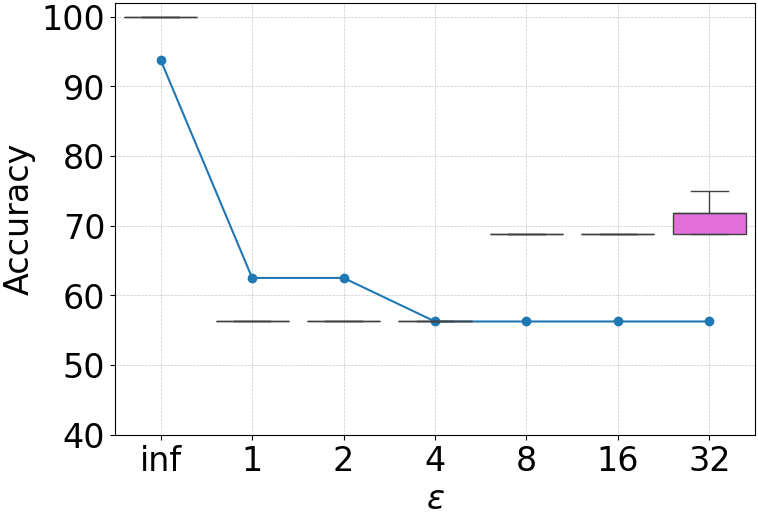}
		\label{fig:acc5_chains1_hops9}
		}
		\subfloat[$K\hspace{-1mm}=\hspace{-1mm}15$ (\chainS)]{
		\includegraphics[width = 2.6cm]{./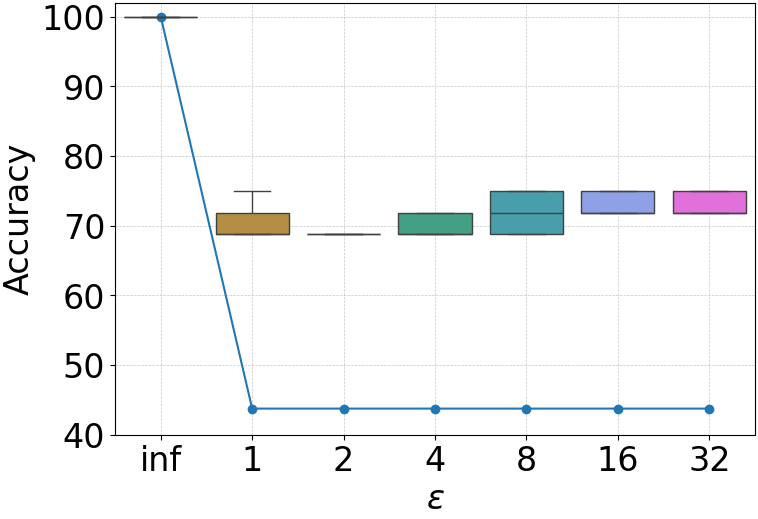}
		\label{fig:acc5_chains1_hops15}
		}

        \subfloat[$K\hspace{-1mm}=\hspace{-1mm}1$ (\chainM)]{
		\includegraphics[width = 2.6cm]{./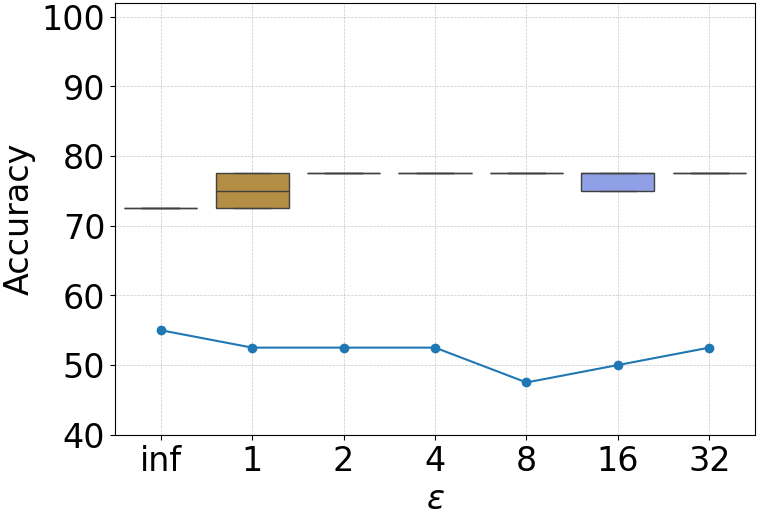}
		\label{fig:acc5_chains2_hops1}
		}
		\subfloat[$K\hspace{-1mm}=\hspace{-1mm}2$ (\chainM)]{
		\includegraphics[width = 2.6cm]{./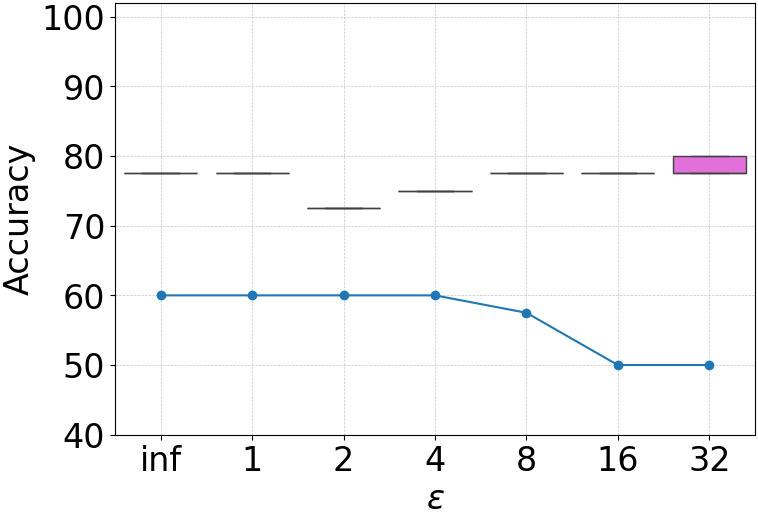}
		\label{fig:acc5_chains2_hops2}
		}
        \subfloat[$K\hspace{-1mm}=\hspace{-1mm}3$ (\chainM)]{
		\includegraphics[width = 2.6cm]{./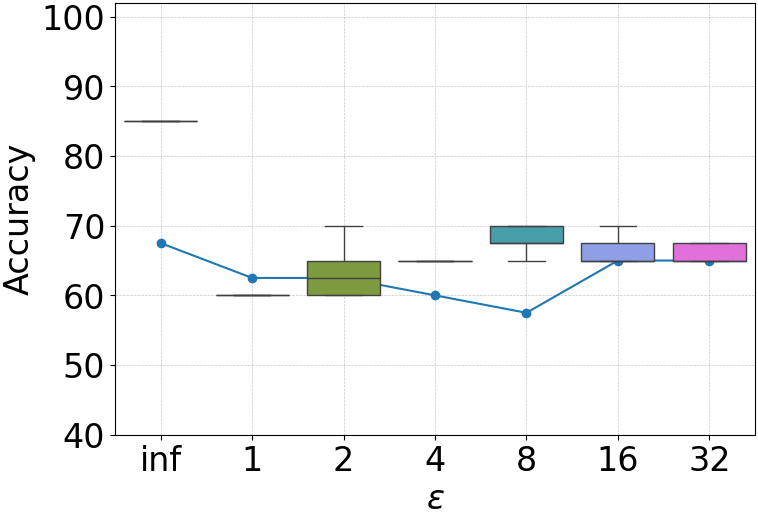}
		\label{fig:acc5_chains2_hops3}
		}
		\subfloat[$K\hspace{-1mm}=\hspace{-1mm}4$ (\chainM)]{
		\includegraphics[width = 2.6cm]{./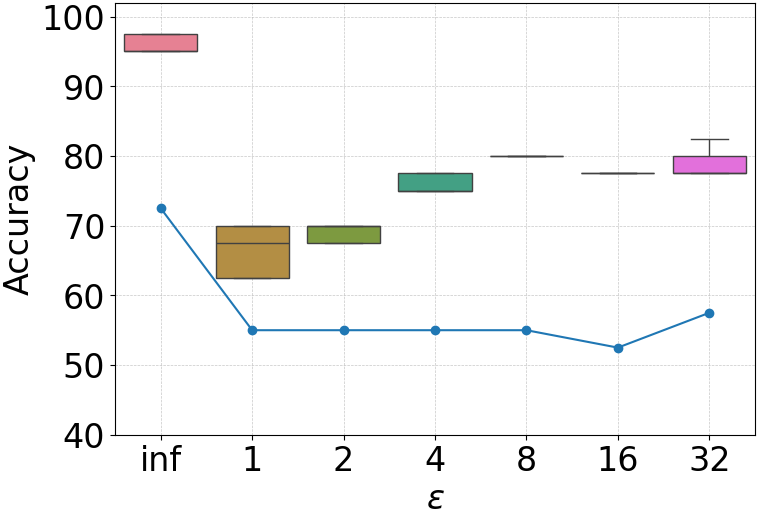}
		\label{fig:acc5_chains2_hops4}
		}
        \subfloat[$K\hspace{-1mm}=\hspace{-1mm}5$ (\chainM)]{
		\includegraphics[width = 2.6cm]{./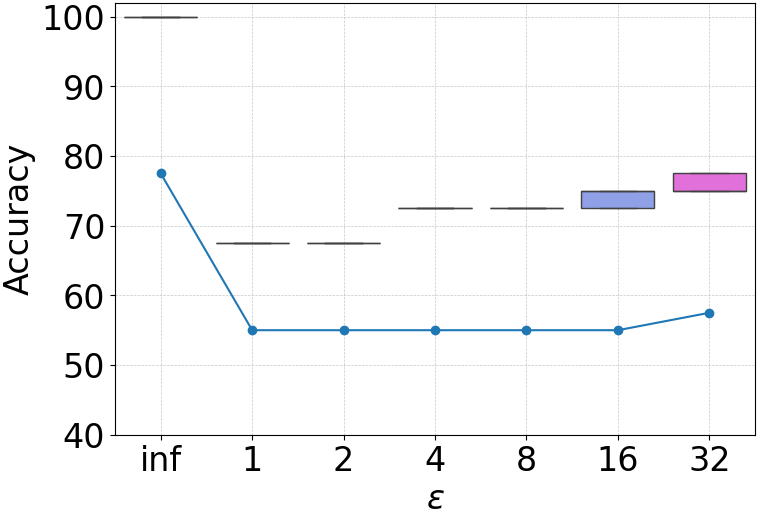}
		\label{fig:acc5_chains2_hops5}
		}
        

        \subfloat[$K\hspace{-1mm}=\hspace{-1mm}7$ (\chainM)]{
		\includegraphics[width = 2.6cm]{./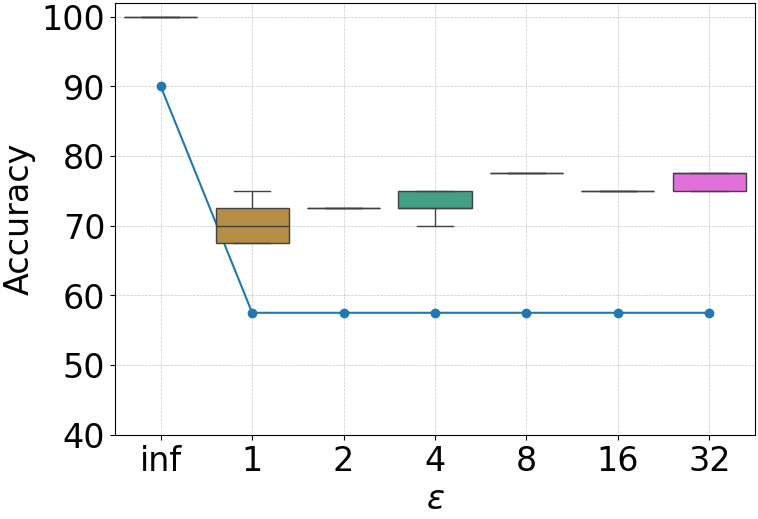}
		\label{fig:acc5_chains2_hops7}
		}
        \subfloat[$K\hspace{-1mm}=\hspace{-1mm}8$ (\chainM)]{
		\includegraphics[width = 2.6cm]{./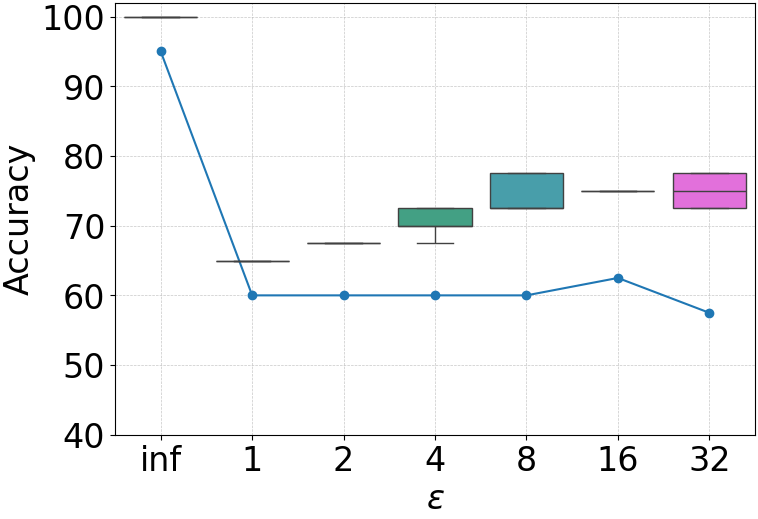}
		\label{fig:acc5_chains2_hops8}
        }
       \subfloat[$K\hspace{-1mm}=\hspace{-1mm}9$ (\chainM)]{
		\includegraphics[width = 2.6cm]{./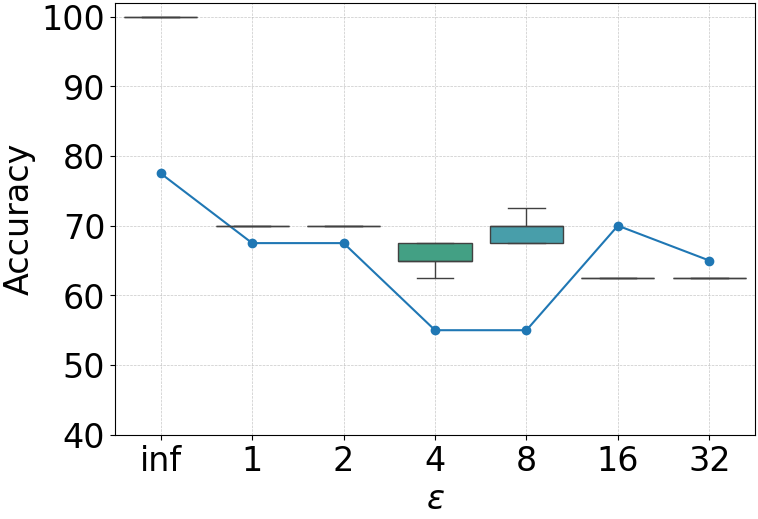}
		\label{fig:acc5_chains2_hops9}
		}
        \subfloat[$K\hspace{-1mm}=\hspace{-1mm}10$ (\chainM)]{
		\includegraphics[width = 2.6cm]{./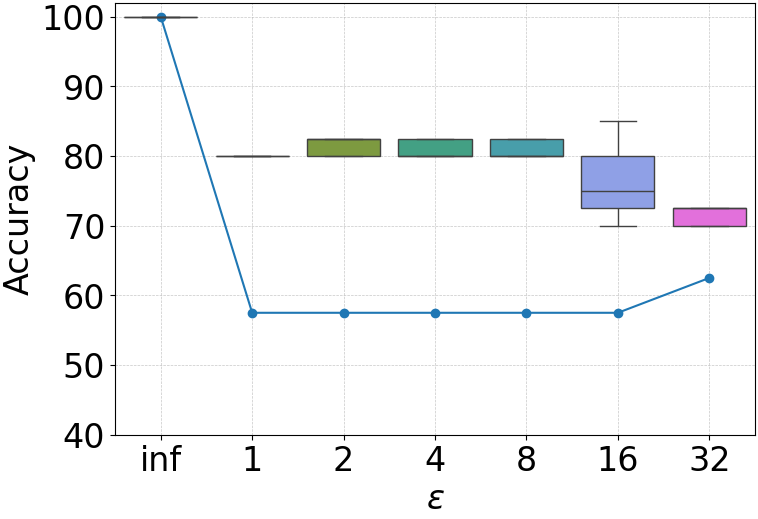}
		\label{fig:acc5_chains2_hops10}
		}
        \subfloat[$K\hspace{-1mm}=\hspace{-1mm}15$ (\chainM)]{
		\includegraphics[width = 2.6cm]{./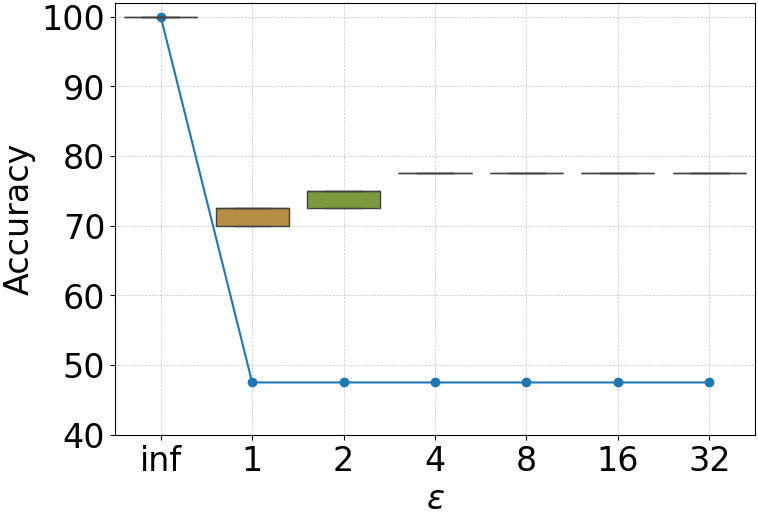}
		\label{fig:acc5_chains2_hops15}
		} 

		\subfloat[$K\hspace{-1mm}=\hspace{-1mm}7$ (\chainL)]{
		\includegraphics[width = 2.6cm]{./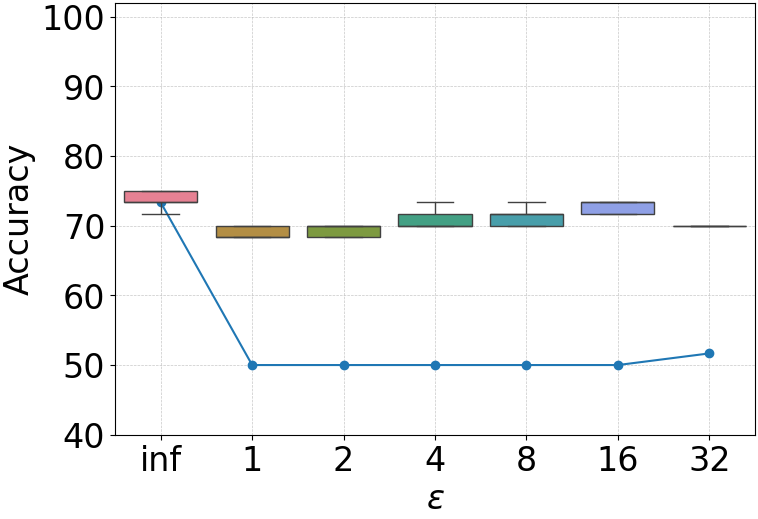}
		\label{fig:acc5_chains3_hops7}
		}
		\subfloat[$K\hspace{-1mm}=\hspace{-1mm}9$ (\chainL)]{
		\includegraphics[width = 2.6cm]{./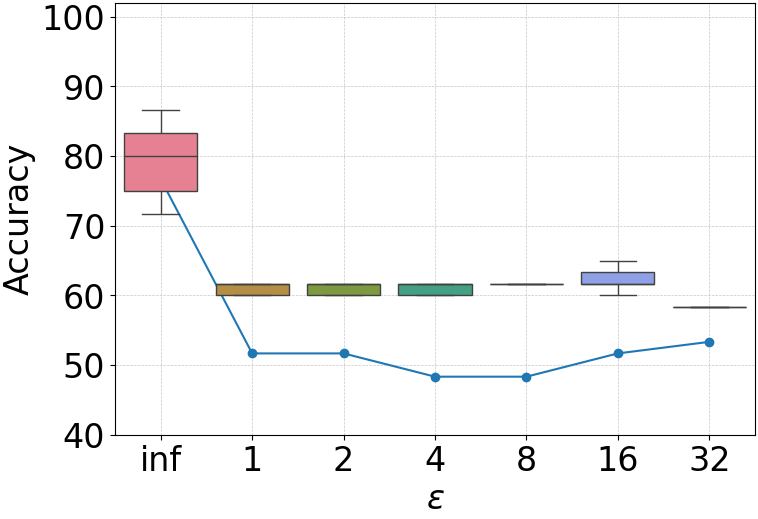}
		\label{fig:acc5_chains3_hops9}
		}
		\subfloat[$K\hspace{-1mm}=\hspace{-1mm}11$ (\chainL)]{
		\includegraphics[width = 2.6cm]{./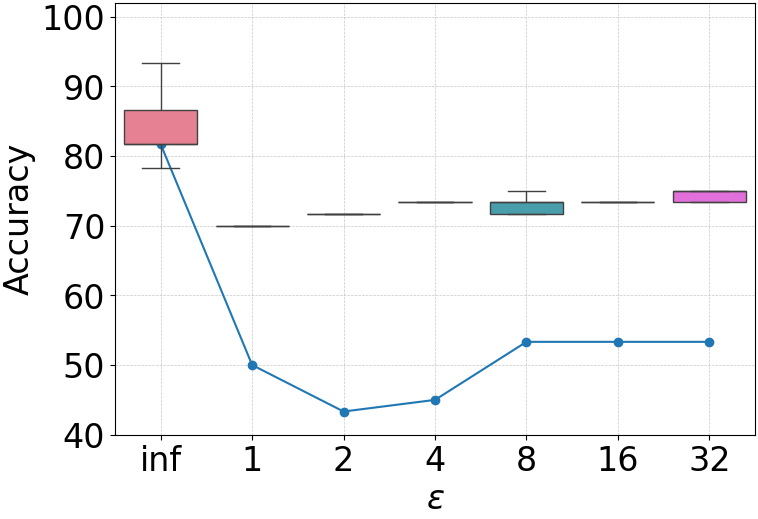}
		\label{fig:acc5_chains3_hops11}
		}
		\subfloat[$K\hspace{-1mm}=\hspace{-1mm}13$ (\chainL)]{
		\includegraphics[width = 2.6cm]{./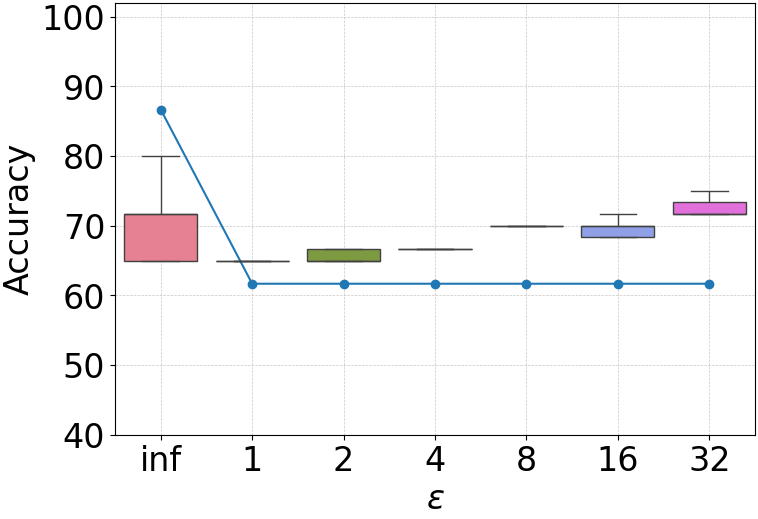}
		\label{fig:acc5_chains3_hops13}
		}
		\subfloat[$K\hspace{-1mm}=\hspace{-1mm}20$ (\chainL)]{
		\includegraphics[width = 2.6cm]{./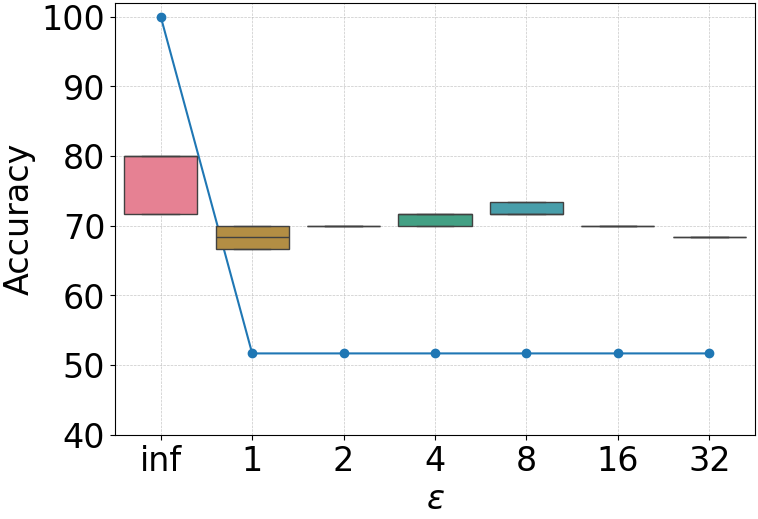}
		\label{fig:acc5_chains3_hops20}
		}
\end{center}
		\caption{Ablation Study of $\epsilon$ on $\gcn$ (colored boxes) and \gap\ (blue lines). For each pair of $K$ and dataset (\chainS, \chainM\  and \chainL), different $\epsilon$ are used. }
        \label{fig:K-study}
		\end{figure*}

\begin{table*}[t]
\small
    \caption{Top Accuracy over 3 Runs for NDP.  Maximum node degree is 10 for relatively large datasets (\pubmed, \facebook) and 5 for other datasets.  The \colorbox{colorbest}{\color{colortext}{best accuracy}} and the \colorbox{colorsecond}{\color{colortext}second-best} accuracy are highlighted, respectively. }\label{tab:reulst_table_overall_acc_top_node_5}
    \vspace{1pt}
    \centering
    \setlength\tabcolsep{12pt}
    \adjustbox{max width=\textwidth}{
    \begin{tabular}{l|c|c|c|c|c|c|c|c|c|c}
    \toprule[1pt]     & \textbf{Dataset}    &  \textbf{Computers} &  \textbf{Facebook}&    \textbf{PubMed} &  \textbf{Cora}&  \textbf{Photo}& \textbf{Chain-S}  & \textbf{Chain-M}& \textbf{Chain-L} &  \textbf{Chain-X}\\
     \cmidrule[1pt]{1-11}
 \multirow{4}{*}{$\epsilon=1$}     
    & $\gcn$ & \cellcolor{colorbest}\color{colortext}$91.91\%$ & \cellcolor{colorbest}\color{colortext}$60.36\%$ & \cellcolor{colorbest}\color{colortext}$73.85\%$ & \cellcolor{colorbest}\color{colortext}$80.63\%$ & \cellcolor{colorbest}\color{colortext}$95.10\%$ & \cellcolor{colorbest}\color{colortext}$78.12\%$ & \cellcolor{colorbest}\color{colortext}$70.0\%$ & \cellcolor{colorbest}\color{colortext}$61.67\%$ & \cellcolor{colorbest}\color{colortext}$61.0\%$ \\
    & \dpdgc & \cellcolor{colorsecond}\color{colortext}$57.65\%$ & \cellcolor{colorsecond}\color{colortext}$40.93\%$ & \cellcolor{colorsecond}\color{colortext}$58.91\%$ & $33.21\%$ & \cellcolor{colorsecond}\color{colortext}$42.21\%$ & $58.06\%$ & $57.5\%$ & $57.63\%$ & $54.6\%$ \\
    & \gap & $36.71\%$ & $35.66\%$ & $53.06\%$ & \cellcolor{colorsecond}\color{colortext}$34.13\%$ & $32.07\%$ & \cellcolor{colorsecond}\color{colortext}$65.62\%$ & $55.0\%$ & \cellcolor{colorsecond}\color{colortext}$58.33\%$ & \cellcolor{colorsecond}\color{colortext}$59.0\%$ \\
    & \sage & $34.41\%$ & $20.49\%$ & $39.89\%$ & $21.03\%$ & $22.8\%$ & $59.38\%$ & \cellcolor{colorsecond}\color{colortext}$60.0\%$ & $56.67\%$ & $55.0\%$ \\
    \cmidrule[1pt]{1-11}
  \multirow{4}{*}{$\epsilon=2$}     
    & $\gcn$ & \cellcolor{colorbest}\color{colortext}$92.28\%$ & \cellcolor{colorbest}\color{colortext}$64.95\%$ & \cellcolor{colorbest}\color{colortext}$80.4\%$ & \cellcolor{colorbest}\color{colortext}$83.76\%$ & \cellcolor{colorbest}\color{colortext}$94.76\%$ & \cellcolor{colorbest}\color{colortext}$78.12\%$ & \cellcolor{colorbest}\color{colortext}$72.5\%$ & \cellcolor{colorbest}\color{colortext}$63.33\%$ & \cellcolor{colorbest}\color{colortext}$61.0\%$ \\
    & \dpdgc & \cellcolor{colorsecond}\color{colortext}$66.11\%$ & \cellcolor{colorsecond}\color{colortext}$46.95\%$ & \cellcolor{colorsecond}\color{colortext}$70.94\%$ & $31.37\%$ & \cellcolor{colorsecond}\color{colortext}$53.41\%$ & $58.06\%$ & \cellcolor{colorsecond}\color{colortext}$57.5\%$ & $57.63\%$ & $54.5\%$ \\
    & \gap & $47.48\%$ & $39.80\%$ & $68.93\%$ & \cellcolor{colorsecond}\color{colortext}$32.47\%$ & $37.31\%$ & \cellcolor{colorsecond}\color{colortext}$65.62\%$ & $55.0\%$ & \cellcolor{colorsecond}\color{colortext}$58.33\%$ & \cellcolor{colorsecond}\color{colortext}$59.0\%$ \\
    & \sage & $35.63\%$ & $20.33\%$ & $40.02\%$ & $24.35\%$ & $23.92\%$ & $59.38\%$ & $60.0\%$ & $56.67\%$ & $55.0\%$ \\
    \cmidrule[1pt]{1-11}
  \multirow{4}{*}{$\epsilon=4$}     
    & $\gcn$ & \cellcolor{colorbest}\color{colortext}$92.28\%$ & \cellcolor{colorbest}\color{colortext}$68.19\%$ & \cellcolor{colorbest}\color{colortext}$84.81\%$ & \cellcolor{colorbest}\color{colortext}$85.98\%$ & \cellcolor{colorbest}\color{colortext}$94.9\%$ & \cellcolor{colorbest}\color{colortext}$78.12\%$ & \cellcolor{colorbest}\color{colortext}$72.5\%$ & \cellcolor{colorbest}\color{colortext}$63.33\%$ & \cellcolor{colorbest}\color{colortext}$62.0\%$ \\
    & \dpdgc & \cellcolor{colorsecond}\color{colortext}$72.20\%$ & \cellcolor{colorsecond}\color{colortext}$48.77\%$ & \cellcolor{colorsecond}\color{colortext}$80.02\%$ & $32.66\%$ & \cellcolor{colorsecond}\color{colortext}$69.91\%$ & $58.06\%$ & $57.5\%$ & $57.63\%$ & \cellcolor{colorsecond}\color{colortext}$59.6\%$ \\
    & \gap & $61.77\%$ & $46.72\%$ & $79.51\%$ & \cellcolor{colorsecond}\color{colortext}$33.58\%$ & $45.79\%$ & \cellcolor{colorsecond}\color{colortext}$65.62\%$ & $55.0\%$ & \cellcolor{colorsecond}\color{colortext}$58.33\%$ & $59.0\%$ \\
    & \sage & $35.56\%$ & $21.09\%$ & $40.78\%$ & $26.57\%$ & $25.38\%$ & $59.38\%$ & \cellcolor{colorsecond}\color{colortext}$60.0\%$ & $56.67\%$ & $55.0\%$ \\
    \cmidrule[1pt]{1-11}
  \multirow{4}{*}{$\epsilon=8$}     
    & $\gcn$ & \cellcolor{colorbest}\color{colortext}$92.24\%$ & \cellcolor{colorbest}\color{colortext}$70.11\%$ & \cellcolor{colorbest}\color{colortext}$87.34\%$ & \cellcolor{colorbest}\color{colortext}$87.45\%$ & \cellcolor{colorbest}\color{colortext}$94.96\%$ & \cellcolor{colorbest}\color{colortext}$75.00\%$ & \cellcolor{colorbest}\color{colortext}$72.5\%$ & \cellcolor{colorbest}\color{colortext}$66.67\%$ & \cellcolor{colorbest}\color{colortext}$63.0\%$ \\
    & \dpdgc & \cellcolor{colorsecond}\color{colortext}$76.39\%$ & \cellcolor{colorsecond}\color{colortext}$49.91\%$ & \cellcolor{colorsecond}\color{colortext}$83.49\%$ & \cellcolor{colorsecond}\color{colortext}$43.91\%$ & \cellcolor{colorsecond}\color{colortext}$77.67\%$ & $58.06\%$ & $57.5\%$ & $57.63\%$ & $58.6\%$ \\
    & \gap & $68.52\%$ & $48.24\%$ & $82.17\%$ & $31.73\%$ & $68.39\%$ & \cellcolor{colorsecond}\color{colortext}$65.62\%$ & $55.0\%$ & \cellcolor{colorsecond}\color{colortext}$58.33\%$ & \cellcolor{colorsecond}\color{colortext}$59.0\%$ \\
    & \sage & $34.41\%$ & $22.25\%$ & $42.28\%$ & $28.97\%$ & $27.17\%$ & $59.38\%$ & \cellcolor{colorsecond}\color{colortext}$60.0\%$ & $56.67\%$ & $55.0\%$ \\
    \cmidrule[1pt]{1-11}
  \multirow{4}{*}{$\epsilon=16$}     
    & $\gcn$ & \cellcolor{colorbest}\color{colortext}$92.50\%$ & \cellcolor{colorbest}\color{colortext}$70.60\%$ & \cellcolor{colorbest}\color{colortext}$88.49\%$ & \cellcolor{colorbest}\color{colortext}$87.82\%$ & \cellcolor{colorbest}\color{colortext}$94.96\%$ & \cellcolor{colorbest}\color{colortext}$71.88\%$ & \cellcolor{colorbest}\color{colortext}$72.5\%$ & \cellcolor{colorbest}\color{colortext}$73.33\%$ & \cellcolor{colorbest}\color{colortext}$65.0\%$ \\
    & \dpdgc & \cellcolor{colorsecond}\color{colortext}$78.88\%$ & \cellcolor{colorsecond}\color{colortext}$50.11\%$ & \cellcolor{colorsecond}\color{colortext}$84.88\%$ & \cellcolor{colorsecond}\color{colortext}$56.46\%$ & \cellcolor{colorsecond}\color{colortext}$83.10\%$ & $58.06\%$ & $57.5\%$ & $57.63\%$ & $54.5\%$ \\
    & \gap & $73.83\%$ & $49.91\%$ & $83.77\%$ & $37.27\%$ & $76.61\%$ & \cellcolor{colorsecond}\color{colortext}$65.62\%$ & $55.0\%$ & \cellcolor{colorsecond}\color{colortext}$58.33\%$ & \cellcolor{colorsecond}\color{colortext}$59.0\%$ \\
    & \sage & $33.18\%$ & $22.15\%$ & $44.00\%$ & $30.44\%$ & $29.62\%$ & $59.38\%$ & \cellcolor{colorsecond}\color{colortext}$60.0\%$ & $56.67\%$ & $55.0\%$ \\
    \cmidrule[1pt]{1-11}
  \multirow{4}{*}{$\epsilon=32$}     
    & $\gcn$ & \cellcolor{colorbest}\color{colortext}$92.46\%$ & \cellcolor{colorbest}\color{colortext}$70.96\%$ & \cellcolor{colorbest}\color{colortext}$88.97\%$ & \cellcolor{colorbest}\color{colortext}$87.82\%$ & \cellcolor{colorbest}\color{colortext}$94.90\%$ & \cellcolor{colorbest}\color{colortext}$68.75\%$ & \cellcolor{colorbest}\color{colortext}$72.5\%$ & \cellcolor{colorbest}\color{colortext}$75.00\%$ & \cellcolor{colorbest}\color{colortext}$64.0\%$ \\
    & \dpdgc & \cellcolor{colorsecond}\color{colortext}$81.40\%$ & $50.64\%$ & \cellcolor{colorsecond}\color{colortext}$86.33\%$ & \cellcolor{colorsecond}\color{colortext}$64.21\%$ & \cellcolor{colorsecond}\color{colortext}$86.08\%$ & $58.06\%$ & $57.5\%$ & \cellcolor{colorsecond}\color{colortext}$57.63\%$ & $54.5\%$ \\
    & \gap & $77.13\%$ & \cellcolor{colorsecond}\color{colortext}$50.66\%$ & $85.21\%$ & $57.38\%$ & $80.38\%$ & \cellcolor{colorsecond}\color{colortext}$62.50\%$ & $55.0\%$ & $56.67\%$ & \cellcolor{colorsecond}\color{colortext}$59.0\%$ \\
    & \sage & $36.45\%$ & $22.67\%$ & $46.97\%$ & $31.73\%$ & $34.13\%$ & $59.38\%$ & \cellcolor{colorsecond}\color{colortext}$60.0\%$ & $56.67\%$ & $55.0\%$ \\
    \bottomrule
    \end{tabular}
    }
\end{table*}

\begin{table*}[t]
\small
    \caption{Mean Accuracy over 3 Runs for EDP and NDP. The \colorbox{colorbest}{\color{colortext}{best accuracy}} and the \colorbox{colorsecond}{\color{colortext}second-best} accuracy are highlighted, respectively.  The symbol \greenup\ represents that the best accuracy improves the second-best accuracy by more than $10\%$. The symbol \reddown\  represents the accuracy less than $55\%$, close to random guess on the chain-structured datasets.}\label{tab:reulst_table_overall_acc_mean}
    \vspace{1pt}
    \centering
    \setlength\tabcolsep{12pt}
    \adjustbox{max width=\textwidth}{
    \begin{tabular}{l|c|c|c|c|c|c|c|c|c|c}
    \toprule[1pt]     & \textbf{Dataset}   &  \textbf{Computers} &  \textbf{Facebook}&    \textbf{PubMed} &  \textbf{Cora}&  \textbf{Photo}& \textbf{Chain-S}  & \textbf{Chain-M}& \textbf{Chain-L} &  \textbf{Chain-X} \\
    \cmidrule[1pt]{1-11}
     \multicolumn{11}{c}{\textbf{EDP}} \\\cmidrule[1pt]{1-11}
  \multirow{4}{*}{$\epsilon=1$}   &   $\gcn$ &  \cellcolor{colorbest}\color{colortext}$92.0\%$&  \cellcolor{colorbest}\color{colortext}$74.0\%$&  \cellcolor{colorsecond}\color{colortext}$87.9\%$&  \cellcolor{colorbest}\color{colortext}$84.3\%$&  \cellcolor{colorbest}\color{colortext}$95.6\%$&  \cellcolor{colorbest}\color{colortext}$78.1\%$ &  \cellcolor{colorbest}\color{colortext}$80.0\%$ &  \cellcolor{colorbest}\color{colortext}$70.0\%$&  \cellcolor{colorbest}\color{colortext}$65.0\%$\\
    &\dpdgc    & \cellcolor{colorsecond}\color{colortext}$88.0\%$ & $60.6\%$ &\cellcolor{colorbest}\color{colortext}$88.3\%$&$75.5\%$&$92.5\%$& $43.8\%$    & $50.0\%$ 
     &$51.7\%$ & $39.0\%$ \\
     &\gap  &  $87.0\%$&  \cellcolor{colorsecond}\color{colortext}$68.3\%$&  $87.2\%$&  \cellcolor{colorsecond}\color{colortext}$76.0\%$&  \cellcolor{colorsecond}\color{colortext}$92.8\%$&  \cellcolor{colorsecond}\color{colortext}$65.6\%$&  \cellcolor{colorsecond}\color{colortext}$67.5\%$&  \cellcolor{colorsecond}\color{colortext}$61.7\%$&  \cellcolor{colorsecond}\color{colortext}$55.5\%$\\
   & \sage    &$77.8\%$ & $48.2\%$ &$85.0\%$&$60.0\%$&$82.4\%$  &  $53.1\%$  & $45.0\%$ 
     &$53.3\%$  &$51.0\%$ \\
     \cmidrule[1pt]{1-11}
      \multirow{4}{*}{$\epsilon=2$}   &   $\gcn$  &  \cellcolor{colorbest}\color{colortext}$92.0\%$&  \cellcolor{colorbest}\color{colortext}$73.8\%$&  \cellcolor{colorbest}\color{colortext}$89.1\%$&  \cellcolor{colorbest}\color{colortext}$86.0\%$&  \cellcolor{colorbest}\color{colortext}$95.7\%$&  \cellcolor{colorbest}\color{colortext}$90.6\%$ &  \cellcolor{colorbest}\color{colortext}$82.5\%$ &  \cellcolor{colorbest}\color{colortext}$71.7\%$ &  \cellcolor{colorbest}\color{colortext}$68.0\%$ \\
    &\dpdgc    & \cellcolor{colorsecond}\color{colortext}$88.2\%$ & $66.7\%$&\cellcolor{colorsecond}\color{colortext}$88.2\%$& \cellcolor{colorsecond}\color{colortext}$77.5\% $&\cellcolor{colorsecond}\color{colortext}$93.3\%$ & $43.8\%$   & $50.0\%$ 
     &$51.7\%$  &$39.0\%$ \\
     &\gap  &  $88.0\%$&  \cellcolor{colorsecond}\color{colortext}$71.7\%$&  $87.3\%$&  $76.8\%$&  \cellcolor{colorsecond}\color{colortext}$93.3\%$ &  \cellcolor{colorsecond}\color{colortext}$65.6\%$&  \cellcolor{colorsecond}\color{colortext}$67.5\%$&  \cellcolor{colorsecond}\color{colortext}$61.7\%$&  \cellcolor{colorsecond}\color{colortext}$55.0\%$ \\
   & \sage   &$76.1\%$ & $48.1\%$  &$84.6\%$&$60.1\%$&$82.4\%$ &  $43.8\%$   & $45.0\%$ 
     &$53.3\%$ &$51.0\%$   \\
     \cmidrule[1pt]{1-11} 
       \multirow{4}{*}{$\epsilon=4$}   &   $\gcn$  &  \cellcolor{colorbest}\color{colortext}$92.2\%$&  \cellcolor{colorbest}\color{colortext}$73.9\%$&  \cellcolor{colorbest}\color{colortext}$89.5\%$&  \cellcolor{colorbest}\color{colortext}$86.7\%$&  \cellcolor{colorbest}\color{colortext}$95.8\%$&  \cellcolor{colorbest}\color{colortext}$90.6\%$ &  \cellcolor{colorbest}\color{colortext}$82.5\%$ &  \cellcolor{colorbest}\color{colortext}$71.7\%$ &  \cellcolor{colorbest}\color{colortext}$65.0\%$\\
    &\dpdgc   & \cellcolor{colorsecond}\color{colortext}$88.9\%$ & $73.3\%$ &\cellcolor{colorsecond}\color{colortext}$88.4\%$&$75.1\%$& \cellcolor{colorsecond}\color{colortext}$94.2\%$ & $43.8\%$   & $50.0\%$ 
     & $51.7\%$ &$39.0\%$ \\
     &\gap    &  $88.8\%$&  \cellcolor{colorsecond}\color{colortext}$73.6\%$&  $87.7\%$&  \cellcolor{colorsecond}\color{colortext}$76.9\%$&  $93.8\%$&  \cellcolor{colorsecond}\color{colortext}$62.5\%$&  \cellcolor{colorsecond}\color{colortext}$60.0\%$&  \cellcolor{colorsecond}\color{colortext}$61.7\%$&  \cellcolor{colorsecond}\color{colortext}$55.5\%$\\
   & \sage    & $79.1\%$ & $50.3\%$ &$85.8\%$& $63.3\%$& $85.7\%$ &  $50.0\%$   & $47.5\%$ 
     &$51.7\%$  &$54.0\%$ \\
     \cmidrule[1pt]{1-11}
       \multirow{4}{*}{$\epsilon=8$}   &   $\gcn$ &  \cellcolor{colorbest}\color{colortext}$92.2\%$&  $74.2\%$&  \cellcolor{colorbest}\color{colortext}$89.7\%$&  \cellcolor{colorbest}\color{colortext}$87.6\%$&  \cellcolor{colorbest}\color{colortext}$95.8\%$ &  \cellcolor{colorbest}\color{colortext}$81.2\%$ &  \cellcolor{colorbest}\color{colortext}$82.5\%$ &  \cellcolor{colorbest}\color{colortext}$73.3\%$ &  \cellcolor{colorbest}\color{colortext}$68.0\%$\\
    &\dpdgc    & $89.4\%$ & \cellcolor{colorbest}\color{colortext}$78.6\%$ &\cellcolor{colorsecond}\color{colortext}$88.6\%$&$76.0\%$&\cellcolor{colorsecond}\color{colortext}$94.6\%$&  $43.8\%$  & $50.0\%$ 
     &$51.7\%$  &$39.0\%$ \\
     &\gap   &  \cellcolor{colorsecond}\color{colortext}$89.6\%$&  $75.0\%$&  $88.0\%$&  \cellcolor{colorsecond}\color{colortext}$78.2\%$&  \cellcolor{colorsecond}\color{colortext}$94.6\%$ &  \cellcolor{colorsecond}\color{colortext}$59.4\%$&  \cellcolor{colorsecond}\color{colortext}$60.0\%$&  \cellcolor{colorsecond}\color{colortext}$61.7\%$&  $55.5\%$\\
   & \sage    & $87.9\%$ & \cellcolor{colorsecond}\color{colortext}$75.6\%$&$84.8\%$&$ 75.7\%$& $92.2\%$&  $43.8\%$   & $47.5\%$ 
     &$51.7\%$  &\cellcolor{colorsecond}\color{colortext}$61.0\%$ \\
     \cmidrule[1pt]{1-11} 
     \multirow{4}{*}{$\epsilon=16$}   &   $\gcn$  &  \cellcolor{colorbest}\color{colortext}$92.1\%$&  $74.3\%$&  \cellcolor{colorbest}\color{colortext}$89.8\%$&  \cellcolor{colorbest}\color{colortext}$88.0\%$&  \cellcolor{colorbest}\color{colortext}$95.9\%$&  \cellcolor{colorbest}\color{colortext}$87.5\%$ &  \cellcolor{colorbest}\color{colortext}$85.0\%$ &  \cellcolor{colorbest}\color{colortext}$70.0\%$&  \cellcolor{colorbest}\color{colortext}$68.0\%$ \\
    &\dpdgc   & $90.4\%$ &  \cellcolor{colorbest}\color{colortext}$81.2\%$&\cellcolor{colorsecond}\color{colortext}$88.9\%$&$77.7\%$&$93.8\%$ & $43.8\%$   & $50.0\%$ 
     &$51.7\%$  &$40.0\%$ \\
     &\gap   &  $90.0\%$&  $76.0\%$&  $88.5\%$&  $80.3\%$&  \cellcolor{colorsecond}\color{colortext}$94.6\%$&  \cellcolor{colorsecond}\color{colortext}$59.4\%$&  \cellcolor{colorsecond}\color{colortext}$70.0\%$&  \cellcolor{colorsecond}\color{colortext}$61.7\%$&  \cellcolor{colorsecond}\color{colortext}$54.3\%$ \\
   & \sage    & \cellcolor{colorsecond}\color{colortext}$90.9\%$ &\cellcolor{colorsecond}\color{colortext}$79.5\%$&$87.6\%$&\cellcolor{colorsecond}\color{colortext}$84.7\%$& $94.1\%$&  $43.8\%$   & $47.5\%$ 
     & $51.7\%$ & $51.0\%$  \\
     \cmidrule[1pt]{1-11} 
      \multirow{4}{*}{$\epsilon=32$}   &   $\gcn$ &  \cellcolor{colorbest}\color{colortext}$92.2\%$&  $73.9\%$&  \cellcolor{colorbest}\color{colortext}$89.8\%$&  \cellcolor{colorbest}\color{colortext}$88.2\%$&  \cellcolor{colorbest}\color{colortext}$95.8\%$&  \cellcolor{colorbest}\color{colortext}$90.6\%$ &  \cellcolor{colorbest}\color{colortext}$82.5\%$ &  \cellcolor{colorbest}\color{colortext}$78.3\%$ &  \cellcolor{colorbest}\color{colortext}$67.0\%$ \\
    &\dpdgc   & \cellcolor{colorsecond}\color{colortext}$91.3\%$ & \cellcolor{colorbest}\color{colortext}$82.9\% $ &\cellcolor{colorsecond}\color{colortext}$88.8\%$&$79.9\%$&$94.3\%$ & $43.8\%$   & $50.0\%$ 
     & $51.7\%$ &$40.0\%$ \\
     &\gap   &  $90.2\%$&  $76.5\% $&  $88.7\%$&  $82.2\%$&  \cellcolor{colorsecond}\color{colortext}$94.6\%$ &  \cellcolor{colorsecond}\color{colortext}$59.4\% $&  \cellcolor{colorsecond}\color{colortext}$65.0\% $&  \cellcolor{colorsecond}\color{colortext}$61.7\% $&  \cellcolor{colorsecond}\color{colortext}$54.3\%$ \\
   & \sage   & $90.6\%$ & \cellcolor{colorsecond}\color{colortext}$79.9\% $ &$86.9\%$&\cellcolor{colorsecond}\color{colortext}$85.2\%$&$94.4\%$ &  $43.8\%$  & $47.5\%$ 
     & $51.7\%$ &$51.0\%$ \\
     \cmidrule[1pt]{1-11} 
      \multicolumn{11}{c}{\textbf{NDP (max node degree $=20$)}} \\\cmidrule[1pt]{1-11}
  \multirow{4}{*}{$\epsilon=1$}     
    & $\gcn$ & \cellcolor{colorbest}\color{colortext}$83.10\%$ & \cellcolor{colorbest}\color{colortext}$50.32\%$ & \cellcolor{colorbest}\color{colortext}$62.05\%$ & \cellcolor{colorbest}\color{colortext}$67.17\%$ & \cellcolor{colorbest}\color{colortext}$87.75\%$ & \cellcolor{colorsecond}\color{colortext}$53.95\%$ & \cellcolor{colorsecond}\color{colortext}$53.67\%$ & $51.80\%$ & \cellcolor{colorsecond}\color{colortext}$51.68\%$ \\
    & \dpdgc & \cellcolor{colorsecond}\color{colortext}$54.42\%$ & \cellcolor{colorsecond}\color{colortext}$36.73\%$ & $51.34\%$ & $30.04\%$ & \cellcolor{colorsecond}\color{colortext}$42.60\%$ & $48.06\%$ & $47.25\%$ & \cellcolor{colorsecond}\color{colortext}$52.20\%$ & $48.99\%$ \\
    & \gap & $36.71\%$ & $35.07\%$ & \cellcolor{colorsecond}\color{colortext}$55.06\%$ & \cellcolor{colorsecond}\color{colortext}$33.95\%$ & $30.88\%$ & \cellcolor{colorbest}\color{colortext}$65.62\%$ & \cellcolor{colorbest}\color{colortext}$55.00\%$ & \cellcolor{colorbest}\color{colortext}$58.33\%$ & \cellcolor{colorbest}\color{colortext}$59.00\%$ \\
    & \sage & $23.91\%$ & $18.58\%$ & $37.32\%$ & $13.62\%$ & $17.10\%$ & $51.88\%$ & $51.25\%$ & $49.83\%$ & $50.60\%$ \\
    \cmidrule[1pt]{1-11}
  \multirow{4}{*}{$\epsilon=2$}     
    & $\gcn$ & \cellcolor{colorbest}\color{colortext}$84.60\%$ & \cellcolor{colorbest}\color{colortext}$53.03\%$ & \cellcolor{colorsecond}\color{colortext}$64.22\%$ & \cellcolor{colorbest}\color{colortext}$69.63\%$ & \cellcolor{colorbest}\color{colortext}$88.89\%$ & \cellcolor{colorsecond}\color{colortext}$54.49\%$ & \cellcolor{colorsecond}\color{colortext}$54.32\%$ & \cellcolor{colorsecond}\color{colortext}$52.23\%$ & \cellcolor{colorsecond}\color{colortext}$52.14\%$ \\
    & \dpdgc & $64.05\%$ & $42.68\%$ & \cellcolor{colorbest}\color{colortext}$65.20\%$ & $30.92\%$ & $53.76\%$ & $48.06\%$ & $47.25\%$ & $52.20\%$ & $48.99\%$ \\
    & \gap & $42.32\%$ & $37.52\%$ & $61.44\%$ & $33.67\%$ & $33.46\%$ & \cellcolor{colorbest}\color{colortext}$65.62\%$ & \cellcolor{colorbest}\color{colortext}$55.00\%$ & \cellcolor{colorbest}\color{colortext}$58.33\%$ & \cellcolor{colorbest}\color{colortext}$59.00\%$ \\
    & \sage & $31.32\%$ & $20.39\%$ & $38.08\%$ & $14.6\%$ & $19.90\%$ & $51.88\%$ & $51.25\%$ & $49.83\%$ & $50.60\%$\\
    \cmidrule[1pt]{1-11}
  \multirow{4}{*}{$\epsilon=4$}     
    & $\gcn$ & \cellcolor{colorbest}\color{colortext}$85.95\%$ & \cellcolor{colorbest}\color{colortext}$55.73\%$ & \cellcolor{colorsecond}\color{colortext}$67.15\%$ & \cellcolor{colorbest}\color{colortext}$71.97\%$ & \cellcolor{colorbest}\color{colortext}$89.89\%$ & \cellcolor{colorsecond}\color{colortext}$54.71\%$ & \cellcolor{colorbest}\color{colortext}$55.35\%$ & \cellcolor{colorsecond}\color{colortext}$52.83\%$ & \cellcolor{colorsecond}\color{colortext}$52.89\%$ \\
    & \dpdgc & \cellcolor{colorsecond}\color{colortext}$70.79\%$ & \cellcolor{colorsecond}\color{colortext}$47.52\%$ & \cellcolor{colorbest}\color{colortext}$78.82\%$ & $31.22\%$ & \cellcolor{colorsecond}\color{colortext}$71.17\%$ & $48.06\%$ & $47.25\%$ & $52.20\%$ & $48.99\%$ \\
    & \gap & $48.82\%$ & $40.69\%$ & $67.40\%$ & \cellcolor{colorsecond}\color{colortext}$33.58\%$ & $37.42\%$ & \cellcolor{colorbest}\color{colortext}$65.62\%$ & \cellcolor{colorsecond}\color{colortext}$55.00\%$ & \cellcolor{colorbest}\color{colortext}$58.33\%$ & \cellcolor{colorbest}\color{colortext}$59.00\%$ \\
    & \sage & $35.14\%$ & $22.50\%$ & $39.03\%$ & $16.40\%$ & $23.48\%$ & $51.88\%$ & $51.25\%$ & $49.83\%$ & $50.60\%$ \\
    \cmidrule[1pt]{1-11}
  \multirow{4}{*}{$\epsilon=8$}     
    & $\gcn$ & \cellcolor{colorbest}\color{colortext}$87.15\%$ & \cellcolor{colorbest}\color{colortext}$58.05\%$ & $70.17\%$ & \cellcolor{colorbest}\color{colortext}$73.80\%$ & \cellcolor{colorbest}\color{colortext}$90.73\%$ & \cellcolor{colorsecond}\color{colortext}$55.71\%$ & \cellcolor{colorbest}\color{colortext}$56.09\%$ & \cellcolor{colorsecond}\color{colortext}$53.57\%$ & \cellcolor{colorsecond}\color{colortext}$53.50\%$ \\
    & \dpdgc & \cellcolor{colorsecond}\color{colortext}$75.32\%$ & \cellcolor{colorsecond}\color{colortext}$49.34\%$ & \cellcolor{colorbest}\color{colortext}$82.65\%$ & \cellcolor{colorsecond}\color{colortext}$37.47\%$ & \cellcolor{colorsecond}\color{colortext}$78.54\%$ & $48.06\%$ & $47.25\%$ & $52.20\%$ & $49.49\%$ \\
    & \gap & $53.75\%$ & $42.60\%$ & \cellcolor{colorsecond}\color{colortext}$71.14\%$ & $33.07\%$ & $45.18\%$ & \cellcolor{colorbest}\color{colortext}$65.62\%$ & \cellcolor{colorsecond}\color{colortext}$55.00\%$ & \cellcolor{colorbest}\color{colortext}$58.33\%$ & \cellcolor{colorbest}\color{colortext}$59.00\%$ \\
    & \sage & $36.82\%$ & $24.05\%$ & $40.90\%$ & $19.33\%$ & $27.18\%$ & $51.88\%$ & $51.25\%$ & $49.83\%$ & $50.60\%$ \\
    \cmidrule[1pt]{1-11}
  \multirow{4}{*}{$\epsilon=16$}     
    & $\gcn$ & \cellcolor{colorbest}\color{colortext}$88.28\%$ & \cellcolor{colorbest}\color{colortext}$60.06\%$ & $72.76\%$ & \cellcolor{colorbest}\color{colortext}$75.24\%$ & \cellcolor{colorbest}\color{colortext}$91.53\%$ & \cellcolor{colorsecond}\color{colortext}$55.99\%$ & \cellcolor{colorbest}\color{colortext}$56.99\%$ & \cellcolor{colorsecond}\color{colortext}$54.11\%$ & \cellcolor{colorsecond}\color{colortext}$54.28\%$ \\
    & \dpdgc & \cellcolor{colorsecond}\color{colortext}$77.85\%$ & \cellcolor{colorsecond}\color{colortext}$50.00\%$ & \cellcolor{colorbest}\color{colortext}$84.27\%$ & \cellcolor{colorsecond}\color{colortext}$53.71\%$ & \cellcolor{colorsecond}\color{colortext}$81.84\%$ & $48.06\%$ & $47.25\%$ & $52.20\%$ & $50.00\%$ \\
    & \gap & $57.79\%$ & $44.07\%$ & \cellcolor{colorsecond}\color{colortext}$73.65\%$ & $33.95\%$ & $51.49\%$ & \cellcolor{colorbest}\color{colortext}$65.62\%$ & \cellcolor{colorsecond}\color{colortext}$55.00\%$ & \cellcolor{colorbest}\color{colortext}$58.33\%$ & \cellcolor{colorbest}\color{colortext}$59.00\%$ \\
    & \sage & $39.57\%$ & $25.95\%$ & $43.45\%$ & $22.14\%$ & $31.64\%$ & $51.88\%$ & $51.25\%$ & $49.83\%$ & $50.60\%$ \\
    \cmidrule[1pt]{1-11}
  \multirow{4}{*}{$\epsilon=32$}     
    & $\gcn$ & \cellcolor{colorbest}\color{colortext}$89.35\%$ & \cellcolor{colorbest}\color{colortext}$61.87\%$ & $74.83\%$ & \cellcolor{colorbest}\color{colortext}$76.61\%$ & \cellcolor{colorbest}\color{colortext}$92.36\%$ & \cellcolor{colorsecond}\color{colortext}$56.23\%$ & \cellcolor{colorbest}\color{colortext}$57.72\%$ & \cellcolor{colorsecond}\color{colortext}$54.64\%$ & \cellcolor{colorsecond}\color{colortext}$54.77\%$ \\
    & \dpdgc & \cellcolor{colorsecond}\color{colortext}$80.53\%$ & \cellcolor{colorsecond}\color{colortext}$50.65\%$ & \cellcolor{colorbest}\color{colortext}$85.34\%$ & \cellcolor{colorsecond}\color{colortext}$61.18\%$ & \cellcolor{colorsecond}\color{colortext}$85.53\%$ & $48.06\%$ & $47.25\%$ & $52.20\%$ & $49.59\%$ \\
    & \gap & $60.98\%$ & $45.17\%$ & \cellcolor{colorsecond}\color{colortext}$75.54\%$ & $37.79\%$ & $56.28\%$ & \cellcolor{colorbest}\color{colortext}$65.62\%$ & \cellcolor{colorsecond}\color{colortext}$55.00\%$ & \cellcolor{colorbest}\color{colortext}$58.33\%$ & \cellcolor{colorbest}\color{colortext}$59.00\%$ \\
    & \sage & $45.63\%$ & $29.30\%$ & $46.38\%$ & $25.44\%$ & $39.40\%$ & $51.88\%$ & $51.25\%$ & $49.83\%$ & $50.60\%$ \\  
   \cmidrule[1pt]{1-11}
      \multicolumn{11}{c}{\textbf{Non-Private}} \\
         \cmidrule[1pt]{1-11} 
     \multirow{5}{*}{Plain}   &   $\gcn$  &  \cellcolor{colorsecond}\color{colortext}$92.2\%$&  $78.1\% $&  \cellcolor{colorbest}\color{colortext}$89.9\%$&  \cellcolor{colorbest}\color{colortext}$88.4\%$&  \cellcolor{colorsecond}\color{colortext}$95.8\%$&  \cellcolor{colorbest}\color{colortext}$100.0\% $&  \cellcolor{colorbest}\color{colortext}$100.0\% $&  \cellcolor{colorbest}\color{colortext}$100.0\% $&  \cellcolor{colorbest}\color{colortext}$100.0\%$\\
    &\dpdgc   & \cellcolor{colorbest}\color{colortext}$92.8\%$ & \cellcolor{colorbest}\color{colortext}$86.4\% $ &$88.1\%$&$83.9\%$& \cellcolor{colorbest}\color{colortext}$96.2\%$  & $59.4\% $  & $77.5\% $ 
     &$63.3\% $ &$73.0\% $\\
     &\gap    &  $90.9\%$&  $78.3\% $&  \cellcolor{colorsecond}\color{colortext}$89.1\%$&  \cellcolor{colorsecond}\color{colortext}$85.1\%$&  $95.2\%$&  \cellcolor{colorsecond}\color{colortext}$97.9\%$&  \cellcolor{colorbest}\color{colortext}$100.0\%$&  \cellcolor{colorsecond}\color{colortext}$93.3\%$&  \cellcolor{colorbest}\color{colortext}$100.0\%$\\
   & \sage    & $91.6\%$ & \cellcolor{colorsecond}\color{colortext}$79.7\%$&$87.0\%$&$82.1\%$&$94.2\%$ &$59.4\% $    & $55.0\%$ 
     & $60.0\% $&$58.0\% $ \\
   &  \mlpE   & $83.9\%$ &$51.1\%$  &$84.9\%$&$72.9\%$&$91.1\%$ &  $40.6\%$  & $47.5\%$ 
     & $46.7\%$ &$51.0\%$ \\
    \bottomrule
    \end{tabular}
    }
\end{table*}

\begin{figure}[!t]
		\subfloat[Chain-S]{
		\includegraphics[width = 4cm]{./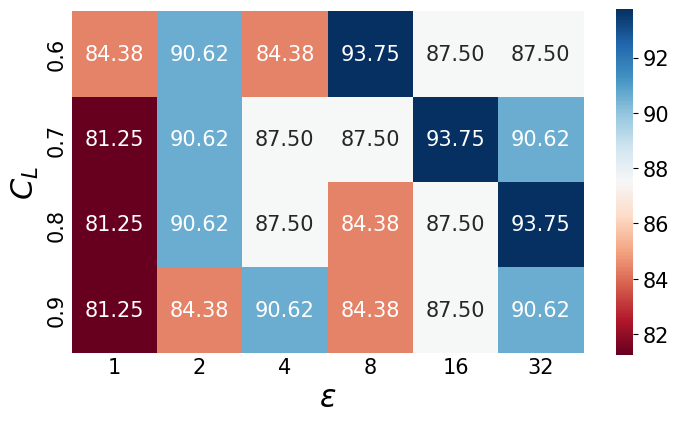}
		\label{fig:heatmap_chains1_epsilon_cl}
		}
      \subfloat[Cora]{
		\includegraphics[width = 4cm]{./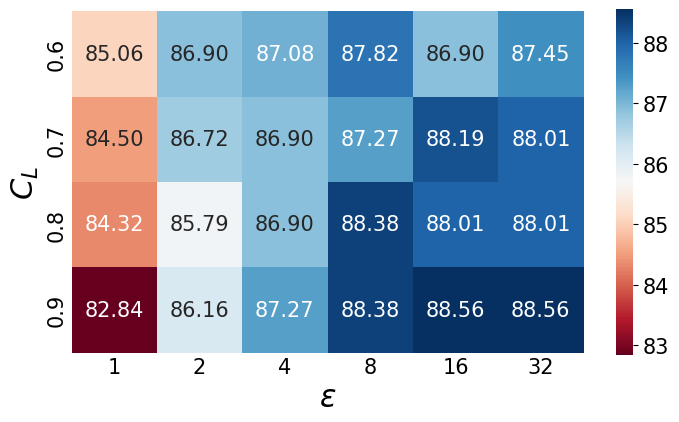}
		\label{fig:heatmap_cora_epsilon_cl}
		}

        \subfloat[Chain-S]{
		\includegraphics[width = 4cm]{./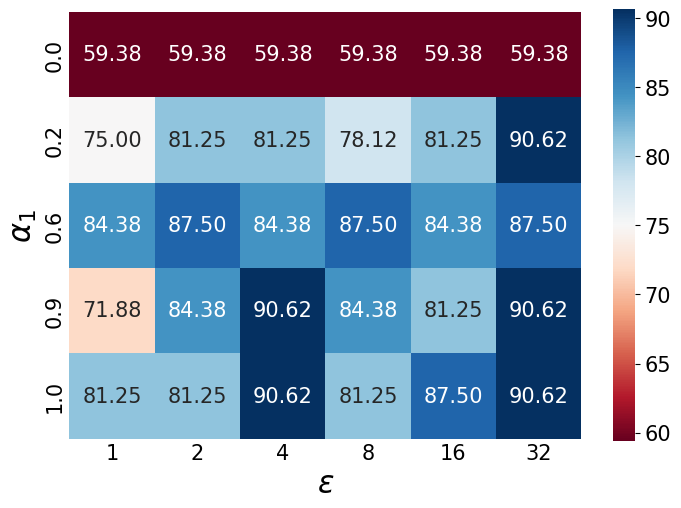}
		\label{fig:heatmap_chains1_epsilon_alpha}
		}
      \subfloat[Cora]{
		\includegraphics[width = 4cm]{./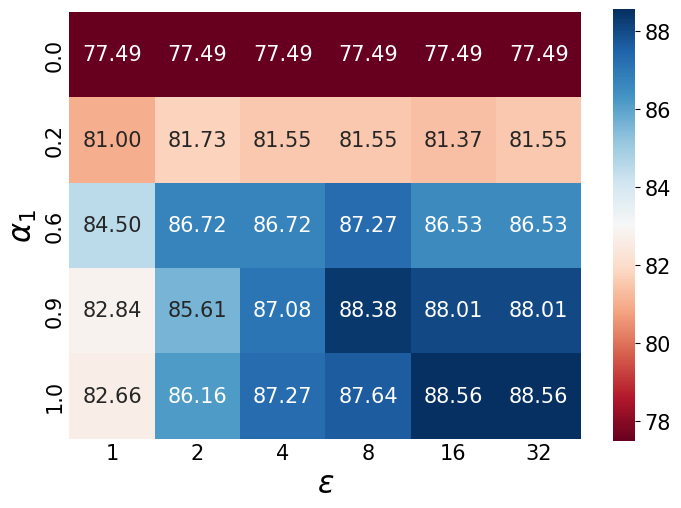}
		\label{fig:heatmap_cora_epsilon_alpha}
		}

        	\subfloat[Chain-S]{
		\includegraphics[width = 4cm]{./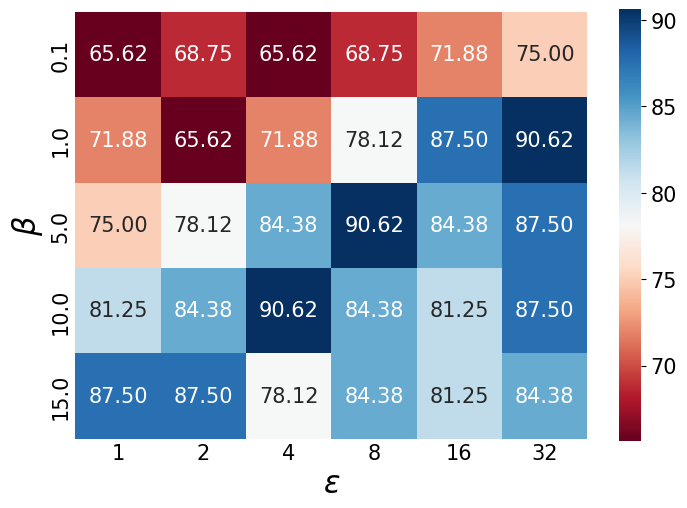}
		\label{fig:heatmap_chains1_epsilon_beta}
		}
      \subfloat[Cora]{
		\includegraphics[width = 4cm]{./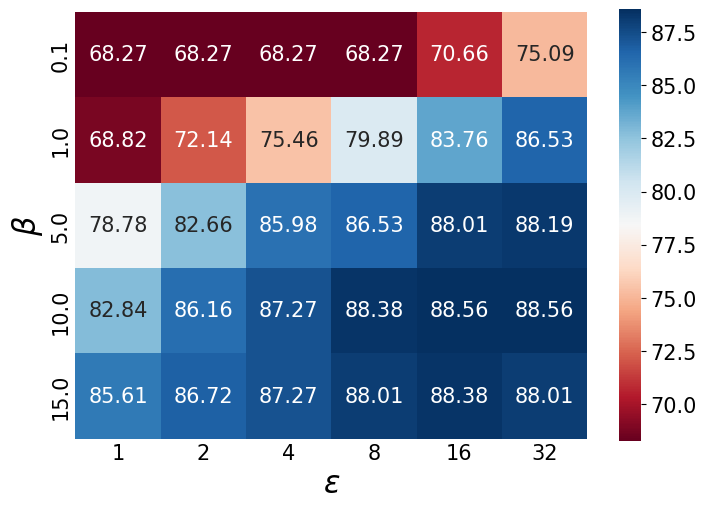}
		\label{fig:heatmap_cora_epsilon_beta}
		}
		\caption{Classification Accuracy under $\Clip, \alpha_1,\beta$ of CGL.}
        \label{fig:heatmaps}
        \vspace{-6mm}
		\end{figure}
{
\subsection{Larger Datasets Evaluation}
\label{subsec:larger}

To further assess the scalability on large real-world graphs, we additionally evaluate $\gcn$  on the Reddit2 dataset~\cite{zeng2019graphsaint}. Reddit2 contains over 232K nodes and 23M edges, representing a significantly larger and structurally richer benchmark. 
Table~\ref{tab:reddit2} summarizes the top-1 accuracy under both EDP and NDP across privacy budgets $\epsilon \in \{1,2,4,8,16,32\}$.

\begin{table}[H]
\small
    \caption{Evaluating EDP and NDP over Reddit2 Dataset} \label{tab:reddit2}
    \vspace{1pt}
    \centering
    \setlength\tabcolsep{5pt}
    \adjustbox{max width=\textwidth}{
    \begin{tabular}{c|c|c|c|c|c|c}
    \toprule[1pt]
    $\mathbf{\epsilon}$ & \textbf{1} & \textbf{2}& \textbf{4}& \textbf{8}& \textbf{16} & \textbf{32} \\
    \cmidrule[1pt]{1-7}
    EDP & $70.6\%$ & $73.8\%$ & $76.4\%$ & $78.6\%$ & $78.7\%$ & $80.9\%$ \\
    NDP & $63.7\%$ & $68.5\%$ & $73.2\%$ & $77.3\%$ & $78.4\%$ & $79.6\%$ \\
    \bottomrule
    \end{tabular}
    }
\end{table}

We observe:
1) $\gcn$ achieves $70.6\%$ (EDP) and $63.7\%$ (NDP) accuracy even at $\epsilon = 1$, improving steadily as privacy budgets relax, reaching $80.9\%$ (EDP) and $79.6\%$ (NDP) at $\epsilon = 32$. This demonstrates that $\gcn$ maintains stable utility even on million-edge graphs. 2) Due to the dataset’s large maximum degree and degree heterogeneity, NDP naturally incurs higher sensitivity. This is reflected in a larger EDP–NDP gap at small $\epsilon$, which narrows as noise reduces with increasing privacy budgets. 3) Despite the challenging properties of Reddit2, $\gcn$ avoids the noise explosion potentially observed in non-contractive private GNNs. The monotonic accuracy increase across $\epsilon$ validates that $\gcn$'s contractive operator bounds perturbation amplification over multiple hops.
These results collectively demonstrate that CARIBOU extends effectively to large-scale real-world social networks, providing additional empirical support for the generality of $\gcn$'s convergent privacy framework.
\subsection{Failure Case on Divergent Privacy}
\label{app:exp_failure}
Table~\ref{tab:failure-case} reports the accuracy of \gap\  under a fixed privacy budget $\epsilon=4$ as the 
depth $K$ increases. Since \gap\  relies on standard linear RDP composition, the required 
noise variance grows unboundedly with $K$, leading to a \emph{divergent} noise allocation 
across layers. This phenomenon is clearly reflected in the performance trend: while GAP 
achieves $77.10\%$ accuracy at $K=2$, the accuracy steadily degrades as the GNN becomes 
deeper, dropping to $71.03\%$ at $K=64$.
Figure~\ref{fig:k-study} also confirms that $\gcn$ shows better model accuracy than \gap across different $K$. 
Moreover, \gap further drops to $51.11\%$ at $K=128$, approaching random-guessing performance.
This study confirms that  non-convergent privacy accounting 
inevitably leads to excessive noise at large depths, which severely impairs model utility. 

\begin{table}[H]
\centering
\caption{
Accuracy with Divergent Noise Allocation. We  set $\epsilon=4$ and instantiate \gap\ on the \cora dataset as an example.
}
\label{tab:failure-case}
\setlength\tabcolsep{2pt}
\begin{tabular}{c|c|c|c|c|c|c|c}
\toprule
$K$ & $2$   & $4$ & $8$ & $16$ & $32$&$64$&$128$ \\
\midrule
Accuracy 
& $77.10\%$   & $75.65\%$ & $74.91\%$ & $73.43\%$ & $74.17\%$ & $71.03\%$ & $51.11\%$\\
\bottomrule
\end{tabular}
\end{table}

}

\subsection{More Related Works}
\label{app:related}


\subsubsection{Differential privacy for graph structures}
Euclidean data clearly states which data points are associated with a particular individual, for example, tabular data. 
Unlike tabular data, graph association can be interpreted into the combination by edges and nodes.
Corresponding to instance-level DP, the ``instance'' of graph data can be an edge or a node,  naturally called edge DP and node DP. 
Early stage works~\cite{tcc/GehrkeLP11,icdm/MirW09,pakdd/WangWW13,vldb/ChenFYD14,kdd/LuM14} started by private statistics estimation or counting, or release of private graphs under edge DP. 
Node DP protects against revealing the presence or absence of an individual node along with all its adjacent edges. 
Prior works consider various node-privacy algorithms~\cite{tcc/KasiviswanathanNRS13,pakdd/IftikharW21,sigmod/DayLL16,sigmod/XiaCKHT021,uss/ImolaMC21} tailored to specific graph statistics constrained by bounded node degrees.
Node DP typically offers a stronger level of protection compared to edge privacy, as inclusion of a node implies the inclusion of some particular edges.
Node DP is also considered harder to achieve than edge DP if maintaining reasonable utility loss.

\subsubsection{Differentially private GNNs}
Sharing model updates can lead to privacy risks, as adversaries may reconstruct training data. Such risks~\cite{sp/CarliniCN0TT22,ccs/0001MMBS22} stem from overfitting, where neural networks memorize training data~\cite{zhang2021understanding}. Graph structures, which encode relational information, are widely used in applications like intrusion detection~\cite{eurosp/ApruzzeseLS23,sp/SommerP10}, social recommendation~\cite{icde/XiaSH0XP23}, and drug discovery~\cite{kdd/ZhongM23}. Graph neural networks (GNNs) have emerged as a key approach for learning over graph-structured data, but their message-passing mechanisms can leak sensitive information about nodes and their neighbors.
To address these risks, several works~\cite{arxiv/sajadmanesh2020differential,ccs/sajadmanesh2021locally,uss/sajadmanesh2023gap,tcc/GuptaRU12} propose differentially private GNNs~\cite{arxiv/DaigavaneNode21,corr/abs-2109-08907,uss/sajadmanesh2023gap}. 
This necessitates injecting calibrated noise after each message passing layer, where the noise scale is proportional to $K$ for a fixed privacy budget $\epsilon$.
For instance, Wu \etal~\cite{sp/Wulinkteller22} achieve edge-level privacy by adding Laplace noise to adjacency matrix entries under a small DP budget. Kolluri \etal~\cite{ccs/KolluriBHS22} improve privacy-utility trade-offs by decoupling graph structure from the neural network architecture, querying the graph only when necessary. However, leveraging contractive hidden states in graph learning for enhanced privacy analysis remains an underexplored avenue.

\subsubsection{Privacy analysis for iterations}
Prior works analyze cumulative privacy costs by composing iteration-wise privacy guarantees using calibrated noise and privacy composition theorems. 
These theorems enable modular analysis of complex algorithms by controlling the total privacy budget. Feldman et al.~\cite{focs/FeldmanMTT18,stoc/FeldmanKT20} address privacy analysis for gradient computations over a single training epoch under smooth and convex loss functions. 
Recent advancements~\cite{nips/ChourasiaYS21,nips/AltschulerT22,siamcomp/AltschulerBT24} improve privacy analysis for convex and strongly convex losses by eliminating dependence on infinite iterations, leveraging the R\'enyi DP framework.
While effective, R\'enyi DP is lossy. 
$f$-DP framework offers tighter
analysis  through hypothesis testing curves.

For multi-hop message passing GNNs (e.g., \gap~\cite{uss/sajadmanesh2023gap}), the privacy guarantee scales linearly with the number of hops $K$. 
Foundational questions about privacy costs remain challenging, even for smooth convex losses over bounded domains~\cite{nips/AltschulerT22}. Recent works~\cite{nips/AltschulerT22,nips/0001S22} reveal that privacy leakage does not increase indefinitely with iterations. 
They demonstrate that after a brief burn-in period, additional iterations of SGD do not significantly impact privacy.

{
%

\subsubsection{Overview of privacy attacks}
Numerous privacy attacks have been developed against machine learning models, including membership inference, attribute inference, and reconstruction attacks. In the graph domain, these attacks are adapted to structured data, where the protected units are nodes, edges, or local subgraphs~\cite{uss/000100S022,sp/0011L0022,tpsisa/OlatunjiNK21}. MIA determine whether a specific node or edge appeared in the training graph, with variants studied in general ML models and specifically in GNNs. 
In $\gcn$, privacy auditing module aims to empirically estimate the privacy leakage through an 
attacking algorithm~\cite{nips/0002NJ23,ccs/0001MMBS22}. 
Attribute inference attacks~\cite{olatunji2023does} aim to recover sensitive node or edge attributes from predictions. More recent graph-specific attacks further exploit message-passing behaviors to infer private structural information~\cite{uss/000100S022}.

\textit{Attacking algorithms in $\gcn$.} The objective of $\gcn$ is not to benchmark the full space of graph attacks, but to conduct a  mechanism-level privacy audit of perturbed message passing under the \emph{black-box} membership threat model~\cite{sp/CarliniCN0TT22,tpsisa/OlatunjiNK21}  that underlies prior DP-GNN frameworks. 
In contrast to the upper bound obtained from theoretical privacy analysis, privacy auditing via MIA achieves the experimental true positive rate (TPR) and false positive rate (FPR) and then presents a lower bound on the privacy budget $\epsilon$~\cite{sp/CarliniCN0TT22}.
Within this setting, LinkTeller~\cite{sp/0011L0022} and the node-level GNN membership attack of Olatunji et al.~[42] (G-MIA) are canonical choices: both are tailored to GNNs, operate in the transductive setting we consider, require only query access, and have open-source implementations. Using these two attacks provides a clean and reproducible evaluation of edge- and node-level leakage, enabling direct comparison with private GNNs. 

Other attack families, such as attribute inference~\cite{olatunji2023does}  or reconstruction attacks~\cite{zhang2022model,ijcai/ZhangLHWLLC21} require white-box knowledge.
They aim to address different privacy notions or stronger threat models, and are therefore complementary rather than directly comparable to our membership-centric audit.
Considering a practical and generic adversary, we do not include these attacks that require white-box access to internal embeddings or gradients, or that focus on DP notions such as attribute inference or graph-level reconstruction.
Those attacks assume a strictly stronger adversary or target different sensitive objects than the edge- and node-level memberships that our DP guarantees protect.
Extending our empirical evaluation to such attacks in other application scenarios is an interesting direction for future work.
 }
 

\clearpage
\newpage

\section*{Artifact Evaluation Appendix}
\setcounter{subsection}{0}
\subsection{Artifact Summary}

 $\gcn$ is a privacy-preserving GNN framework with a convergent privacy analysis, enabling deeper message passing under both edge-DP and node-DP without sacrificing utility.
This artifact reproduces all experiments in Section~\ref{sec:exp}, as summarized in Table~\ref{tab:eval-index}. It provides one-command scripts and pinned environments to regenerate: (i) privacy–utility results under EDP/NDP across nine datasets; (ii) $\epsilon$-curves, $K$-hops curves,  $D$-curves, and heatmaps;  and (iii) overhead measurements (time and memory).
Overall, this artifact demonstrates that $\gcn$ achieves superior privacy–utility balance and efficiency among private GNNs, with results verifiable through automated scripts and publicly available datasets.

\begin{table}[H]
\caption{Summary of Artifact \& Evaluation Index. }
\label{tab:eval-index}
\begin{center}
\setlength\tabcolsep{4pt}
\begin{tabular}{c|l|l}
\toprule
     Category    & Experiments  & Evaluation \\
\midrule
 \multirow{2}{*}{PU} &PU-1 (Table~\ref{tab:reulst_table_overall_acc_top1}), PU-2 (Table~\ref{tab:reulst_table_overall_acc_top_node_5}), & \multirow{2}{*}{Sec.~\ref{ae:e1}} \\
  &  PU-3 (Table~\ref{tab:reulst_table_overall_acc_mean}) &  \\\hline
 \multirow{2}{*}{CRV} &CRV-$\epsilon$ (Figure~\ref{fig:k-study}), CRV-$K$ (Figure~\ref{fig:K-study})  & \multirow{2}{*}{Sec.~\ref{ae:e2}} \\
                  & CRV-$D$,  (Figure~\ref{fig:abla_nodedeg}),CRV-H (Figure~\ref{fig:heatmaps}) & \\\hline
 OV      & OV-E (Tables~\ref{tab:reulst_table_running_time},\ref{tab:reulst_table_memory}), OV-N (Tables~\ref{tab:reulst_table_running_time_ndp},\ref{tab:reulst_table_memory_ndp}) & Sec.~\ref{ae:e3} \\
\bottomrule
\end{tabular}
\end{center}
``PU'': privacy-utility experiments; ``CRV'': curves for accuracy as $\epsilon,K,D_{\max}$ varies; ``AUD'': privacy auditing experiments; and ``OV'': overhead measurements.
\end{table}

\subsection{Description \& Requirements}

This section provides all the information necessary to recreate the experimental setup to run our artifacts. All experiments can run on a commodity desktop machine.  

\subsubsection{How to access}
The’s main GitHub repository $\gcn$ can be found at \url{https://github.com/yuzhengcuhk/caribou-public}, 
and the exact version of the code for the evaluation of artifacts is also available at  
\url{https://doi.org/10.5281/zenodo.17539660} on the Zenodo platform.
$\gcn$ is released under the \textit{MIT License}.

\subsubsection{Hardware dependencies}
Our artifacts can be run on the Ubuntu 20.04.2 LTS server, with AMD Ryzen Threadripper 3970X 32-core CPUs of 256 GB CPU memory and NVIDIA GeForce RTX 3090 of 24GB memory. To ensure that all artifacts run correctly, it is recommended to use a machine with a similar configuration.

\subsubsection{Python environments} 
We provide a reproducible \texttt{conda} environment for evaluating all experiments, built on \texttt{Python~3.9.20}. Core system libraries are managed with \texttt{conda}, while most packages are pinned and installed via \texttt{pip}. 

\subsubsection{Datasets}
$\gcn$ can be tested over nine datasets, including \photo\ and \computers\ \citep{corr/abs-1811-05868}, \cora\ and \pubmed\ \citep{sen2008collective}, \facebook\ \citep{corr/abs-1102-2166}, \chainS, \chainM, \chainL\ and \chainX~\cite{nips/GuC0SG20}.

\subsubsection{Artifact Architecture}
 \texttt{core/} is the primary implementation directory, including:
\begin{scriptsize}
\begin{verbatim}
  - methods/ # DP-GNN implementations
  - models/ # neural network architectures
  - modules/ #  GNN building blocks
  - privacy/ # privacy mechanisms
  - datasets/ # data loading
  - trainer/ # training and test loop
  - args/ # CLI parsing and configuration
  - utils.py # general utilities
\end{verbatim}
\end{scriptsize}

\subsection{Artifact Installation \& Configuration}
\label{ae:aic}
\subsubsection{Download the code} To download the artifact, we recommend cloning via HTTPS. Execute the following commands to fetch the repository, and  then  enter the $\gcn$'s directory. 

\begin{center}
\begin{minipage}{0.95\linewidth}
\begin{lstlisting}[language=Python, label={lst:example}]
git clone https://github.com/yuzhengcuhk/caribou-public.git
cd caribou-public/
\end{lstlisting} 
\end{minipage}
\end{center}

\subsubsection{Build the environment}
Inside \texttt{caribou-public/}, we provide a script \texttt{setup\_minimal\_env.sh}  to quickly and conveniently build $\gcn$'s environment for evaluators. The shell script \texttt{setup\_minimal\_env.sh} creates a
\texttt{conda} environment (\texttt{caribou-minimal}) installs the core libraries from the CUDA~11.7 wheel index.

\begin{center}
\begin{minipage}{0.95\linewidth}
\begin{lstlisting}[language=Python, label={lst:example}]
chmod +x ./setup_minimal_env.sh
./setup_minimal_env.sh
conda activate caribou-minimal
python train.py mlp-dp --dataset cora --epsilon 2
\end{lstlisting}
\end{minipage}
\end{center}
For users who prefer \texttt{pip}-based installs, we also provide \texttt{requirements\_minimal\_caribou.txt}
that lists the same minimal set and helps evaluators to install relevant packages. 
After the environment is successfully provisioned, execute the example run via the third command. This launches a \emph{DP-MLP} training sanity check and prints the results.

\subsubsection{\texttt{train.py} entry point}
\texttt{train.py} serves as the unified launcher for all experiments. The \texttt{--method} flag selects the pipeline family and its privacy regime, including $\gcn$ and all baseline works in this paper: \texttt{mlp}, \texttt{mlp-dp}, \texttt{gap-inf}, \texttt{gap-edp}, \texttt{gap-ndp}, \texttt{sage-inf}, \texttt{sage-edp}, \texttt{sage-ndp}, \texttt{dpdgc-inf}, \texttt{dpdgc-edp}, \texttt{dpdgc-ndp},  \texttt{caribou-inf}, \texttt{caribou-edp}, \texttt{caribou-ndp}.  Internally, the entry point dispatches to the corresponding implementation in \texttt{core/methods/}.  
 
\subsection{Major Claims}
\label{ae:claim}
\begin{itemize}
    \item (\textbf{C1}): $\gcn$ can outperform all other baselines in most cases with varying $\epsilon$, no matter whether it is EDP or NDP. $\gcn$ achieves a more favorable privacy-utility trade-off.
    \item (\textbf{C2}): $\gcn$ consumes relatively low computational overhead in execution time and max memory usage. 
\end{itemize}

\subsection{Evaluation}
This section enumerates the end-to-end steps and experiments required to evaluate our artifact and validate the paper’s claims (\textbf{C1}, \textbf{C2} in Section~\ref{ae:claim}) for $\gcn$. The full evaluation can be completed in \emph{approximately 20 human minutes} and \emph{about 8 compute hours}, subject to hardware. 
We assume the host machine is correctly configured with the dependencies described in Section~\ref{ae:aic}. 
The same instructions can also be found in the \texttt{caribou-public/}, \ie, top-level \texttt{README} to provide a consistent and reproducible workflow. 

\subsubsection{Evaluation (\textbf{E1})} 
\label{ae:e1}
[PU-1 (5 human-minutes, 3 computer-hours) + PU-2 (5 human-minutes, 3 computer-hours) + PU-3 (5 human-minutes, 3 computer-hours)]

Under \texttt{AE/PU/}, we provide two scripts for quickly reproducing privacy–utility experiments.
\texttt{run\_scripts\_all.sh} executes the \emph{full} experiments across the configured datasets and privacy budgets to regenerate all three accuracy tables (Table~\ref{tab:reulst_table_overall_acc_top1}, Table~\ref{tab:reulst_table_overall_acc_top_node_5}, Table~\ref{tab:reulst_table_overall_acc_mean}) reported in the paper. It is intended for a complete reproduction run on a machine with
sufficient compute. For convenience to evaluators, \texttt{run\_scripts\_computers.sh} is a \emph{targeted} script
that launches the exactly same pipelines on the \texttt{computers} dataset, serving as a quick
sanity check to verify accuracy results.
To reduce the workload, we only compare $\gcn$ with the generally strongest baseline \gap\ in artifact evaluation.
It creates an output directory (\texttt{AE\_outputs/PU/}) for storing results.
Within the \texttt{caribou\_public} directory, execute:

\begin{center}
\begin{minipage}{0.95\linewidth}
\begin{lstlisting}[language=Python, label={lst:example}]
chmod +x AE/PU/pu1_run_scripts_computers.sh
./AE/PU/pu1_run_scripts_computers.sh # PU-1

chmod +x AE/PU/pu2_run_scripts_computers.sh
./AE/PU/pu2_run_scripts_computers.sh # PU-2

chmod +x AE/PU/pu3_run_scripts_computers.sh
./AE/PU/pu3_run_scripts_computers.sh # PU-3
\end{lstlisting} 
\end{minipage}
\end{center}

Each evaluation item (i.e., each table entry) emits a plain-text report. Evaluators can open the
corresponding \texttt{.txt} file to read either the top accuracy or the mean accuracy
over three independent runs. Because training seeds are randomized and i.i.d noise is sampled every time, some  fluctuations  across
executions are expected. 
The \texttt{Chain} datasets are particularly noise–sensitive, and thus can
exhibit comparatively larger variation. 
The \texttt{Chain} datasets are particularly noise-sensitive and thus can exhibit comparatively larger variation. 
This sensitivity is due to their structure: non-zero features are present only at the first node of each chain. Information must propagate from this source, and it can be degraded by noise accumulation during propagation. This task is already challenging with small number of nodes, and its difficulty can be unwittingly amplified by the random train/val/test split. 
To \textit{keep runtime modest} as required by artifact evaluation, the provided scripts sweep only a \emph{subset} of the hyper-parameters used
for the paper’s final tables. As a result, the top accuracy observed by evaluators during
reproduction sometimes may   be  lower than the maxima reported in the paper. This is expected as the paper
results reflect a broader search grid. 
Despite these effects, $\gcn$ exceeds the
accuracy of competing baselines in the majority settings (Claim~\textbf{C1}).

\subsubsection{Evaluation (\textbf{E2})} 
\label{ae:e2}
[CRV-$\epsilon$ (5 human-minutes, 0.5 computer-hours) + CRV-$K$ (5 human-minutes, 0.5 computer-hours) + CRV-$D$ (5 human-minutes, 0.5 computer-hours) + + CRV-H (5 human-minutes, 0.5 computer-hours)]
Before re-generating Figures~\ref{fig:k-study},\ref{fig:K-study},\ref{fig:abla_nodedeg},\ref{fig:heatmaps} we need to install \texttt{jupyter} and register this environment as a Jupyter kernel.
Notably, each of the following commands is a \textit{single line}. Some PDF viewers soft-wrap lines, which can corrupt copy-and-paste. To prevent errors, paste each command into a plain-text (\texttt{.txt}) file, remove any spurious whitespace, then paste and execute it in the terminal. 
To reproduce all figures, execute the following commands within the \texttt{caribou\_public} directory:
\begin{center}
\begin{minipage}{0.95\linewidth}
\begin{lstlisting}[language=Python, label={lst:example}]
python -m pip install jupyter 
python -m pip install seaborn==0.13.2
 
python -m ipykernel install --user --name caribou-minimal --display-name "Python (caribou-minimal)"
 
jupyter nbconvert --to notebook --execute --inplace   --ExecutePreprocessor.kernel_name=caribou-minimal   ./AE/CRV/eps_hop_plots.ipynb # CRV-eps, CRV-K

jupyter nbconvert --to notebook --execute --inplace   --ExecutePreprocessor.kernel_name=caribou-minimal   ./AE/CRV/degree_plots.ipynb # CRV-D

jupyter nbconvert --to notebook --execute --inplace   --ExecutePreprocessor.kernel_name=caribou-minimal   ./AE/CRV/heatmap.ipynb # CRV-H


\end{lstlisting} 
\end{minipage}
\end{center}
The evaluated results are embedded within the corresponding \texttt{.ipynb} notebooks and can be inspected directly after execution.
In addition, the generated figures are stored in \texttt{AE\_outputs/CRV/}.
\subsubsection{Evaluation (\textbf{E3})} 
\label{ae:e3}
[OV-E (5 human-minutes, 2 computer-hours) + OV-N (5 human-minutes, 2 computer-hours)]
We provided two scripts to execute the overhead benchmarks for EDP and NDP, respectively. 
Within the \texttt{caribou\_public} directory, execute:

\begin{center}
\begin{minipage}{0.95\linewidth}
\begin{lstlisting}[language=Python, label={lst:example}]
chmod +x AE/OV/ove_run_scripts.sh
./AE/OV/ove_run_scripts.sh

chmod +x AE/OV/ovn_run_scripts.sh
./AE/OV/ovn_run_scripts.sh
\end{lstlisting} 
\end{minipage}
\end{center}
Upon completion of each run, the corresponding computational overhead results are available in \texttt{AE\_outputs/OV/}. 
This directory contains the actual execution time and the maximum memory usage results referenced in Tables~\ref{tab:reulst_table_running_time},\ref{tab:reulst_table_memory},\ref{tab:reulst_table_running_time_ndp},\ref{tab:reulst_table_memory_ndp}.
Note that absolute values are hardware–dependent; therefore, the numbers obtained by the evaluators may differ from those exactly reported in the paper. This variation is expected. Provided the machine offers sufficient compute resources, the qualitative trends and relative comparisons should remain consistent with our results (Claim \textbf{C2}).

\end{document}